\def\ps@headings{%
\def\@oddhead{\mbox{}\scriptsize\rightmark \hfil \thepage}%
\def\@evenhead{\scriptsize\thepage \hfil \leftmark\mbox{}}%
\def\@oddfoot{}%
\def\@evenfoot{}}
\newtheorem{proposition}{\bf Proposition}
\newtheorem{lemma}{{\bf Lemma}}
\newtheorem{assumption}{Assumption}
\newtheorem{remark}{Remark}
\newtheorem{definition}{Definition}
\begin{document}

\IEEEoverridecommandlockouts

\title{Joint Scheduling and Power Allocations for Traffic Offloading via Dual-Connectivity}

\vspace{0.2in}

\author{
\IEEEauthorblockN{Yuan Wu, Yanfei He, Liping Qian, Jianwei Huang, Xuemin (Sherman) Shen}
\thanks{Y. Wu and Y. He are with College of Information Engineering, Zhejiang University of Technology, Hangzhou, China, (email: iewuy@zjut.edu.cn).}
\thanks{L. Qian is with College of Computer Science and Technology, Zhejiang University of Technology, Hangzhou, China, (email: lpqian@zjut.edu.cn).}
\thanks{J. Huang is with the Network Communications and Economics Lab, Department of Information Engineering, The Chinese University of Hong Kong, Hong Kong (e-mail: jwhuang@ie.cuhk.edu.hk).}
\thanks{S. Shen is with the Department of Electrical and Computer Engineering, University of Waterloo, Waterloo, ON N2L 3G1, Canada (e-mail: xshen@bbcr.uwaterloo.ca)}
\vspace{-0.5in}
}

\IEEEoverridecommandlockouts
\maketitle
\begin{abstract}
With the rapid growth of mobile traffic demand, a promising approach to relieve cellular network congestion is to offload users' traffic to small-cell networks. In this paper, we investigate how the mobile users (MUs) can effectively offload traffic by taking advantage of the capability of dual-connectivity, which enables an MU to simultaneously communicate with a macro base station (BS) and a small-cell access point (AP) via two radio-interfaces. Offloading traffic to the AP usually reduces the MUs' mobile data cost, but often at the expense of suffering increased interferences from other MUs at the same AP. We thus formulate an optimization problem that jointly determines each MU's traffic schedule (between the BS and AP) and power control (between two radio-interfaces). The system objective is to minimize all MUs' total cost, while satisfying each MU's transmit-power constraints through proper interference control. In spite of the non-convexity of the problem, we design both a centralized algorithm and a distributed algorithm to solve the joint optimization problem. Numerical results show that the proposed algorithms can achieve the close-to-optimum results comparing with the ones achieved by the LINGO (a commercial optimization software), but with significantly less computational complexity. The results also show that the proposed adaptive offloading can significantly reduce the MUs' cost, i.e., save more than 75\% of the cost without offloading traffic and 65\% of the cost with a fixed offloading.
\end{abstract}

\IEEEpeerreviewmaketitle

\section{Introduction}
\label{section_introduction}
With the rapid growth of smart handheld devices and mobile internet services, traffic demand in cellular networks has been growing tremendously \cite{WhitePaper:Cicso}, which causes frequent congestions and imposes a heavy burden on network operators. A cost effective approach to relieve network congestion is to offload the traffic of mobile users (MUs) to spatially spread small-cell networks, e.g., femtocells and WiFi access points (APs) \cite{Paper:SurveyOffloading}\cite{Paper:SurveyOffloading2}. From the network operators' perspective, offloading traffic can effectively exploit the additional network capacities provided by the multi-tiers small cells and reduce the needs of costly and time-consuming network infrastructure upgrade. From the MUs' perspective, offloading traffic reduces their costs, since small cells usually offer a low price for mobile data than cellular operators. Traffic offloading becomes increasing attractive, as an increasing percentage of mobile devices are equipped with multiple radio-interfaces that facilitate to flexibly simultaneous connections to multiple networks \cite{Papre:Lee,Paper:DTN_arch,Paper:Hossain}.

In particular, a new paradigm of \textit{small-cell dual-connectivity} is gaining momentum in both industry practice \cite{WhitePaper:Qualcomm}\cite{WhitePaper:NSN} and the 3GPP LTE-A standardizing activities \cite{Paper:LTEA}\cite{Paper:Standard}. Through the dual-connectivity, each MU can simultaneously communicate with a macro base station (BS) and a small-cell AP via two different radio-interfaces. This enables the MUs to flexibly schedule their traffic between two networks to  achieve efficient traffic offloading. For instance, an MU can schedule its delay-sensitive small-volume data traffic (e.g., voice-over-IP traffic) to the macro BS, and offload its delay-tolerant large-volume traffic (e.g., file downloading/uploading) to a small-cell AP at the same time.

Nevertheless, different from the centralized resource management and orthogonal channel allocation in cellular networks, a small-cell AP (such as WiFi) often allows multiple MUs to share the same channels in a distributed fashion, which leads to significant mutual interferences among the MUs. It means that if each MU aggressively offloads its traffic to the AP, then each MU needs to spend significant amount of transmit-power to the AP to combat the mutual interferences, which might significantly reduce the benefit from offloading traffic.

Several prior studies have taken such interferences into consideration when designing data offloading algorithms \cite{Paper:Kang}\cite{Paper:Ho}\cite{Paper:EE_Chen}. However, the interference-aware traffic offloading design with dual-connectivity is still an open problem. This problem becomes especially complicated if we consider different transmit-power constraints at each MU's different radio-interfaces \cite{Paper:PowerNI}\cite{Paper:PowerDBM}, which require the MU to carefully allocate its transmit-powers to match the need of the scheduled traffic. To tackle the above challenging problem, we propose a joint optimization of the traffic scheduling and transmit-power allocations with the MUs' dual-connectivity. Our contributions can be summarized as follows.
\begin{itemize}
\item \textit{Novel Joint Optimization Formulation:} We formulate a cost minimization problem, in which each MU jointly determines its traffic scheduling to the AP and BS and the transmit-powers at the two radio-interfaces. The objective is to minimize the total cost of all MUs, while meeting each MU's traffic demand and its transmit-power constraints. To the best of our knowledge, such a formulation targeted for the MUs' traffic offloading with dual-connectivity has not been studied before.

\item \textit{Centralized Algorithm Design:} We first propose a centralized algorithm for solving the joint traffic scheduling and power control problem. In spite of non-convexity of the problem, we perform a series of manipulations to transform the joint optimization into an equivalent SINR-assignment problem (here SINR denotes the signal-to-interference-plus-noise ratio). We then explore the \textit{hidden monotonicity} of the SINR-assignment problem and design a two-layered algorithm to solve it. Based on the obtained SINRs at the AP, we derive the MUs' traffic scheduling and transmit-powers to minimize the total cost.

\item \textit{Distributed Algorithm Design:} We next propose a distributed algorithm to solve the joint optimization problem, by focusing on a practically important case that the channel bandwidth of the AP is no smaller than the BS's allocated bandwidth. In this case, we identify \textit{the hidden concave minimization property} of the SINR-assignment problem and propose a distributed algorithm to compute each MU's SINR at the AP. Based on the obtained SINRs at the AP, we derive the MUs' traffic scheduling and transmit-powers to minimize the total cost in a distributed manner.

\item \textit{Performance Improvement:} Numerical results show that both the proposed algorithms can achieve the close-to-optimum results which are obtained by the LINGO (a commercial optimization software \cite{Software:LINGO}) but with a significant less computational complexity. The numerical results also validate that the proposed traffic offloading can significantly reduce the MUs' cost, namely, saving more than 75\% of the cost without performing any offloading and more than 65\% of the cost with a fixed offloading scheme. 
\end{itemize}

The rest of this paper is organized as follows. Section \ref{section_relatedworks} describes the related studies. Section \ref{section_model} presents the problem formulation. Section \ref{section_transformation} presents a series of transformations that facilitate the following algorithm designs. Sections \ref{section_centralized} and \ref{section_distributed} propose the centralized and distributed algorithms, respectively, to solve the problem. Section \ref{section_numerical} presents the numerical results, and we conclude this study in Sect. \ref{section_conclusion}.

\vspace{0.1in}
\section{Related Works}
\label{section_relatedworks}
Since the seminal studies \cite{Papre:Lee}\cite{Paper:DTN_arch}, there have been many studies that investigated traffic offloading via infrastructure-based small-cell networks\footnote{Besides offloading traffic via infrastructure-based networks, it is also possible to offload traffic through peer-to-peer communications (e.g., the device-to-device communications \cite{Paper:D2D_Lei}). This, however, is not the focus of this study.}. They can be roughly categorized into two groups as follows:
\begin{itemize}
\item \textit{Network-oriented traffic offloading}. The first group of studies mainly focused on optimizing traffic offloading from the networks' perspectives. In \cite{Paper:Ho}, Ho \textit{et al.} considered the inter-cell interference when accommodating the offloaded traffic. Taking into account the coupling effect due to the mutual interference, the authors formulated a utility optimization framework for distributing traffic loads among different macro-cells. In \cite{Paper:EE_Chen}, Chen \textit{et al.} also considered the interference among different small cells when offloading traffic, and proposed a framework that facilitates the macro BS to manage the small cells to minimize the energy consumption. In \cite{Paper:DoubleAuction_Iosifidis}, Iosifidis \textit{et al.} considered the coupled capacities of different APs due to interference, and proposed a double auction mechanism that matches the network operators with the most suitable APs for data offloading. In \cite{Paper:Backhaul_Duan}, considering the suffered interference due to serving macro-cell MUs, Yang \textit{et al.} proposed a refund mechanism for network operator to motivate the small-cell APs to admit macro-cell MUs for traffic offloading.

\item \textit{User-oriented traffic offloading}. The second group of studies focused on optimizing the traffic offloading from the MUs' perspectives, with the key issues including interference management and delay-offloading tradeoff. For instance, in \cite{Paper:Kang}, Kang \textit{et al.} considered the scenario of one BS and one third-party WiFi AP, and investigated how different MUs choose either the BS or the AP for offloading traffic. Considering the mutual interference among different MUs, the authors formulated the networks-selection problem as a binary nonlinear programming problem for maximizing the system-wise reward.
    Studies in \cite{Paper:Delay_Zhuo}\cite{Paper:EconomicGain_Song} proposed several different schemes to motivate the MUs to delay their traffic offloading by leveraging the delay tolerance and better future network conditions. In \cite{Paper:AMUSE}, Im \textit{et al.} proposed an MU-centric cost-aware WiFi offloading system, which considers the MU's throughput-delay tradeoff and cost budget to decide when to offload which type of traffic.
\end{itemize}

In spite of the above studies, only few studies investigated the paradigm of small-cell dual-connectivity (which has been gaining momentum in the industry practice \cite{WhitePaper:Qualcomm}\cite{WhitePaper:NSN} and the 3GPP LTE-A standardizing activities \cite{Paper:LTEA}\cite{Paper:Standard}) for offloading the MUs' traffic. In \cite{Paper:Dual}, Jha \textit{et al.} discussed several technical challenges and potential solution directions regarding the small-cell dual-connectivity in cellular networks. However, to the best of our knowledge, there exists no existing study that investigated the optimal management of the MUs' traffic scheduling and transmit-power allocations for traffic offloading via the dual-connectivity.

We emphasize that although there are several studies investigating how the traffic offloading can benefit the network operators, it is less understood about how much the MUs can benefit from offloading traffic in terms of saving the mobile data cost. Our study sheds light on this benefit and illustrates how to optimize such benefit through both centralized and distributed approaches.

\section{System Model and Problem Formulation}
\label{section_model}

\subsection{System Model}
\begin{figure}[tbph]
  \centering
  \includegraphics[scale=0.7]{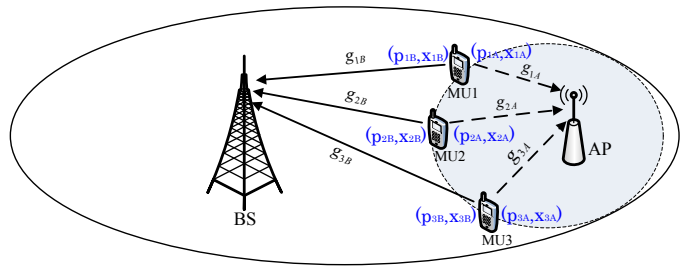}\\
  \vspace{-0.2in}
  \caption{Illustration of the system model.} \label{fig_system}
\end{figure}

Figure \ref{fig_system} illustrates the system model, where a macro BS is serving a set of MUs $\mathcal{I}=\{1,2,...,I\}$ which perform the uplink transmissions\footnote{We focus on traffic offloading in uplink case, where we need to consider the MUs' limited resources. Such an issue makes the uplink resource allocation more challenging than the downlink case. This is also motivated by the rapid growth of user-generated contents (such as user-generated videos on social networks), which has significantly increased the MUs' uplink traffic volume.}. There also exists a small-cell AP that can provide traffic offloading services to the MUs through dual-connectivity. Each MU $i$ has two radio-interfaces, one for sending traffic to the BS and one for offloading traffic to the AP. We use $x_{iA}$ and $x_{iB}$ to denote MU $i$'s transmission rates to the AP and the BS, respectively. We use $p_{iA}$ and $p_{iB}$ to denote MU $i$'s transmit-powers to the AP and the BS, respectively\footnote{In future work, we will extend our model to the case of multiple APs, where an MU selects which AP to offload traffic to.}. In the rest of this paper, the subscripts ``A" and ``B" denote ``AP" and ``BS", respectively. We consider that the macro BS and AP operate on different spectrums to provide service\footnote{The small cell may operate on a separated licensed spectrum (such as the ``separate carrier" scheme for femtocell \cite{Paper:APChannel1}) or on a separate unlicensed spectrum (such as the case of WiFi AP).}.

The small-cell AP allows multiple MUs to share the same spectrum (channel) \cite{Paper:spectrumsharing1}\cite{Paper:SpectrumSharing_Niyato}, which results in mutual interference among the MUs when offloading traffic. To make the discussions more concrete, we adopt the throughput model under the interference channel as \cite{Paper:Kang}\cite{Paper:EE_Chen}\cite{Paper:Backhaul_Duan}, i.e., given the MUs' transmit-powers $\{p_{iA}\}_{i\in\mathcal{I}}$, MU $i$'s transmission rate to the AP is
\begin{eqnarray}
x_{iA}=W \log_2\left(1+\frac{p_{iA}g_{iA}}{\sum_{j\neq i,j\in\mathcal{I}}p_{jA}g_{jA}+n_A}\right), \forall i\in\mathcal{I}, \label{eq_rate_AP}
\end{eqnarray}
where $W$ denotes the AP's channel bandwidth, and $g_{iA}$ denotes the channel gain from MU $i$ to the AP. $n_A=W n_0$ denotes the power of the background noise at the AP, with $n_0$ denoting the power density.

The BS allocates orthogonal sub-channels to different MUs for accommodating the uplink traffic (such as in OFDMA). Given MU $i$'s transmit-power $p_{iB}$, MU $i$'s uplink transmission rate to the BS is
\begin{eqnarray}
x_{iB}=B \log_2\left(1+\frac{p_{iB}g_{iB}}{n_B}\right), \forall i\in\mathcal{I}, \label{eq_rate_BS}
\end{eqnarray}
where $g_{iB}$ is the channel power gain from MU $i$ to the BS, and $B$ is the bandwidth allocated to MU $i$ by the BS (we consider that the BS's channel allocation is given in this study). $n_B=B n_0$ denotes the power of the background noise at the BS.

Each MU $i$ needs to satisfy a traffic demand $R_i^{\text{req}}$ through the transmissions to both BS and AP, i.e.,
\begin{eqnarray}
x_{iA}+x_{iB}\geq R_i^{\text{req}}, \forall i\in\mathcal{I}. \label{eq_trafficdemand}
\end{eqnarray}

Similar to \cite{Paper:Kang}\cite{Paper:Backhaul_Duan}\cite{Paper:Delay_Huang2}, we consider
usage-based pricing schemes by both the AP and the BS. Let $\pi_A$ and $\pi_B$ to denote the unit-prices announced by the AP and BS, respectively. Then, MU $i$'s total transmission cost in one unit time is
\begin{eqnarray}
C_{i}(x_{iA},x_{iB})=\pi_A x_{iA}+\pi_B x_{iB}, \forall i\in\mathcal{I}.\label{eq_MUcost}
\end{eqnarray}

\subsection{Problem Formulation}
We are interested in jointly optimizing all MUs' traffic scheduling $\{x_{iA},x_{iB}\}_{i\in\mathcal{I}}$ and the transmit-powers $\{p_{iA},p_{iB}\}_{i\in\mathcal{I}}$, to minimize the total cost. Here, ``CMP" stand for ``Cost Minimization Problem".
\begin{eqnarray}
\text{(CMP):} && \text{Minimize} \sum_{i\in\mathcal{I}}C_i(x_{iA},x_{iB})=\sum_{i\in\mathcal{I}}\pi_A x_{iA}+\sum_{i\in\mathcal{I}}\pi_B x_{iB} \nonumber\\
&& \text{Subject to: } 0\leq p_{iA}\leq P_{iA}^{\max},\forall i\in\mathcal{I},\label{P1_Con_PiAmax}
\end{eqnarray}
\begin{eqnarray}
&& \text{~~~~~~~~~~~~~~}0\leq p_{iB}\leq P_{iB}^{\max},\forall i\in\mathcal{I},\label{P1_Con_PiBmax}\\
&& \text{~~~~~~~~~~~~~~}p_{iA}+p_{iB}\leq P_i^{\max},\forall i\in\mathcal{I},\label{P1_Con_Pimax}\\
&& \text{~~~~~~~~~~~~~~}\text{Constraints } (\ref{eq_rate_AP}),(\ref{eq_rate_BS}),\text{ and } (\ref{eq_trafficdemand}), \nonumber\\
&& \text{Variables: } (x_{iA},x_{iB}),\forall i\in\mathcal{I} \text{ and } (p_{iA},p_{iB}),\forall i\in\mathcal{I}. \nonumber
\end{eqnarray}
Constraints (\ref{P1_Con_PiAmax})-(\ref{P1_Con_Pimax}) are motivated by the fact that different radio-interfaces can have different restrictions on the power consumptions \cite{Paper:PowerNI}\cite{Paper:PowerDBM}. Constraint (\ref{P1_Con_PiAmax}) ensures that MU $i$'s transmit-power $p_{iA}$ to the AP cannot exceed the upper-bound $P_{iA}^{\max}$. Constraint (\ref{P1_Con_PiBmax}) ensures that MU $i$'s transmit-power $p_{iB}$ to the BS cannot exceed the upper-bound $P_{iB}^{\max}$. Constraint (\ref{P1_Con_Pimax}) ensures that MU $i$'s total power consumption at the two interfaces cannot exceed the upper bound $P_i^{\max}$.

The key challenge to solve Problem (CMP) is due to the intrinsic non-convexity of (\ref{eq_rate_AP}). In the rest of this paper, we will develop efficient algorithms to solve Problem (CMP) to achieve close-to-optimum performance. Before presenting the details, we make the following assumption.
\begin{assumption}
\label{assumption_feasible} Problem (CMP) is feasible.
\end{assumption}

Assumption \ref{assumption_feasible} can be satisfied by imposing an admission control policy that selects an appropriate group of MUs to serve, such that the MUs' traffic demands can be satisfied within their transmission power constraints\footnote{A heuristic approach to perform admission control is that the MU (let us say MU $i$) with $(2^{\frac{R_i^{\text{req}}}{B}}-1)\frac{n_B}{g_{iB}}\leq P_{iB}^{\max}$ is admitted.}. Assumption \ref{assumption_feasible} enables us to focus on evaluating the benefit of the traffic offloading via dual-connectivity and designing algorithms to achieve this benefit.

Before proposing the algorithms to solve Problem (CMP), we will first present a series of problem formulations equivalent to Problem (CMP). Specifically, we use Fig. \ref{Figure_transformation} to show how these problem formulations are related and where they are located in this paper.

\begin{figure}[tbph]
\centering
\includegraphics[scale=0.7]{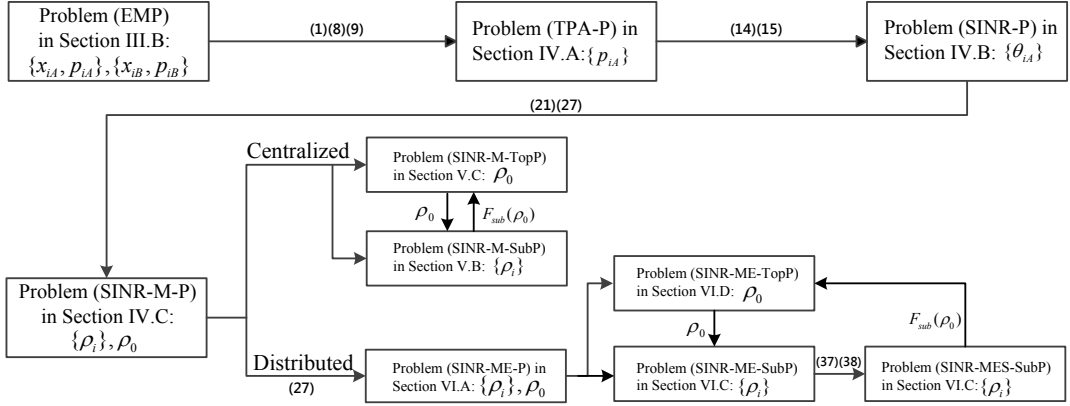}
\caption{Connections among the problem formulations. We also mark out the decision variables of each problem formulation.}
\label{Figure_transformation}
\end{figure}

\section{Equivalent Transformations of Problem (CMP)}
\label{section_transformation}
This section presents a series of problem formulations that facilitate our later algorithm designs.

\subsection{Transforming Problem (CMP) into a Transmit-Power Allocation Problem}
We first identify the following property of Problem (CMP).
\begin{lemma}
\label{lemma_binding}
Constraint (\ref{eq_trafficdemand}) is tight at any optimal solution of Problem (CMP).
\end{lemma}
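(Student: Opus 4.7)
The natural approach is proof by contradiction. Suppose some optimal solution $(\{x_{iA}^\star,x_{iB}^\star,p_{iA}^\star,p_{iB}^\star\}_{i\in\mathcal{I}})$ of Problem (CMP) leaves constraint (\ref{eq_trafficdemand}) with strict slack for some MU $i^\star$, i.e., $x_{i^\star A}^\star + x_{i^\star B}^\star > R_{i^\star}^{\text{req}}$. Since the unit prices $\pi_A,\pi_B$ are strictly positive, the strategy is to exhibit a small feasible perturbation that strictly decreases the aggregate cost $\sum_i(\pi_A x_{iA}+\pi_B x_{iB})$, which would contradict the optimality of the given solution.

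I would then split into two cases according to which interface is actually active for MU $i^\star$. In Case 1, $x_{i^\star B}^\star>0$, and hence by (\ref{eq_rate_BS}) also $p_{i^\star B}^\star>0$. I shave $p_{i^\star B}^\star$ (and, consistently, $x_{i^\star B}^\star$) by a small amount $\epsilon>0$, leaving every other variable untouched. Because the BS assigns orthogonal sub-channels to different MUs, this perturbation is invisible to any $j\neq i^\star$; constraints (\ref{P1_Con_PiAmax})--(\ref{P1_Con_Pimax}), (\ref{eq_rate_AP}) and (\ref{eq_rate_BS}) remain satisfied, and (\ref{eq_trafficdemand}) for $i^\star$ persists for every $\epsilon$ smaller than the original slack. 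The objective strictly drops by $\pi_B\cdot\Delta x_{i^\star B}>0$, yielding the required contradiction.

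In Case 2, $x_{i^\star B}^\star=0$, so $x_{i^\star A}^\star>R_{i^\star}^{\text{req}}\geq 0$ and $p_{i^\star A}^\star>0$. I shave $p_{i^\star A}^\star$ by a small $\epsilon>0$, which directly lowers $x_{i^\star A}$ through (\ref{eq_rate_AP}). To isolate the savings from the interference ripple, I simultaneously adjust each $p_{jA}^\star$ ($j\neq i^\star$) downward so that $x_{jA}$ remains at its original value $x_{jA}^\star$. A monotone continuity argument guarantees such adjustments exist and are strictly smaller than the original $p_{jA}^\star$: reducing the interference suffered by MU $j$ strictly lowers the power required to achieve the same SINR, so (\ref{P1_Con_PiAmax})--(\ref{P1_Con_Pimax}) continue to hold. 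With every other rate unchanged, the objective drops by $\pi_A\cdot|\Delta x_{i^\star A}|>0$.

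The main obstacle is Case 2, where the mutual interference in (\ref{eq_rate_AP}) couples all MUs' rates to the chosen perturbation, so one must verify carefully that the compensating power reductions for the MUs $j\neq i^\star$ genuinely exist, remain non-negative, and do not violate any of the power caps. If, alternatively, one reads (\ref{eq_rate_AP})--(\ref{eq_rate_BS}) as achievability upper bounds ($\leq$) rather than equalities, Case 2 collapses trivially: one simply decreases $x_{i^\star A}$ itself without touching any power, and the lemma becomes a one-line consequence of $\pi_A,\pi_B>0$ together with the monotonicity of the objective in each $x_{iA},x_{iB}$.
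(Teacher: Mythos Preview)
Your argument is correct and shares the paper's contradiction strategy; in fact your Case~1 is precisely the paper's entire proof. The paper simply says: if (\ref{eq_trafficdemand}) is slack for some MU~$i$, set $x_{iB}=R_i^{\text{req}}-x_{iA}$, which lowers the cost without violating any constraint---no case split, no discussion of powers. Your Case~2 ($x_{i^\star B}^\star=0$ with $x_{i^\star A}^\star>R_{i^\star}^{\text{req}}$) addresses an edge case the paper glosses over, since with (\ref{eq_rate_AP})--(\ref{eq_rate_BS}) read as equalities one cannot push $x_{iB}$ below zero. Your remedy---shave $p_{i^\star A}$ and simultaneously lower the other $p_{jA}$ to freeze each $x_{jA}$---is valid; the ``monotone continuity'' you invoke is exactly the SINR-to-power monotonicity that the paper later makes explicit in Proposition~\ref{proposition_power_theta}, so the forward reference is harmless. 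Your closing remark that Case~2 disappears if (\ref{eq_rate_AP})--(\ref{eq_rate_BS}) are read as achievability inequalities is also on point and is presumably how the paper's one-line proof is meant to be read.
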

\begin{proof}
We prove this by contradiction. Suppose that (\ref{eq_trafficdemand}) is not tight for some MU $i$ at an optimal solution. Then, we can reduce $x_{iB}$ by setting $x_{iB}=R_{i}^{\text{req}}-x_{iA}$, which will reduce the total cost (i.e., the objective function) without violating any constraint. This contradicts the fact that it is an optimal solution, and hence completes the proof.
\end{proof}

By using Lemma \ref{lemma_binding} and eq. (\ref{eq_rate_BS}), we express $x_{iB}$ and $p_{iB}$ as functions of $p_{iA}$ for each MU $i$ as follows:
\begin{eqnarray}
x_{iB} &=& R_{i}^{\text{req}} -W \log_{2} \big(1+\frac{p_{iA} g_{iA}}{\sum_{j\neq i,j\in\mathcal{I}}p_{jA} g_{jA}+n_A}\big),\forall i\in\mathcal{I}, \label{result_xiB} \\
p_{iB} &=& \frac{n_B}{g_{iB}}2^{\frac{R_{i}^{\text{req}}}{B}}
\frac{1}{(1+\frac{p_{iA} g_{iA}}{\sum_{j\neq i,j\in\mathcal{I}}p_{jA} g_{jA}+n_A})^{\frac{W}{B}}}-\frac{n_B}{g_{iB}},\forall i\in\mathcal{I}. \label{result_piB}
\end{eqnarray}

Similar to \cite{Paper:Backhaul_Duan}\cite{Paper:Economic_Huang}, we consider the practical scenario that the price of the AP is lower than that of the BS (i.e., $\pi_A<\pi_B$), which motivates the MUs to offload traffic to the AP. By using (\ref{eq_rate_AP}), (\ref{result_xiB}), and (\ref{result_piB}), we transform Problem (CMP) into an equivalent Transmit-Power Allocation Problem (TPA-P):
\begin{eqnarray}
\text{(TPA-P): } &&\text{Maximize} \sum_{i\in\mathcal{I}} (\pi_B-\pi_A) W \log_{2} \big(1+\frac{p_{iA} g_{iA}}{\sum_{j\neq i,j\in\mathcal{I}}p_{jA}g_{jA}+n_A}\big) \nonumber \\
&& \text{Subject to: }W \log_{2} \big(1+\frac{p_{iA} g_{iA}}{\sum_{j\neq i,j\in\mathcal{I}}p_{jA}g_{jA}+n_A}\big) \leq R_{i}^{\text{req}},\forall i\in\mathcal{I}, \label{constraint_P3_rate} \\
&& \text{~~~~~~~~~~~~~~}p_{iA}\leq  P_{iA}^{\max},\forall i\in\mathcal{I},\label{constraint_P3_PiAmax}\\
&& \text{~~~~~~~~~~~~~~}\frac{n_B}{g_{iB}}2^{\frac{R_{i}^{\text{req}}}{B}}
\frac{1}{(1+\frac{p_{iA}g_{iA}}{\sum_{j\neq i,j\in\mathcal{I}}p_{jA}g_{jA}+n_A})^{\frac{W}{B}}} \leq
P_{iB}^{\max} + \frac{n_B}{g_{iB}}, \forall i\in\mathcal{I}, \label{constraint_P3_PiBmax} \\
&& \text{~~~~~~~~~~~~~~} \frac{n_B}{g_{iB}}2^{\frac{R_{i}^{\text{req}}}{B}}
\frac{1}{(1+\frac{p_{iA}g_{iA}}{\sum_{j\neq i,j\in\mathcal{I}}p_{jA}g_{jA}+n_A})^{\frac{W}{B}}} + p_{iA}\leq P_{i}^{\max} + \frac{n_B}{g_{iB}}, \forall i\in\mathcal{I}, \label{constraint_P3_Pimax} \\
&&  \text{Variables: } p_{iA}, \forall i\in\mathcal{I}. \nonumber
\end{eqnarray}
Problem (TPA-P) only involves $\{p_{iA}\}_{i\in\mathcal{I}}$ as variables. Here, (\ref{constraint_P3_rate}) comes from Lemma \ref{lemma_binding}, i.e., each MU $i$'s $x_{iA}$ cannot exceed $R_{i}^{\text{req}}$. Constraints (\ref{constraint_P3_PiAmax}), (\ref{constraint_P3_PiBmax}), and (\ref{constraint_P3_Pimax}) come from (\ref{P1_Con_PiAmax}), (\ref{P1_Con_PiBmax}), and (\ref{P1_Con_Pimax}), respectively.

Problem (TPA-P) indicates a tradeoff in offloading traffic as follows. To minimize the MUs' total cost, all MUs should offload their traffic to the AP as much as possible. However, aggressive traffic offloading causes a heavy interference at the AP, which increases the MUs' power consumptions (but satisfying (\ref{constraint_P3_PiAmax}), (\ref{constraint_P3_PiBmax}), and (\ref{constraint_P3_Pimax})). Notice that Problem (TPA-P) is more complicated than the well-studied power control problems over interference channels (e.g. \cite{Paper:TransmitPower_Huang}\cite{Paper:TransmitPower_Mung}), because it takes into account each MU's transmit-powers at two different radio-interfaces. This consequently yields the non-convex constraint (\ref{constraint_P3_Pimax}) which couples each MU $i$'s $p_{iA}$ and $p_{iB}$ together. We also recall that all MUs' $\{p_{iA}\}_{i\in\mathcal{I}}$ are also coupled due to the interference at the AP. Hence, Problem (TPA-P) is very difficult to solve.

\subsection{Transforming Problem (TPA-P) into an SINR-Assignment Problem}
To solve Problem (TPA-P), we need to make some further equivalent transformations. We use $\theta_i$ to denote MU's achieved SINR at the AP as follows:
\begin{eqnarray}
\theta_i=\frac{p_{iA}g_{iA}}{\sum_{j\neq i,j\in\mathcal{I}}p_{jA}g_{jA}+n_A},\forall i\in\mathcal{I}. \label{eq_sinr}
\end{eqnarray}
Based on (\ref{eq_sinr}), we have the following result that connects  $\{\theta_i\}_{i\in\mathcal{I}}$ with $\{p_{iA}\}_{i\in\mathcal{I}}$.

\begin{proposition}
\label{proposition_power_theta}
Given any profile of transmit-powers $\{p_{iA}\}_{i\in\mathcal{I}}$ which is feasible for Problem (TPA-P), the corresponding profile of $\{\theta_i\}_{i\in\mathcal{I}}$ given by (\ref{eq_sinr}) ensures that the following result holds:
\begin{eqnarray}
p_{iA}=\frac{n_A}{g_{iA}}\frac{\theta_i}{1+\theta_i} \frac{1}{1-\sum_{i\in\mathcal{I}}\frac{\theta_i}{1+\theta_i}},\forall i\in\mathcal{I},
\label{eq_piA_thetaiA}
\end{eqnarray}
and we always have $\sum_{i\in\mathcal{I}}\frac{\theta_i}{1+\theta_i}<1$.
\end{proposition}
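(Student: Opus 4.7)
The plan is to reduce the proposition to a single linear identity satisfied by the aggregate received power at the AP, and then read off both conclusions from that identity.

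First I would introduce the shorthand $S := n_A + \sum_{j\in\mathcal{I}} p_{jA}g_{jA}$ for the total received power (signal plus noise) at the AP. Rewriting the denominator in the definition (\ref{eq_sinr}) as $S - p_{iA}g_{iA}$, the SINR equation $\theta_i = \frac{p_{iA}g_{iA}}{S - p_{iA}g_{iA}}$ can be solved algebraically for $p_{iA}g_{iA}$, giving the clean identity
\begin{equation*}
\frac{p_{iA}g_{iA}}{S} \;=\; \frac{\theta_i}{1+\theta_i}, \qquad \forall i\in\mathcal{I}.
\end{equation*}
This is the key step, and essentially the only nontrivial manipulation in the argument.

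Next I would sum this identity over $i\in\mathcal{I}$ and use $\sum_{j\in\mathcal{I}} p_{jA}g_{jA} = S - n_A$ to obtain
\begin{equation*}
1 - \frac{n_A}{S} \;=\; \sum_{i\in\mathcal{I}} \frac{\theta_i}{1+\theta_i}.
\end{equation*}
Since the $p_{iA}$'s are feasible for Problem (TPA-P), we have $0 < n_A < \infty$ and $0 < S < \infty$, so the left-hand side is strictly less than $1$. This immediately yields $\sum_{i\in\mathcal{I}} \frac{\theta_i}{1+\theta_i} < 1$, which is the second assertion of the proposition. Rearranging the same identity gives the closed-form expression $S = \frac{n_A}{1 - \sum_{i\in\mathcal{I}}\frac{\theta_i}{1+\theta_i}}$ for the total received power.

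Finally I would substitute this expression for $S$ back into the per-user identity $p_{iA}g_{iA} = \frac{\theta_i}{1+\theta_i}\,S$ and divide by $g_{iA}$, which yields exactly (\ref{eq_piA_thetaiA}). The whole argument is algebraic with no analytic obstacles; the only thing worth being careful about is invoking feasibility to guarantee $S$ is finite (so that the denominator $1 - \sum_i \frac{\theta_i}{1+\theta_i}$ is strictly positive, not merely nonnegative), which is what makes the strict inequality $<1$ and the division in (\ref{eq_piA_thetaiA}) simultaneously legitimate.
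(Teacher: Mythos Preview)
Your proof is correct and follows essentially the same route as the paper: introduce the aggregate received power (the paper calls it $T$, you call it $S$), rewrite each SINR equation as $p_{iA}g_{iA} = S\,\frac{\theta_i}{1+\theta_i}$, sum over $i$ to solve for $S$, and substitute back. Your derivation of the strict inequality directly from $1 - n_A/S < 1$ is in fact a bit cleaner than the paper's case analysis, but the core argument is identical.
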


\begin{proof}
Please refer to Appendix I for the details.
\end{proof}

Based on Proposition \ref{proposition_power_theta}, we can use $\{\theta_i\}_{i\in\mathcal{I}}$ to substitute $\{p_{iA}\}_{i\in\mathcal{I}}$ and transform Problem (TPA-P) into an equivalent SINR-assignment problem as follows:
\begin{eqnarray}
\text{(SINR-P):} && \text{Maximize} \sum_{i\in\mathcal{I}}(\pi_B-\pi_A)W \log_2\left(1+\theta_i\right) \nonumber\\
&& \text{Subject to: } 0\leq \theta_i\leq 2^{\frac{R_i^{\text{req}}}{W}}-1,\forall i\in\mathcal{I},\label{P4_Con_throughput}\\
&& \text{~~~~~~~~~~~~~} \frac{n_A}{g_{iA}}\frac{\theta_i}{1+\theta_i}\frac{1}{1-\sum_{i\in\mathcal{I}}\frac{\theta_i}{1+\theta_i}}\leq P_{iA}^{\max},\forall i\in\mathcal{I}, \label{P4_Con_PiAmax}\\
&& \text{~~~~~~~~~~~~~} \frac{n_B}{g_{iB}}2^{\frac{R_i^{\text{req}}}{B}}\frac{1}{(1+\theta_i)^{\frac{W}{B}}} \leq P_{iB}^{\max} + \frac{n_B}{g_{iB}},\forall i\in\mathcal{I}, \label{P4_Con_PiBmax}\\
&& \text{~~~~~~~~~~~~~} \frac{n_A}{g_{iA}}\frac{\theta_i}{1+\theta_i}\frac{1}{1-\sum_{i\in\mathcal{I}}\frac{\theta_i}{1+\theta_i}}
+\frac{n_B}{g_{iB}}2^{\frac{R_i^{\text{req}}}{B}}\frac{1}{(1+\theta_i)^{\frac{W}{B}}}\leq P_i^{\max} +\frac{n_B}{g_{iB}},\forall i\in\mathcal{I}, \label{P4_Con_Pimax}\\
&& \text{~~~~~~~~~~~~~} \sum_{i\in\mathcal{I}}\frac{\theta_i}{\theta_i+1} < 1, \label{P4_Con_SumTheta} \\
&& \text{Variables: } \theta_{i}, \forall i\in\mathcal{I}. \nonumber
\end{eqnarray}
Notice that we introduce constraint (\ref{P4_Con_SumTheta}) in Problem (SINR-P) for the convenience of later discussions, and Proposition \ref{proposition_power_theta} shows that introducing such a constraint does not reduce the feasible set of Problem (TPA-P). Once we solve Problem (SINR-P), then the optimal solution $\{p_{iA}^\ast\}_{i\in\mathcal{I}}$ of Problem (TPA-P) can be computed based on eq. (\ref{eq_piA_thetaiA}).

In the special case where all MUs' traffic demands are small enough (see Fig. \ref{Figure_Offloading_Ratio} in Sec. \ref{section_numerical}), it is optimal for all MUs to offload their entire traffic to the AP for minimizing the cost. The particular structure of Problem (SINR-P) enables us to derive a sufficient condition for this special case to happen.
\begin{proposition}
\label{proposition_special}
\textit{\underline{(Complete-Offloading Situation)}} The optimal solution of Problem (SINR-P) is $\theta_i^\ast=2^{\frac{R_i^{\text{req}}}{W}}-1,\forall i\in\mathcal{I}$ when both of the following two conditions (i.e., (C1) and (C2)) hold:
\begin{eqnarray}
\text{(C1): }\sum_{i\in\mathcal{I}}\frac{1}{2^{\frac{R_i^{\text{req}}}{W}}}>I-1, \text{and }
\text{(C2): }\frac{n_A}{g_{iA}}\left(1-\frac{1}{2^{\frac{R_i^{\text{req}}}{W}}}\right)
\frac{1}{\sum_{i\in\mathcal{I}}\frac{1}{2^{\frac{R_i^{\text{req}}}{W}}}-I+1} \leq \min\{P_i^{\max},P_{iA}^{\max}\}
,\forall i\in\mathcal{I}.\nonumber
\end{eqnarray}
\end{proposition}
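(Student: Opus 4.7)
The plan is to exploit the fact that the objective of Problem (SINR-P), namely $\sum_{i\in\mathcal{I}}(\pi_B-\pi_A)W\log_2(1+\theta_i)$, is strictly increasing in each $\theta_i$ (since $\pi_B>\pi_A$), and that constraint (\ref{P4_Con_throughput}) imposes a separable upper bound $\bar\theta_i:=2^{R_i^{\text{req}}/W}-1$ on each variable. Hence, if I can show that the profile $\theta_i=\bar\theta_i$ for all $i\in\mathcal{I}$ satisfies all remaining constraints (\ref{P4_Con_PiAmax})--(\ref{P4_Con_SumTheta}), optimality is immediate: no feasible profile can yield a strictly larger objective than the one that simultaneously saturates every per-user upper bound. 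Thus the entire argument reduces to a feasibility check at the candidate point $\{\bar\theta_i\}_{i\in\mathcal{I}}$ under hypotheses (C1) and (C2).

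To verify feasibility, I would first observe the simplifying identities $\bar\theta_i/(1+\bar\theta_i)=1-2^{-R_i^{\text{req}}/W}$ and $(1+\bar\theta_i)^{W/B}=2^{R_i^{\text{req}}/B}$. Substituting the first identity into (\ref{P4_Con_SumTheta}) gives
\begin{equation*}
\sum_{i\in\mathcal{I}}\frac{\bar\theta_i}{1+\bar\theta_i}=I-\sum_{i\in\mathcal{I}}\frac{1}{2^{R_i^{\text{req}}/W}}<1,
\end{equation*}
which is exactly condition (C1). Substituting the second identity into constraint (\ref{P4_Con_PiBmax}) makes its left-hand side collapse to $n_B/g_{iB}$, so the constraint becomes $0\le P_{iB}^{\max}$, which holds trivially.

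Next, I would compute the induced $p_{iA}$ using equation (\ref{eq_piA_thetaiA}) from Proposition~\ref{proposition_power_theta}, obtaining
\begin{equation*}
p_{iA}=\frac{n_A}{g_{iA}}\Bigl(1-\frac{1}{2^{R_i^{\text{req}}/W}}\Bigr)\frac{1}{\sum_{j\in\mathcal{I}}2^{-R_j^{\text{req}}/W}-I+1},
\end{equation*}
which by (C1) has a positive denominator and is well-defined. Constraint (\ref{P4_Con_PiAmax}) then becomes $p_{iA}\le P_{iA}^{\max}$, which is precisely the $P_{iA}^{\max}$-component of (C2). For constraint (\ref{P4_Con_Pimax}), using again that $(1+\bar\theta_i)^{W/B}=2^{R_i^{\text{req}}/B}$, the second summand on the left-hand side equals $n_B/g_{iB}$, so the inequality reduces to $p_{iA}\le P_i^{\max}$, which is the $P_i^{\max}$-component of (C2). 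Hence (C1) and (C2) together guarantee feasibility of $\{\bar\theta_i\}_{i\in\mathcal{I}}$, and combined with the monotonicity argument above, this profile is optimal.

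I do not anticipate any serious obstacle: the entire proof is essentially an algebraic verification made clean by the two identities noted above. The only subtle point worth emphasizing is why (C1) alone suffices for constraint (\ref{P4_Con_SumTheta}) (since it is stated as strict inequality in Problem (SINR-P)), which follows directly because (C1) is itself a strict inequality. The separable upper-bound structure of (\ref{P4_Con_throughput}) combined with strict monotonicity of the objective is what makes the "corner" assignment $\theta_i=\bar\theta_i$ provably optimal once feasibility is established; no KKT analysis or Lagrangian machinery is required.
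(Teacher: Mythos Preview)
Your proposal is correct and follows essentially the same approach as the paper: set each $\theta_i$ at its upper bound from constraint~(\ref{P4_Con_throughput}), then verify that (C1) ensures (\ref{P4_Con_SumTheta}), that (C2) ensures (\ref{P4_Con_PiAmax}) and (\ref{P4_Con_Pimax}), and that (\ref{P4_Con_PiBmax}) holds automatically. Your write-up is in fact more explicit than the paper's, since you spell out the monotonicity argument for optimality and the algebraic identities that make the feasibility check transparent.
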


\begin{proof}
We first derive $\theta_i^\ast=2^{\frac{R_i^{\text{req}}}{W}}-1$ by setting constraint (\ref{P4_Con_throughput}) tight, which corresponds to that each MU offloads its entire traffic to the AP. We then derive Conditions (C1) and (C2) by ensuring that $\{\theta_i^\ast\}_{i\in\mathcal{I}}$ is compatible with all constraints. We obtain Condition (C1) by substituting $\theta_i^\ast=2^{\frac{R_i^{\text{req}}}{W}}-1$ into (\ref{P4_Con_SumTheta}), and obtain Condition (C2) by substituting $\theta_i^\ast=2^{\frac{R_i^{\text{req}}}{W}}-1$ into (\ref{P4_Con_PiAmax}) and (\ref{P4_Con_Pimax}).
Notice that (\ref{P4_Con_PiBmax}) is always satisfied, since no traffic is sent to the BS when each MU $i$'s $\theta_i^\ast=2^{\frac{R_i^{\text{req}}}{W}}-1$. Therefore, Conditions (C1) and (C2) together guarantee that the optimal solution of Problem (SINR-P) is $\theta_i^\ast=2^{\frac{R_i^{\text{req}}}{W}}-1,\forall i\in\mathcal{I}$.
\end{proof}

\vspace{0.1in}
\subsection{Equivalent Form of the SINR-assignment Problem with $\{\rho_i\}_{i\in\mathcal{I}}$ and $\rho_0$}
\label{subsec_SINRProblem}
Problem (SINR-P), however, is still difficult to solve due its non-convexity. To solve it efficiently, we further introduce a \textit{one-to-one mapping} as follows
\begin{eqnarray}
\rho_i=\frac{\theta_i}{1+\theta_i},    \Longleftrightarrow \theta_i=\frac{\rho_i}{1-\rho_i}, \forall i\in\mathcal{I}.  \label{eq_M_theta}
\end{eqnarray}
Since $\sum_{i\in\mathcal{I}}\rho_i<1$ always holds (i.e., Proposition \ref{proposition_power_theta}), we introduce another positive variable $\rho_0$ such that $\rho_0+\sum_{i\in\mathcal{I}}\rho_i=1$. Such a condition will help simplify the complicated denominators in (\ref{P4_Con_PiAmax}) and (\ref{P4_Con_Pimax}).

By using $\{\rho_i\}_{i\in\mathcal{I}}$ and $\rho_0$, we equivalently transform Problem (SINR-P) into Problem (SINR-M-P) as follows (the letter ``M" means ``Medium", and Appendix II presents the detailed procedures).
\begin{eqnarray}
\text{(SINR-M-P): } && \text{Minimize} \sum_{i\in\mathcal{I}}(\pi_B-\pi_A)W \log_2\left(\rho_0+\sum_{j\neq i,j\in\mathcal{I}}{\rho_j}\right) \label{P5_Objective}\\
&& \text{Subject to: } 0\leq \rho_i\leq 1-\frac{1}{2^{\frac{R_i^{\text{req}}}{W}}}, \forall i\in\mathcal{I},\label{P5_Con_throughput}\\
&& \text{~~~~~~~~~~~~~}  \frac{n_A}{g_{iA}}\frac{\rho_i}{\rho_0}\leq P_{iA}^{\max},\forall i\in\mathcal{I},\label{P5_Con_PiAmax}\\
&& \text{~~~~~~~~~~~~~}  \rho_0+\sum_{j\neq i,j\in\mathcal{I}}\rho_j \leq \left(\frac{P_{iB}^{\max}+\frac{n_B}{g_{iB}}}{\frac{n_B}{g_{iB}} 2^{\frac{R_i^{\text{req}}}{B}}}\right)^{\frac{B}{W}},\forall i\in\mathcal{I},\label{P5_Con_PiBmax}\\
&& \text{~~~~~~~~~~~~~}  \frac{n_B}{g_{iB}} 2^{\frac{R_i^{\text{req}}}{B}} \left(\rho_0+\sum_{j\neq i,j\in\mathcal{I}}{\rho_j}\right)^{\frac{W}{B}}+
\frac{n_A}{g_{iA}}\frac{\rho_i}{\rho_0}\leq P_i^{\max}+\frac{n_B}{g_{iB}},\forall i\in\mathcal{I},\label{P5_Con_Pimax}\\
&& \text{~~~~~~~~~~~~~}  \rho_0+\sum_{i\in\mathcal{I}} \rho_i= 1,\label{P5_Con_M} \\
&& \text{Variables: } \rho_0\text{ and } \rho_{i},\forall i\in\mathcal{I}. \nonumber
\end{eqnarray}
Notice that constraints (\ref{P5_Con_throughput})-(\ref{P5_Con_M}) here correspond to (\ref{P4_Con_throughput})-(\ref{P4_Con_SumTheta}) in Problem (SINR-P), respectively.

\begin{proposition}
\label{proposition_equivalent_M0} Let ($\rho_0^\ast, \{\rho_i^\ast\}_{i\in\mathcal{I}}$) denote an optimal solution of Problem (SINR-M-P). Then, $\theta_i^\ast=\frac{\rho_i^\ast}{1-\rho_i^\ast},\forall i\in\mathcal{I}$ corresponds to an optimal solution of Problem (SINR-P)\footnote{It is worth emphasizing that Proposition \ref{proposition_equivalent_M0} does not hold if Problem (CMP) is infeasible (i.e., Assumption \ref{assumption_feasible} fails to hold).}.
\end{proposition}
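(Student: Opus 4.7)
The plan is to establish Proposition \ref{proposition_equivalent_M0} by showing that the change of variables $\rho_i = \theta_i/(1+\theta_i)$ (equivalently $\theta_i = \rho_i/(1-\rho_i)$), together with the auxiliary variable $\rho_0 = 1 - \sum_{i\in\mathcal{I}}\rho_i$, induces a bijection between the feasible sets of Problem (SINR-P) and Problem (SINR-M-P) under which the two objective functions differ only by a sign. Optimality then transfers automatically.

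First I would record the two algebraic identities that do most of the work:
\begin{equation}
1+\theta_i = \frac{1}{1-\rho_i} = \frac{1}{\rho_0 + \sum_{j\neq i,j\in\mathcal{I}}\rho_j}, \qquad 1-\sum_{i\in\mathcal{I}}\frac{\theta_i}{1+\theta_i} = 1-\sum_{i\in\mathcal{I}}\rho_i = \rho_0.
\end{equation}
Using these, I would verify the constraints one by one. Constraint (\ref{P4_Con_throughput}) is equivalent to (\ref{P5_Con_throughput}) by direct substitution. The interference-coupled AP power bound (\ref{P4_Con_PiAmax}) collapses to $\frac{n_A}{g_{iA}}\frac{\rho_i}{\rho_0}\leq P_{iA}^{\max}$, which is (\ref{P5_Con_PiAmax}). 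Substituting $(1+\theta_i)^{-W/B} = (\rho_0+\sum_{j\neq i,j\in\mathcal{I}}\rho_j)^{W/B}$ and rearranging turns (\ref{P4_Con_PiBmax}) into (\ref{P5_Con_PiBmax}), and combining it with the AP power expression turns (\ref{P4_Con_Pimax}) into (\ref{P5_Con_Pimax}). Finally, the strict inequality (\ref{P4_Con_SumTheta}) is exactly $\rho_0>0$, which together with (\ref{P5_Con_M}) captures the role of $\rho_0$ in (SINR-M-P).

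Next I would compare the objectives. Because $\log_2(1+\theta_i) = -\log_2(1-\rho_i) = -\log_2(\rho_0 + \sum_{j\neq i,j\in\mathcal{I}}\rho_j)$, summing over $i$ gives
\begin{equation}
\sum_{i\in\mathcal{I}}(\pi_B-\pi_A)W\log_2(1+\theta_i) = -\sum_{i\in\mathcal{I}}(\pi_B-\pi_A)W\log_2\Big(\rho_0 + \sum_{j\neq i,j\in\mathcal{I}}\rho_j\Big).
\end{equation}
Since $\pi_B>\pi_A$, maximizing the left-hand side is equivalent to minimizing the right-hand side. Thus any optimizer $(\rho_0^\ast,\{\rho_i^\ast\}_{i\in\mathcal{I}})$ of (\ref{P5_Objective}) yields, through $\theta_i^\ast = \rho_i^\ast/(1-\rho_i^\ast)$, a feasible point of (SINR-P) that attains its maximum, which is exactly the statement of Proposition \ref{proposition_equivalent_M0}.

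The main obstacle is bookkeeping rather than mathematics. I need to be careful that $\rho_0>0$ throughout the feasible region of (SINR-M-P) (otherwise the term $\rho_i/\rho_0$ in constraints (\ref{P5_Con_PiAmax})--(\ref{P5_Con_Pimax}) is ill-defined and the inverse map to $\theta_i$ breaks down), and to invoke Assumption \ref{assumption_feasible} so that both feasible sets are nonempty and the bijection meaningfully transfers optimizers. Once this is handled, the equivalence of the constraints and objectives reduces to the algebraic identities displayed above.
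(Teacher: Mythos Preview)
Your proposal is correct and follows essentially the same approach as the paper: both arguments establish a one-to-one correspondence between the feasible sets of (SINR-P) and (SINR-M-P) via the substitution $\rho_i=\theta_i/(1+\theta_i)$ together with $\rho_0=1-\sum_{i\in\mathcal{I}}\rho_i$, verify the constraint-by-constraint equivalence, and then note that the objectives agree up to sign so that maximizing one is minimizing the other. The paper routes this through an intermediate formulation (P1) in the $\rho_i$ variables before introducing $\rho_0$, whereas you do the bookkeeping in a single pass, but the substance is identical.
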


\begin{proof}
The key idea is to show that the feasible region of Problem (SINR-P) and that of (SINR-M-P) form a one-to-one mapping, if Problem (SINR-P) is feasible. Please see Appendix II for the details.
\end{proof}

Proposition \ref{proposition_equivalent_M0} allows us to solve Problem (SINR-P) by solving Problem (SINR-M-P). In particular, Problem (SINR-M-P) can be solved through a \textit{two-layered structure}, i.e., we first solve the subproblem in which the value of $\rho_0$ is given, and then we find the best $\rho_0^\ast\in(0,1]$ to minimize the objective function.

\section{Centralized Algorithm to Solve Problem (SINR-M-P)}
\label{section_centralized}
We propose a centralized algorithm to solve Problem (SINR-M-P). Sec. \ref{subsection_layered} shows the two-layered structure. With the layered structure, we design an algorithm to solve a subproblem of Problem (SINR-M-P) in Sec. \ref{subsection_subproblem}, based on which we design an algorithm to solve the top problem in Sec. \ref{subsection_whole}.

\subsection{Two-Layered Structure of Problem (SINR-M-P)}
\label{subsection_layered}
Problem (SINR-M-P) is still a non-convex optimization. Nevertheless, solving Problem (SINR-M-P) can be vertically separated into two steps, namely, \textit{solving a series of subproblems in which the value of $\rho_0$ is given in advance, and then solving a top-problem that finds the best $\rho_0^\ast\in(0,1]$ for minimizing the objective function}. The details are as follows.

\subsubsection{\underline{(Subproblem under a given $\rho_0$)}} We consider a subproblem under a given $\rho_0\in(0,1]$ as follows:
\begin{eqnarray}
\text{(SINR-M-SubP): }  F_{\text{sub}}(\rho_0)=&& \text{Minimize} \sum_{i\in\mathcal{I}}(\pi_B-\pi_A)W \log_2\left(\rho_0+\sum_{j\in\mathcal{I},j\neq i}{\rho_j}\right) \label{eq_obj_SubP} \\
&&  \text{ Subject to: } \{\rho_i\}_{i\in\mathcal{I}} \in \mathcal{G}_{(\rho_0)} \cap \mathcal{H}_{(\rho_0)}, \label{constraint_SINR_M_SubP} \\
&&  \text{ Variables: } \rho_i,\forall i\in\mathcal{I}. \nonumber
\end{eqnarray}
Constraint (\ref{constraint_SINR_M_SubP}) means that the profile $\{\rho_i\}_{i\in\mathcal{I}}$ for all MUs should belong to the intersection of sets $\mathcal{G}_{(\rho_0)}$ and $\mathcal{H}_{(\rho_0)}$. Here, set $\mathcal{G}_{(\rho_0)}$ can be characterized by (\ref{P5_Con_throughput}), (\ref{P5_Con_PiAmax}), (\ref{P5_Con_PiBmax}), and (\ref{P5_Con_Pimax}) under a given $\rho_0$ as follows:
\begin{eqnarray}
&& \mathcal{G}_{(\rho_0)}=\big\{\{\rho_i\}_{i\in\mathcal{I}}| 0\leq \rho_i\leq 1-\frac{1}{2^{\frac{R_i^{\text{req}}}{W}}}, \forall i\in\mathcal{I};\text{ }\frac{n_A}{g_{iA}}\frac{\rho_i}{\rho_0}\leq P_{iA}^{\max},\forall i\in\mathcal{I};
\text{ } \nonumber \\
&& \text{~~~~~~~~~}  \sum_{j\neq i,j\in\mathcal{I}}\rho_j \leq \left(\frac{P_{iB}^{\max}+\frac{n_B}{g_{iB}}}{\frac{n_B}{g_{iB}} 2^{\frac{R_i^{\text{req}}}{B}}}\right)^{\frac{B}{W}}-\rho_0,\forall i\in\mathcal{I}; \text{ } \nonumber \\
&& \text{~~~~~~~~~}  \frac{n_B}{g_{iB}} 2^{\frac{R_i^{\text{req}}}{B}} \left(\rho_0+\sum_{j\neq i,j\in\mathcal{I}}{\rho_j}\right)^{\frac{W}{B}}+
\frac{n_A}{g_{iA}}\frac{\rho_i}{\rho_0}\leq P_i^{\max}+\frac{n_B}{g_{iB}},\forall i\in\mathcal{I}\big\}. \label{eq_normalset}
\end{eqnarray}
Meanwhile, set $\mathcal{H}_{(\rho_0)}$ is characterized by constraint (\ref{P5_Con_M}) under a given $\rho_0$ as follows:
\begin{eqnarray}
\mathcal{H}_{(\rho_0)}=\big\{\{\rho_i\}_{i\in\mathcal{I}}| \sum_{i\in\mathcal{I}} \rho_i\geq 1 - \rho_0 \big\}. \label{eq_reversednormalset}
\end{eqnarray}
Note that we change the equality in (\ref{P5_Con_M}) into the inequality in $\mathcal{H}_{(\rho_0)}$, and this does not change the optimal solution due to Proposition \ref{proposition_P5R}. Let $\{\rho_{i,(\rho_0)}^{\ast,\text{sub}}\}_{i\in\mathcal{I}}$ denote an optimal solution of Problem (SINR-M-SubP).

\begin{proposition}
\label{proposition_P5R}
If Problem (SINR-M-SubP) is feasible under a given value of $\rho_0$, then $\sum_{i\in\mathcal{I}} \rho_{i,(\rho_0)}^{\ast,\text{sub}}= 1-\rho_0$ always holds, i.e., the optimal solution of Problem (SINR-M-SubP) is consistent with (\ref{P5_Con_M}).
\end{proposition}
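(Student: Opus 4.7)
The plan is to prove Proposition \ref{proposition_P5R} by contradiction. Suppose toward contradiction that under a given $\rho_0\in(0,1]$ for which Problem (SINR-M-SubP) is feasible, there exists an optimal solution $\{\rho_{i,(\rho_0)}^{\ast,\text{sub}}\}_{i\in\mathcal{I}}$ with $\sum_{i\in\mathcal{I}}\rho_{i,(\rho_0)}^{\ast,\text{sub}}>1-\rho_0$. I then construct a strictly better feasible solution, contradicting optimality. The construction is to pick any index $i_0\in\mathcal{I}$ with $\rho_{i_0,(\rho_0)}^{\ast,\text{sub}}>0$ (such an index must exist because $\sum_i\rho_i^{\ast,\text{sub}}>1-\rho_0>0$ whenever $\rho_0<1$, and the case $\rho_0=1$ is trivial since $\rho_i=0$ for all $i$ already gives equality) and to decrease $\rho_{i_0,(\rho_0)}^{\ast,\text{sub}}$ by a small positive amount $\epsilon$, keeping the remaining coordinates fixed; call the perturbed profile $\{\tilde{\rho}_i\}_{i\in\mathcal{I}}$.

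The next step is to verify that $\{\tilde{\rho}_i\}_{i\in\mathcal{I}}$ remains feasible, i.e., it belongs to $\mathcal{G}_{(\rho_0)}\cap\mathcal{H}_{(\rho_0)}$. Decreasing $\rho_{i_0}$ preserves the throughput box (\ref{P5_Con_throughput}) and the per-interface AP constraint (\ref{P5_Con_PiAmax}) for $i_0$ trivially, and leaves them unchanged for $i\neq i_0$. For constraints (\ref{P5_Con_PiBmax}) and (\ref{P5_Con_Pimax}), the relevant quantity $\rho_0+\sum_{j\neq i}\rho_j$ is either unchanged (when $i=i_0$) or strictly decreased (when $i\neq i_0$, since $\rho_{i_0}$ is in that sum), and $\rho_i/\rho_0$ either decreases (when $i=i_0$) or stays the same; hence the LHS of both inequalities can only decrease. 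Finally, since $\sum_i\rho_i^{\ast,\text{sub}}>1-\rho_0$ holds strictly, for $\epsilon$ sufficiently small we still have $\sum_i\tilde{\rho}_i\geq 1-\rho_0$, so $\mathcal{H}_{(\rho_0)}$ is preserved.

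It remains to show that the objective (\ref{eq_obj_SubP}) strictly decreases. Since $\pi_B>\pi_A$ and $\log_2(\cdot)$ is strictly increasing, and since each summand for $i\neq i_0$ has the form $\log_2(\rho_0+\sum_{j\neq i}\rho_j)$ where the argument strictly decreases by $\epsilon$ upon reducing $\rho_{i_0}$, while the $i=i_0$ summand is unaffected, the objective value at $\{\tilde{\rho}_i\}_{i\in\mathcal{I}}$ is strictly smaller than at $\{\rho_{i,(\rho_0)}^{\ast,\text{sub}}\}_{i\in\mathcal{I}}$ whenever $|\mathcal{I}|\geq 2$. This contradicts the optimality of $\{\rho_{i,(\rho_0)}^{\ast,\text{sub}}\}_{i\in\mathcal{I}}$, proving the claim. (The degenerate case $|\mathcal{I}|=1$ can be treated separately: the objective then reduces to the constant $\log_2(\rho_0)$, so the solution with $\rho_1=1-\rho_0$ is optimal without loss of generality and already satisfies (\ref{P5_Con_M}).)

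I do not expect any serious obstacle here; the only part that requires care is to check that the perturbation $\rho_{i_0}\leftarrow\rho_{i_0}-\epsilon$ preserves every one of the coupled constraints in $\mathcal{G}_{(\rho_0)}$, which works out cleanly precisely because each constraint depends on $\rho_{i_0}$ either through $\rho_{i_0}$ itself (monotone increasing in the LHS) or through sums $\sum_{j\neq i}\rho_j$ (also monotone increasing in the LHS).
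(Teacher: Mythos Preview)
Your proof is correct and follows essentially the same contradiction argument as the paper: assume $\sum_{i\in\mathcal{I}}\rho_{i,(\rho_0)}^{\ast,\text{sub}}>1-\rho_0$, pick a coordinate to decrease slightly, and observe that all constraints in $\mathcal{G}_{(\rho_0)}$ are preserved (by monotonicity of their left-hand sides in each $\rho_j$) while the objective strictly drops. Your write-up is in fact more careful than the paper's, which asserts the monotonicity without checking each constraint and does not discuss the edge cases $|\mathcal{I}|=1$ or $\rho_0=1$.
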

\begin{proof}
Please refer to Appendix III for the details.
\end{proof}

\subsubsection{\underline{(Top-problem to find $\rho_0^\ast$)}} By solving the bottom problem and obtaining $F_{\text{sub}}(\rho_0)$ as a function of a given $\rho_0$, we next solve the top-problem to find the best $\rho_0^\ast$ for minimizing the objective function:
\begin{eqnarray}
\text{(SINR-M-TopP): } \text{Minimize } F_{\text{sub}}(\rho_0), \text{ Subject to: } 0<\rho_0\leq 1,  \text{~~Variable: } \rho_0. \nonumber
\end{eqnarray}

\subsection{Monotonicity of Problem (SINR-M-SubP) and Algorithm (Cen-Sub) for Solution}
\label{subsection_subproblem}
We focus on solving Problem (SINR-M-SubP) in this subsection.

\subsubsection{\underline{(Monotonicity of Problem (SINR-M-SubP))}} We solve Problem (SINR-M-SubP) by using techniques from \textit{monotonic optimization}, which tackles a class of optimization problems with monotonic objective functions and monotonic constraints \cite{Book:Monotonic}\cite{Paper:HoangTuy}. The feasible region of a monotonic optimization problem can be represented by the intersection of a normal set (i.e., Definition 1 below) and a reversed normal set (Definition 2). Thanks to the monotonic property of the objective function and the constraints, the feasible region can be well approximated by a set of poly-blocks (with an arbitrary desirable precision), and an optimal solution can be attained at one of the vertexes of the poly-blocks. This yields an efficient approach, referred to as the \textit{polyblock-approximation}, to solve the monotonic problems \cite{Paper:HoangTuy}. The advantage of the monotonic optimization is that it does not requires the problem to be convex, hence is widely applicable for solving a wide range of practical engineering problems (see \cite{Book:Monotonic} for more details).

\begin{definition}
\textit{(Normal Set)} A set $\mathcal{G}\subset\mathcal{R}_{+}^{n}$ is normal, if for any two points $x$ and $x'\in\mathcal{R}_{+}^{n}$ with $x'\leq x$ and $x\in\mathcal{G}$, we always have $x'\in\mathcal{G}$.
\end{definition}

\begin{definition}
\textit{(Reversed Normal Set)} A set $\mathcal{H}\subset\mathcal{R}_{+}^{n}$ is a reversed normal set, if for two points $x \text{ and } x'\in\mathcal{R}_{+}^{n}$ with $x'\geq x$ and $x\in\mathcal{H}$, we always have $x'\in\mathcal{H}$.
\end{definition}

With the above terminologies, we can show the following result.
\begin{proposition}
\label{proposition_monotonic}
Given a fixed $\rho_0$, Problem (SINR-M-SubP) is a monotonic optimization problem.
\end{proposition}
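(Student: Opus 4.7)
The plan is to verify the two structural requirements for a monotonic optimization problem, namely (i) the objective is monotonic in the decision variables and (ii) the feasible region can be written as the intersection of a normal set and a reversed normal set. The definitions of these sets are already provided, so my task reduces to checking that each ingredient of Problem (SINR-M-SubP) satisfies the stated coordinate-wise monotonicity properties when $\rho_0$ is held fixed.

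First, I would examine the objective (\ref{eq_obj_SubP}). Since $\pi_B > \pi_A$ and each argument $\rho_0+\sum_{j\neq i}\rho_j$ is a linear non-decreasing function of every coordinate $\rho_k$ (with $k\neq i$), and the logarithm is monotone, the whole sum is coordinate-wise non-decreasing on $\mathbb{R}_{+}^{I}$. This gives a monotonically increasing objective, which is the form required by monotonic optimization.

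Next I would go constraint by constraint through (\ref{eq_normalset}) to show that $\mathcal{G}_{(\rho_0)}$ is normal: the box bounds $0\leq\rho_i\leq 1-2^{-R_i^{\text{req}}/W}$, the power bound $\frac{n_A}{g_{iA}}\frac{\rho_i}{\rho_0}\leq P_{iA}^{\max}$, the bandwidth bound $\sum_{j\neq i}\rho_j\leq\big(\tfrac{P_{iB}^{\max}+n_B/g_{iB}}{(n_B/g_{iB})2^{R_i^{\text{req}}/B}}\big)^{B/W}-\rho_0$, and the coupled constraint $\frac{n_B}{g_{iB}}2^{R_i^{\text{req}}/B}(\rho_0+\sum_{j\neq i}\rho_j)^{W/B}+\frac{n_A}{g_{iA}}\frac{\rho_i}{\rho_0}\leq P_i^{\max}+n_B/g_{iB}$. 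Each left-hand side is a non-decreasing function of the $\rho$-coordinates (with $\rho_0$ fixed), so each individual sub-level set is downward-closed, and the intersection of downward-closed sets is downward-closed. This gives normality of $\mathcal{G}_{(\rho_0)}$ in the sense of Definition 1. For $\mathcal{H}_{(\rho_0)}$, reversed normality is immediate from (\ref{eq_reversednormalset}): its defining inequality $\sum_i\rho_i\geq 1-\rho_0$ is preserved under componentwise increase, matching Definition 2 directly.

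Combining these observations, the feasible region $\mathcal{G}_{(\rho_0)}\cap\mathcal{H}_{(\rho_0)}$ is a normal-reverse-normal intersection, and the objective is coordinate-wise monotone, which is exactly the canonical form of a monotonic optimization problem addressed by the polyblock-approximation machinery of \cite{Book:Monotonic,Paper:HoangTuy}. The main conceptual obstacle is the constraint (\ref{P5_Con_Pimax}), because it couples $\rho_i$ and $\sum_{j\neq i}\rho_j$ nonlinearly through the exponent $W/B$; I would emphasize that even though this makes the constraint non-convex, the key point for monotonic optimization is only that the left-hand side is coordinate-wise non-decreasing, which is clear since $(\cdot)^{W/B}$ is increasing and $\rho_i/\rho_0$ is linear in $\rho_i$ for fixed $\rho_0>0$. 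Once this is pointed out, the conclusion follows by invoking the definitions.
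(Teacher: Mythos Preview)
Your proposal is correct and follows essentially the same approach as the paper's own proof: verify that the objective (\ref{eq_obj_SubP}) is coordinate-wise increasing, that the constraints defining $\mathcal{G}_{(\rho_0)}$ have increasing left-hand sides (hence $\mathcal{G}_{(\rho_0)}$ is normal), that $\mathcal{H}_{(\rho_0)}$ is reversed normal, and then invoke the definition of a monotonic optimization problem from \cite{Book:Monotonic}. Your treatment is simply more explicit, particularly in handling constraint (\ref{P5_Con_Pimax}), but the underlying argument is identical.
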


\begin{proof}
We first note that the objective function (\ref{eq_obj_SubP}) is increasing in $\{\rho_i\}_{i\in\mathcal{I}}$. Moreover, the left hand sides of (\ref{P5_Con_throughput})-(\ref{P5_Con_Pimax}) are increasing in $\{\rho_i\}_{i\in\mathcal{I}}$, i.e., $\mathcal{G}_{(\rho_0)}$ is a normal set and $\mathcal{H}_{(\rho_0)}$ is a reversed normal set. Hence, Problem (SINR-M-SubP) involves the minimization of an increasing function, subject to a feasible region given by $\mathcal{G}_{(\rho_0)}\cap\mathcal{H}_{(\rho_0)}$. Thus, according to \cite{Book:Monotonic}, (SINR-M-SubP) is a monotonic optimization.
\end{proof}

\subsubsection{\underline{(Algorithm (Cen-Sub) for Solving Problem (SINR-M-SubP))}} Based on Proposition \ref{proposition_monotonic}, we propose Algorithm (Cen-Sub) based on the idea of poly-block approximation to solve Problem (SINR-M-SubP). We should point out that, since Problem (SINR-M-SubP) minimizes an increasing function, Algorithm (Cen-Sub) is designed to construct the series of the poly-blocks that \textit{approximate the lower boundary of the feasible region as close as possible}. This is new in the monotonic optimization literature\footnote{The prior applications of monotonic optimization often focus on maximizing an increasing function \cite{Paper:MonotonicExample2}.}.

\begin{figure}[tbph]
\centering
\includegraphics[scale=0.5]{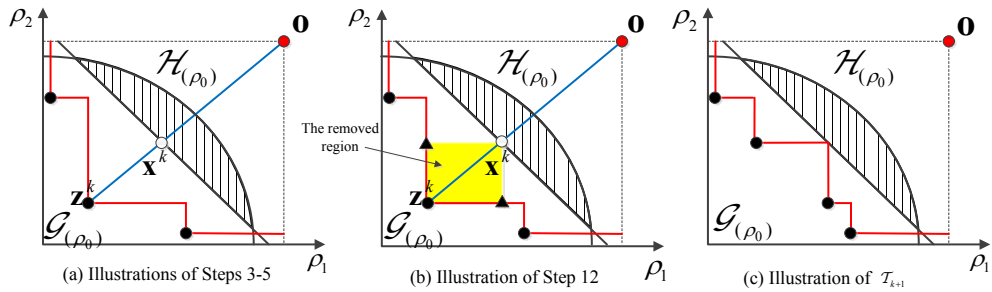}
\vspace{-0.2in}
\caption{The poly-block approximation used in
 Algorithm (Cen-Sub). Each solid node denotes a vertex. The shaded area denotes the feasible region, and the red-line constructed by the vertices denotes the approximated lower-boundary of the feasible region.}
\label{Figure_polyblock}
\end{figure}

\begin{algorithm}
\textbf{Algorithm (Cen-Sub): to solve Problem (SINR-M-SubP) under a given $\rho_0\in(0,1]$}

\vspace{-0.1in}

\hrulefill

\begin{algorithmic}[1]
\STATE \textbf{Initialization:} Set the current best solution $CBS=\emptyset$, and the current best value $CBV=\infty$. Set index $k=1$ and $\epsilon$ as a small positive number. Set the flag for stopping as $f_{\text{stop}}=0$. Initialize set $\mathcal{T}_1=\{\boldsymbol{0}\}$. We use $V(\{\rho_i\}_{i\in\mathcal{I}})=\sum_{i\in\mathcal{I}}(\pi_B-\pi_A)W \log_2\left(\rho_0+\sum_{j\neq i,j\in\mathcal{I}}{\rho_j}\right)$ for easy presentation.
\WHILE{$f_{\text{stop}}=0$}
\STATE Select vertex $\boldsymbol{z}^k\in\arg \min\big\{V(\{\rho_i\}_{i\in\mathcal{I}})|\{\rho_i\}_{i\in\mathcal{I}}\in{\mathcal{T}_k}\big\}$.
\STATE Construct a line between $\boldsymbol{z}^k$ and point $\boldsymbol{o}$ whose element $o_i=\min\big\{1-\frac{1}{2^{\frac{R_i^{\text{req}}}{W}}},\rho_0\frac{P_{iA}^{\max}g_{iA}}{n_A},1-\rho_0\big\},\forall i\in\mathcal{I}$.
\STATE Find the intersection point $\boldsymbol{x}^k$ between the above constructed line and the lower boundary given in $\mathcal{H}_{(\rho_0)}$ by bisection search.

\IF{$V(\boldsymbol{x}^k)<CBV$}
\STATE Update $CBV=V(\boldsymbol{x}^k)$ and set CBS=$\boldsymbol{x}^k$.
\ENDIF
\IF {$\parallel \boldsymbol{x}^k - \boldsymbol{z}^k\parallel<\epsilon$}
\STATE Set $f_{\text{stop}}=1$.
\ENDIF
\STATE Update the set of vertexes as $\mathcal{T}_{k+1}=({\mathcal{T}_k}\backslash\{\boldsymbol{z}^k\})\cup
\big\{\boldsymbol{z}^k+({x}_i^k-{z}_i^k)\boldsymbol{e}_i,i\in\mathcal{I}\big\}$.
\STATE Remove all vertexes $\boldsymbol{z}\in\mathcal{T}_{k+1}\backslash\mathcal{G}_{(\rho_0)}$.
\IF{$\mathcal{T}_{k+1}$ is empty}
\STATE Set $f_{\text{stop}}=1$.
\ENDIF
\STATE Set $k=k+1$.
\ENDWHILE
\STATE \textbf{Output}: Set $\{\rho_{i,(\rho_0)}^{\ast,\text{sub}}\}_{i\in\mathcal{I}}$ is equal to CBS, and $F_{\text{sub}}(\rho_0)=CBV$.
\end{algorithmic}
\end{algorithm}

The key component of Algorithm (Cen-Sub) is the While-Loop (Lines 2-18), whose purpose is to iteratively construct the poly-blocks that approximate the lower-boundary of $\mathcal{G}_{(\rho_0)}\cap\mathcal{H}_{(\rho_0)}$ with an increasing precision. Figure \ref{Figure_polyblock} provides a sketch to illustrate the procedures. Specifically, in the $k$-th iteration, set $\mathcal{T}_k$ denotes the current set of vertexes. In $\mathcal{T}_k$, we find a vertex $\boldsymbol{z}^k$ that yields the smallest objective value (Line 3)\footnote{We use the vector $\boldsymbol{x}$ (in bold letter) to denote the profile $\{x_i\}_{i\in\mathcal{I}}$, i.e., $\boldsymbol{x}$ and $\{x_i\}_{i\in\mathcal{I}}$ are interchangeable.}. Then we perform the following two tasks:
\begin{itemize}
\item \textit{\underline{Task i): to update the current best solution (CBS) and the current best value (CBV)}}. We first construct a line from $\boldsymbol{z}^k$ to a special upper-boundary point $\boldsymbol{o}$ with its element $o_i=\min\big\{1-\frac{1}{2^{\frac{R_i^{\text{req}}}{W}}},\rho_0\frac{P_{iA}^{\max}g_{iA}}{n_A},1-\rho_0\big\},\forall i\in\mathcal{I}$ (Line 4) (each element $o_i$ of point $\boldsymbol{o}$ represents the maximum possible value of $\rho_i$, based on (\ref{P5_Con_throughput}),(\ref{P5_Con_PiAmax}), and (\ref{P5_Con_M})). We then find the corresponding intersection point (denoted by $\boldsymbol{x}^k$) between the constructed line and the lower boundary of the feasible region (Line 5)\footnote{Thanks to the monotonicity of the constraints, we can use the bisection search to find $\boldsymbol{x}^k$ efficiently.}. We use $V(\boldsymbol{x}^k)$ and $\boldsymbol{x}^k$ to update the CBV and the CBS in the $k$-th iteration (Lines 6-8).

\item \textit{\underline{Task ii): to construct poly-blocks $\mathcal{T}_{k+1}$ for the next round iteration}}. We use vertex $\boldsymbol{z}^k$ and the intersection point $\boldsymbol{x}^k$ to construct the new poly-blocks that can approximate $\mathcal{G}_{(\rho_0)}\cap\mathcal{H}_{(\rho_0)}$ closer (Line 12)\footnote{In Line 12, scalar ${x}_i^k$ (or $z_i^k$) denotes the $i$-th element of vector $\boldsymbol{x}^k$ (or vector $\boldsymbol{z}^k$), and vector $\boldsymbol{e}_i$ denotes the vector with the $i$-th element equal to 1, and all other elements equal to 0. All vectors in this paper are of dimension $1\times I$.}. The essence of Line 12 is to remove the region in which the optimum cannot exist.
\end{itemize}

Algorithm (Cen-Sub) terminates if $\boldsymbol{z}^k$ and $\boldsymbol{x}^k$ are close enough (Lines 9-11), or if we cannot expect to find a better solution (Lines 14-16).

\subsection{Algorithm (Cen) for Solving Problem (SINR-M-TopP) and Solution of Problem (CMP)}
\label{subsection_whole}

\subsubsection{\underline{Algorithm (Cen) to solve Problem (SINR-M-TopP)}}

Next we propose Algorithm (Cen) to solve Problem (SINR-M-TopP), by using Algorithm (Cen-Sub) as a subroutine to solve the subproblem. Algorithm (Cen) performs a one-dimensional linear search over $\rho_0\in(0,1]$ with the step-size $\Delta_{\text{top}}$ (the While-Loop on Lines 2-11). For each given value of $\rho_0$, we use Algorithm (Cen-Sub) to evaluate $F_{\text{sub}}(\rho_0)$  (Line 5) and update the currently best solution (Line 7). We perform the feasibility-test for Problem (SINR-M-SubP) under each $\rho_0$ (Line 3), such that we avoid invoking Algorithm (Cen-Sub) when subproblem (SINR-M-SubP) is infeasible and hence save the computational time.

\begin{algorithm}
\textbf{Algorithm (Cen): to solve Problem (SINR-M-TopP)}

\vspace{-0.1in}

\hrulefill

\begin{algorithmic}[1]
\STATE \textbf{Initialization:} Set a small step-size $\Delta_{\text{top}}$. Set $\rho_0=\Delta_{\text{top}}$. Set the CBV as a very large number.
\WHILE{$\rho_0<1$}
\STATE Check the feasibility of Problem (SINR-M-SubP) with Algorithm (Cen-Sub-FC) in Appendix IV.
\IF {Problem (SINR-M-SubP) is feasible}
\STATE Use Algorithm (Cen-Sub) to obtain $\{\rho_{i,(\rho_0)}^{\ast,\text{sub}}\}_{i\in\mathcal{I}}$ and $F_{\text{sub}}(\rho_0)$.
\IF {$F_{\text{sub}}(\rho_0)<\text{CBV}$}
\STATE Set $\text{CBV}=F_{\text{sub}}(\rho_0)$. Set $\rho_0^\ast=\rho_0$, and $\rho_i^\ast=\rho_{i,(\rho_0)}^{\ast,\text{sub}},\forall i\in\mathcal{I}$.
\ENDIF
\ENDIF
\STATE Update $\rho_0=\rho_0+\Delta_{\text{top}}$.
\ENDWHILE
\STATE \textbf{Output:} $\rho_0^\ast$ and $\{\rho_i^{\ast}\}_{i\in\mathcal{I}}$.
\end{algorithmic}
\end{algorithm}

\begin{remark}
The difficulty in solving Problem (SINR-M-TopP) is that we cannot obtain $F_{\text{sub}}(\rho_0)$ in a closed-form expression. Fortunately, in Problem (SINR-M-TopP) the single variable $\rho_0$ is restricted within a fixed interval $(0,1]$ that is independent on the other parameters. This property allows us to use the one-dimensional linear search with a fixed step-size $\Delta_{\text{top}}$ to find $\rho_0^{\ast}$. A smaller step-size $\Delta_{\text{top}}$ will lead to a more accurate solution in the search. In Sec. \ref{section_numerical}, we adopt $\Delta_{\text{top}}=0.005$ (i.e., 200 samples within $(0,1]$) for most of the simulations, since we find that a $\Delta_{\text{top}}$ smaller than 0.005 yields a very limited performance improvement with a significant increase in computational time (see Table \ref{table_monotonic} for details). $\blacksquare$
\end{remark}

We emphasize that although the proposed Algorithm (Cen) provides an effective approach to solve Problem (SINR-M-P), we cannot claim the obtained $\{\rho_i^\ast\}_{i\in\mathcal{I}}$ as the global optimum due to the use of linear search with a fixed step-size $\Delta_{\text{top}}$. Nevertheless, we have the following asymptotical result.
\begin{proposition}
Algorithm (Cen) can yield the asymptotically optimal solution (i.e., $\rho_0^\ast$ and $\{\rho_i^\ast\}_{i\in\mathcal{I}}$) for Problem (SINR-M-P), as $\Delta_{\text{top}}$ approaches to zero.
\end{proposition}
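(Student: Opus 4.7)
The plan is to decompose the asymptotic-optimality claim into two ingredients: (i) Algorithm (Cen-Sub) computes $F_{\text{sub}}(\rho_0)$ to arbitrary precision for each queried $\rho_0$, and (ii) the value function $F_{\text{sub}}(\cdot)$ is continuous on the subset of $(0,1]$ on which (SINR-M-SubP) is feasible. Given (i) and (ii), a standard $\varepsilon$-argument then shows that the discretized minimum $\min\{F_{\text{sub}}(k\Delta_{\text{top}}):k\Delta_{\text{top}}\leq 1\}$ produced by Algorithm (Cen) approaches $\inf_{\rho_0\in(0,1]} F_{\text{sub}}(\rho_0)$ as $\Delta_{\text{top}}\to 0$, which by Proposition \ref{proposition_equivalent_M0} coincides with the optimum of Problem (SINR-M-P).

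For (i), I would invoke the standard convergence theory of poly-block approximation for monotonic optimization. By Proposition \ref{proposition_monotonic}, Algorithm (Cen-Sub) minimizes the increasing continuous function $V(\cdot)$ over $\mathcal{G}_{(\rho_0)}\cap\mathcal{H}_{(\rho_0)}$, with $\mathcal{G}_{(\rho_0)}$ normal and $\mathcal{H}_{(\rho_0)}$ reversed normal. The vertex sequence $\{\boldsymbol{z}^k\}$ selected on Line~3 provides a monotone lower bound on the optimum, while the intersection points $\{\boldsymbol{x}^k\}$ computed on Line~5 are feasible and thereby give upper bounds. Following the argument in \cite{Book:Monotonic,Paper:HoangTuy} (adapted from the usual maximization setting by reversing inequalities), the gap $V(\boldsymbol{x}^k)-V(\boldsymbol{z}^k)$ shrinks to zero, so letting $\epsilon\to 0$ in Line 9 yields the exact value $F_{\text{sub}}(\rho_0)$.

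For (ii), I would apply Berge's maximum theorem to the parametric family of subproblems indexed by $\rho_0$. Both the objective $V(\boldsymbol{\rho};\rho_0)$ and every left-hand side of the constraints (\ref{P5_Con_throughput})--(\ref{P5_Con_M}) are jointly continuous in $(\boldsymbol{\rho},\rho_0)$, so upper hemi-continuity of the feasible-set correspondence follows from compactness of $\mathcal{G}_{(\rho_0)}$ together with continuity of the defining inequalities. Lower hemi-continuity is the delicate half: for any feasible $\boldsymbol{\rho}$ at some $\rho_0$, I would construct a continuous family $\boldsymbol{\rho}(\rho_0')$ of nearby feasible points by perturbing the slack of the binding constraints in proportion to $|\rho_0'-\rho_0|$, which is possible in the interior of the feasible $\rho_0$-region thanks to Proposition \ref{proposition_P5R} (there is always enough room since $\sum_i \rho_i = 1-\rho_0$ can be rebalanced). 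Berge's theorem then yields continuity of $F_{\text{sub}}(\cdot)$, and a compactness argument on any closed sub-interval of the feasible domain upgrades this to uniform continuity.

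Combining (i) and (ii), given any $\varepsilon>0$ we pick $\Delta_{\text{top}}$ so that some grid point $k\Delta_{\text{top}}$ lies within $\Delta_{\text{top}}$ of the true minimizer $\rho_0^\ast$; uniform continuity then gives $|F_{\text{sub}}(k\Delta_{\text{top}})-F_{\text{sub}}(\rho_0^\ast)|<\varepsilon$, completing the proof. The step I expect to be the main obstacle is establishing lower hemi-continuity near the boundary of the feasible $\rho_0$-region, where constraint (\ref{P5_Con_Pimax}) or (\ref{P5_Con_PiBmax}) becomes tight: here the perturbation scheme must be designed carefully so as not to violate the coupling between $\rho_0$ in the denominator of the AP-power term and $\rho_0+\sum_{j\neq i}\rho_j$ in the BS-power term, and a Slater-type interior assumption (implicit in Assumption \ref{assumption_feasible}) may be needed to guarantee the perturbation exists.
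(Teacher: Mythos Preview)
Your proposal is correct and in fact considerably more rigorous than the paper's own argument. The paper's proof consists of two sentences: it asserts that letting $\Delta_{\text{top}}\to 0$ ``enables us to enumerate all possible values of $\rho_0\in(0,1]$'' and that (Cen-Sub) solves each subproblem optimally by monotonic-optimization theory, then concludes. No continuity of $F_{\text{sub}}(\cdot)$ is ever stated or proved; the ``enumerate all values'' phrasing is heuristic, since no finite grid can hit an uncountable set.

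Your decomposition into (i) exactness of (Cen-Sub) and (ii) continuity of the value function $F_{\text{sub}}$ via Berge's theorem is precisely what is needed to turn the paper's heuristic into a proof. Point (i) matches the paper's invocation of \cite{Book:Monotonic,Paper:HoangTuy}; point (ii) is entirely absent from the paper. Your honest flagging of lower hemi-continuity near the boundary of the feasible $\rho_0$-range as the delicate step is apt: the paper simply does not confront this issue. What your approach buys is a genuine $\varepsilon$-$\delta$ guarantee; what the paper's approach buys is brevity at the cost of rigor. Given the paper's overall engineering orientation and its explicit disclaimer that ``we cannot claim the obtained $\{\rho_i^\ast\}_{i\in\mathcal{I}}$ as the global optimum due to the use of linear search with a fixed step-size,'' the authors appear content with the informal version, so your more careful treatment goes beyond what the paper attempts.
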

\begin{proof}
Using an extremely small step-size $\Delta_{\text{top}}$ going to zero enables us to enumerate all possible values of $\rho_0\in(0,1]$. For each enumerated $\rho_0$, the subroutine (Cen-Sub) can find the corresponding optimal solution based on the monotonic optimization theory \cite{Book:Monotonic}\cite{Paper:HoangTuy}. Hence, Algorithm (Cen) can yield the asymptotically optimal solution for Problem (SINR-M-P), as $\Delta_{\text{top}}$ approaches to zero.
\end{proof}

However, Algorithm (Cen) with a very small step-size $\Delta_{\text{top}}$ will consume a very long computational time. We show in Sec. \ref{section_numerical} that by using an appropriately chosen small step-size $\Delta_{\text{top}}=0.005$, Algorithm (Cen) is able to yield the close-to-optimum solution while consuming an affordable computational complexity. The detailed complexity of Algorithm (Cen) depends on both the step-size $\Delta_{\text{top}}$ and the complexity of Algorithm (Cen-Sub). However, according to \cite{Book:Monotonic}\cite{Paper:HoangTuy}, it is very difficult to quantify the complexity of Algorithm (Cen-Sub) directly (because it strongly depends on the geometric structure of the feasible region). In Sec. \ref{section_numerical}, we evaluate the complexity of Algorithm (Cen) via numerical examples.

\subsubsection{\underline{Optimal Solution of Problem (CMP)}} Algorithm (Cen) outputs $\rho_0^\ast$ and $\{\rho_i^{\ast}\}_{i\in\mathcal{I}}$ for Problem (SINR-M-P). Based on Proposition \ref{proposition_equivalent_M0}, each MU $i$'s SINR at the AP (i.e., the solution of Problem (SINR-P)) is $\theta_i^\ast=\frac{\rho_i^\ast}{1-\rho_i^\ast},\forall i\in\mathcal{I}$. Correspondingly, the solution of Problem (CMP) can be derived as follows. Based on Proposition \ref{proposition_power_theta} and eq. (\ref{eq_rate_AP}), MU $i$'s transmit-power and transmission rate to the AP are
\begin{eqnarray}
p_{iA}^\ast=\frac{n_A}{g_{iA}}\frac{\rho_i^\ast}{1-\sum_{i\in\mathcal{I}}\rho_i^\ast} \text{ and } x_{iA}^\ast=W\log_{2}\big(\frac{1}{1-\rho_i^\ast}\big), \forall i\in\mathcal{I}, \nonumber
\end{eqnarray}
respectively. Moreover, based on (\ref{result_xiB}) and (\ref{result_piB}), MU $i$'s rate and transmit-power at the BS are
\begin{eqnarray}
p_{iB}^\ast=\frac{n_B}{g_{iB}}\left(2^{\frac{R_{i}^{\text{req}}}{B}}(1-\rho_i^\ast)^{\frac{W}{B}}-1\right) \text{ and } x_{iB}^\ast=R_i^\text{req} +W\log_{2}\left(1-\rho_i^\ast \right), \forall i\in\mathcal{I}, \nonumber
\end{eqnarray}
respectively. Thus, we solve Problem (CMP) completely.

To perform Algorithm (Cen), the BS needs to collect each MU's \textit{private information}, e.g., the channel gains $\{g_{iA},g_{iB}\}$, the transmit-power capacities $\{P_{iA}^{\max},P_{iB}^{\max},P_{i}^{\max}\}$, and the demand $R_{i}^{\text{req}}$. This may not be always feasible in practice, due to privacy concerns and consideration of communications overhead. This motivates us to study whether it is possible to solve Problem (SINR-M-P) in a distributed fashion.

\section{Distributed Algorithm for Solving Problem (SINR-M-P)}
\label{section_distributed}
We first consider implementing Algorithm (Cen) in a distributed manner, which turns out to be challenging due to the centralized nature of the subroutine (Cen-Sub) to solve the monotonic optimization. Hence, we will focus on a practical network scenario as follows where a distributed algorithm is possible.
\begin{assumption}
\label{assumption_bandwidth}
The channel bandwidth of the small-cell AP is no smaller than the allocated bandwidth by the BS, i.e., $W\geq B$.
\end{assumption}

Assumption \ref{assumption_bandwidth} is motivated by the fact that a small-cell AP often provides a larger bandwidth than the cellular BS. For example, IEEE 802.11ac enables a total channel bandwidth up to 160MHz \cite{Paper:BandwidthAP}, which is larger than that of 60MHz in the 3GPP LTE-A standard \cite{WhitePaper:Qualcomm}\cite{WhitePaper:NSN}. We will later show that Assumption \ref{assumption_bandwidth} helps convexify (\ref{P5_Con_Pimax}) such that Problem (SINR-M-P) becomes a concave minimization problem. This involves some additional equivalent problem transformations as follows.

\subsection{Equivalent Form of Problem (SINR-M-P) and Its Two-Layered Structure}
To find a structure suitable for a distributed solution, we re-express Problem (SINR-M-P) as follows.
\begin{eqnarray}
\text{(SINR-ME-P): } && \text{Minimize} \sum_{i\in\mathcal{I}}(\pi_B-\pi_A)W \log_2\left(1-\rho_i\right) \label{P5_E_Objective}
\end{eqnarray}
\begin{eqnarray}
&& \text{Subject to: }  \text{ constraints } (\ref{P5_Con_throughput}), (\ref{P5_Con_PiAmax}), \text{ and } (\ref{P5_Con_M}) \nonumber \\
&& \text{~~~~~~~~~~~~} 1-\left(\frac{P_{iB}^{\max}+\frac{n_B}{g_{iB}}}{\frac{n_B}{g_{iB}} 2^{\frac{R_i^{\text{req}}}{B}}}\right)^{\frac{B}{W}} \leq \rho_i,\forall i\in\mathcal{I},\label{P5_E_Con_PiBmax}\\
&& \text{~~~~~~~~~~~~} \frac{n_B}{g_{iB}} 2^{\frac{R_i^{\text{req}}}{B}} \left(1-\rho_i\right)^{\frac{W}{B}}+
\frac{n_A}{g_{iA}}\frac{\rho_i}{\rho_0}\leq P_i^{\max}+\frac{n_B}{g_{iB}},\forall i\in\mathcal{I},\label{P5_E_Con_Pimax}\\
&& \text{Variables: }  \rho_0 \text{ and } \rho_i,\forall i\in\mathcal{I}. \nonumber
\end{eqnarray}
Based on (\ref{P5_Con_M}), the above (\ref{P5_E_Objective}), (\ref{P5_E_Con_PiBmax}), and (\ref{P5_E_Con_Pimax}) are equivalent to (\ref{P5_Objective}), (\ref{P5_Con_PiBmax}), and (\ref{P5_Con_Pimax}), respectively (that is why we add additional letter ``E" after ``M" to label the problem). Problem (SINR-ME-P) also has a two-layered structure, namely, we can first solve the subproblem with a given $\rho_0$, and then solve the top-problem to determine the best $\rho_0^\ast\in(0,1]$. The subproblem and the top-problem are given as follows.

\subsubsection{\underline{Subproblem under given $\rho_0$}} The subproblem under a given $\rho_0$ can be given as follows.
\begin{eqnarray}
\text{(SINR-ME-SubP): } F_{\text{sub}}(\rho_0)=&& \text{Minimize} \sum_{i\in\mathcal{I}}(\pi_B-\pi_A)W \log_2\left(1-\rho_i\right) \nonumber\\
&& \text{Subject to: }  \text{constraints } (\ref{P5_Con_throughput}),(\ref{P5_Con_PiAmax}),(\ref{P5_Con_M}),(\ref{P5_E_Con_PiBmax}), \text{ and } (\ref{P5_E_Con_Pimax}), \nonumber \\
&& \text{Variables: }  \rho_i,\forall i\in\mathcal{I}. \nonumber
\end{eqnarray}

\subsubsection{\underline{Top problem for finding $\rho_0^\ast$}} Based on the solution of subproblem, we next solve a top-problem:
\begin{eqnarray}
\text{(SINR-ME-TopP): } \text{Minimize } F_{\text{sub}}(\rho_0),  \text{ Subject to: } 0<\rho_0\leq 1,
\text{ Variable: } \rho_0.  \nonumber
\end{eqnarray}
We emphasize that different from solving Problem (SINR-M-P) in a centralized manner as in Section \ref{section_centralized}, we will solve the above Problems (SINR-ME-SubP) and (SINR-ME-TopP) in a distributed manner.

\subsection{Subproblem (SINR-ME-SubP): A Concave Minimization Problem}
\label{subsection_concaveminimization}
To solve Problem (SINR-ME-SubP) in a distributed manner, we first prove the following.
\begin{proposition}
\label{proposition_reversedconvex_M0}
Under Assumption \ref{assumption_bandwidth}, Problem (SINR-ME-SubP) is a concave minimization problem.
\end{proposition}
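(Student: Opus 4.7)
The plan is to verify the two defining ingredients of a concave minimization problem for the variables $\{\rho_i\}_{i\in\mathcal{I}}$ (with $\rho_0$ held fixed as a parameter in Problem (SINR-ME-SubP)): the objective is concave in the decision variables, and the feasible region is a convex set. Once both are established, minimizing a concave function over a convex set is a concave minimization problem by definition.

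Concavity of the objective is essentially immediate. Since $\pi_A<\pi_B$ makes $(\pi_B-\pi_A)W$ a positive constant and $\rho_i\in[0,1)$, each summand $(\pi_B-\pi_A)W\log_2(1-\rho_i)$ is the composition of the concave function $\log_2(\cdot)$ with an affine map, scaled by a positive constant; summing preserves concavity, so (\ref{P5_E_Objective}) is concave in $\{\rho_i\}_{i\in\mathcal{I}}$.

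The substantive step is convexity of the feasible set. Constraints (\ref{P5_Con_throughput}), (\ref{P5_Con_PiAmax}), (\ref{P5_Con_M}) and (\ref{P5_E_Con_PiBmax}) are each affine in $\rho_i$ once $\rho_0$ is fixed, so they cut out a polyhedron. The only nontrivial constraint is (\ref{P5_E_Con_Pimax}), whose left-hand side contains the term $\tfrac{n_B}{g_{iB}}2^{R_i^{\text{req}}/B}(1-\rho_i)^{W/B}$. This is where Assumption \ref{assumption_bandwidth} enters decisively: writing $\alpha := W/B$, the hypothesis $W\geq B$ gives $\alpha\geq 1$, so on $(0,1]$ the second derivative of $t\mapsto t^{\alpha}$, namely $\alpha(\alpha-1)t^{\alpha-2}$, is nonnegative and $t^{\alpha}$ is convex; composing with the affine change of variable $t=1-\rho_i$ preserves convexity. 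Adding the linear term $\tfrac{n_A}{g_{iA}}\rho_i/\rho_0$ keeps the left-hand side convex, so the sublevel set defined by (\ref{P5_E_Con_Pimax}) is convex.

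Intersecting these convex sets yields a convex feasible region, which together with concavity of the objective proves Proposition \ref{proposition_reversedconvex_M0}. The main obstacle to be aware of is precisely the sign of $\alpha-1$: without Assumption \ref{assumption_bandwidth}, i.e.\ if $W<B$, the same derivative computation would make $(1-\rho_i)^{W/B}$ concave in $\rho_i$, the sublevel set in (\ref{P5_E_Con_Pimax}) would generically fail to be convex, and the concave-minimization structure on which the forthcoming distributed algorithm relies would break down.
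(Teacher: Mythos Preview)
Your proof is correct and follows essentially the same approach as the paper's own argument: both verify that the objective is concave and that the feasible set is convex by noting that constraints (\ref{P5_Con_throughput}), (\ref{P5_Con_PiAmax}), (\ref{P5_Con_M}), (\ref{P5_E_Con_PiBmax}) are linear in $\{\rho_i\}_{i\in\mathcal{I}}$ for fixed $\rho_0$, while (\ref{P5_E_Con_Pimax}) is convex under Assumption~\ref{assumption_bandwidth}. Your version is simply more explicit about the second-derivative computation that makes $(1-\rho_i)^{W/B}$ convex when $W/B\geq 1$ and about the role of the standing assumption $\pi_A<\pi_B$ in ensuring the objective is genuinely concave rather than convex.
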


\begin{proof}
Given $\rho_0$, (\ref{P5_Con_throughput}), (\ref{P5_Con_PiAmax}), (\ref{P5_Con_M}), and   (\ref{P5_E_Con_PiBmax}) are all linear in $\{\rho_i\}_{i\in\mathcal{I}}$. Meanwhile, under Assumption \ref{assumption_bandwidth}, (\ref{P5_E_Con_Pimax}) is convex and thus defines a convex feasible region for each $\rho_i$. Thus, the feasible region of Problem (SINR-ME-SubP) is a convex set. We can further verify that the objective function is a concave function. Thus, Problem (SINR-ME-SubP) involves \textit{the minimization of a concave function over a convex feasible set, i.e., is a concave minimization problem} \cite{Book:ConcaveMinimization}\cite{Paper:ConcaveMinimization}.
\end{proof}

The concave minimization problem is known as a multi-extremal global optimization. For the general concave minimization optimizations, it is very difficult to characterize an optimality condition, and we only know that the optimal solution(s) exist at one or multiple vertices of the feasible region \cite{Book:ConcaveMinimization}.  Motivated by this observation, several numerical schemes (see \cite{Paper:ConcaveMinimization} for more details) have been proposed for enumerating the vertexes of feasible region. However, these schemes suffer from two common drawbacks: the high computational complexity and the centralized implementation. For instance, in \cite{Paper:ConcaveMinimization_DSL}, the authors proposed a parameterized branch-and-bound algorithm to solve the spectrum balancing problem (which was proved as a concave minimization problem). However, the proposed algorithm is centralized without an upper-bound characterization of the computational complexity.

Therefore, to exploit the concave minimization property to solve Problem (SINR-ME-SubP), we need to identify some additional structural properties of the feasible region in our problem. Fortunately, in Problem (SINR-ME-SubP), only constraint (\ref{P5_Con_M}) couples all MUs' decisions. Now let us tentatively ignore (\ref{P5_Con_M}) (which will be taken into account later on). Then, under a given $\rho_0$, (\ref{P5_Con_throughput}),(\ref{P5_Con_PiAmax}),(\ref{P5_E_Con_PiBmax}), and (\ref{P5_E_Con_Pimax}) yield \textit{a decoupled feasible interval for each MU $i$ in the form of $\rho_i\in\left [\underline{M}_{i,(\rho_0)},\overline{M}_{i,(\rho_0)}\right]$}, which is much simpler to deal with. This property significantly simplifies the feasible region. To analytically characterize $\underline{M}_{i,(\rho_0)}$ and $\overline{M}_{i,(\rho_0)}$, we first define the following function:
\begin{eqnarray}
J_i(\rho_i)=\frac{n_B}{g_{iB}} 2^{\frac{R_i^{\text{req}}}{B}} \left(1-\rho_i\right)^{\frac{W}{B}}+
\frac{n_A}{g_{iA}}\frac{\rho_i}{\rho_0},\forall i\in\mathcal{I}, \label{eq_definition_J}
\end{eqnarray}
with $J_i(0)=\frac{n_B}{g_{iB}} 2^{\frac{R_i^{\text{req}}}{B}}$ and $J_i(1)=\frac{n_A}{g_{iA}}\frac{1}{\rho_0}$.
We have the following results regarding function $J_i(\rho_i)$.

\begin{lemma}
\label{lemma_diffcases}
Under Assumption \ref{assumption_bandwidth} and a given $\rho_0$, each MU $i$'s function $J_i(\rho_i)$ is monotonically increasing within $(0,1]$, if \underline{Condition $\text{(C3): } \frac{n_A}{g_{iA}}\frac{1}{\rho_0}> \frac{n_B}{g_{iB}}2^{\frac{R_i^{\text{req}}}{B}}\frac{W}{B},\forall i\in\mathcal{I}$} holds. If Condition (C3) does not hold, then there exists a special point $\chi_{i,(\rho_0)}$ given by
\begin{eqnarray}
\chi_{i,(\rho_0)}=1-\left(\frac{n_A}{n_B}\frac{g_{iB}}{g_{iA}}\frac{1}{\rho_0}\frac{B}{2^{\frac{R_i^{\text{req}}}{B}}W}\right)^{\frac{B}{W-B}}, \label{value_Z}
\end{eqnarray}
such that function $J_i(\rho_i)$ decreases within $\rho_i\in(0,\chi_{i,(\rho_0)}]$ and increases within $\rho_i\in[\chi_{i,(\rho_0)},1]$.
\end{lemma}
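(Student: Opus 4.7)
The plan is a straightforward first-derivative analysis of the univariate function $J_i(\rho_i)$, exploiting the convexity induced by Assumption \ref{assumption_bandwidth}. First I would differentiate to obtain
\begin{equation*}
J_i'(\rho_i) = -\frac{n_B}{g_{iB}}\, 2^{\frac{R_i^{\text{req}}}{B}}\,\frac{W}{B}\,(1-\rho_i)^{\frac{W}{B}-1} + \frac{n_A}{g_{iA}}\frac{1}{\rho_0},
\end{equation*}
and then differentiate once more to get
\begin{equation*}
J_i''(\rho_i) = \frac{n_B}{g_{iB}}\, 2^{\frac{R_i^{\text{req}}}{B}}\,\frac{W}{B}\!\left(\frac{W}{B}-1\right)\!(1-\rho_i)^{\frac{W}{B}-2}.
\end{equation*}
Under Assumption \ref{assumption_bandwidth} we have $W/B \geq 1$, hence $J_i''(\rho_i)\geq 0$ on $(0,1)$, i.e., $J_i'$ is non-decreasing on this interval. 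This convexity is the single structural fact that drives everything that follows.

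Next I would evaluate $J_i'$ at the endpoints. At $\rho_i=0$, $J_i'(0)=-\frac{n_B}{g_{iB}}2^{\frac{R_i^{\text{req}}}{B}}\frac{W}{B}+\frac{n_A}{g_{iA}}\frac{1}{\rho_0}$, whose sign is precisely governed by Condition (C3). As $\rho_i \to 1^{-}$, the term $(1-\rho_i)^{\frac{W}{B}-1}\to 0$ (for $W>B$) or stays bounded (for $W=B$), so $J_i'(\rho_i)\to \frac{n_A}{g_{iA}\rho_0}>0$. Thus if (C3) holds, then $J_i'(0)>0$ and monotonicity of $J_i'$ forces $J_i'(\rho_i)>0$ on the whole interval, establishing the first claim. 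If (C3) fails, then $J_i'(0)\leq 0$ while $J_i'(1^-)>0$, so by the intermediate value theorem combined with monotonicity of $J_i'$ there is a unique root $\chi_{i,(\rho_0)}\in(0,1)$ of $J_i'(\rho_i)=0$; to the left of it $J_i'\leq 0$ and to the right $J_i'\geq 0$, which is exactly the decreasing-then-increasing behavior claimed.

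The closed-form expression \eqref{value_Z} will then follow by simply solving $J_i'(\chi_{i,(\rho_0)})=0$ for $\chi_{i,(\rho_0)}$: isolating the power term gives
\begin{equation*}
(1-\chi_{i,(\rho_0)})^{\frac{W}{B}-1} = \frac{n_A}{n_B}\frac{g_{iB}}{g_{iA}}\frac{1}{\rho_0}\frac{B}{2^{\frac{R_i^{\text{req}}}{B}}W},
\end{equation*}
and then raising both sides to the power $\frac{B}{W-B}$ recovers \eqref{value_Z}.

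The main obstacle I anticipate is the boundary case $W=B$, in which $J_i''\equiv 0$, the exponent $\frac{B}{W-B}$ in \eqref{value_Z} blows up, and the formula is ill-defined. In this case $J_i'$ reduces to a constant equal to $\frac{n_A}{g_{iA}\rho_0}-\frac{n_B}{g_{iB}}2^{\frac{R_i^{\text{req}}}{B}}$, so the dichotomy between the two regimes collapses: either $J_i$ is monotonically increasing (when (C3) holds) or it is monotonically non-increasing (when (C3) fails), and the ``turning point'' interpretation is vacuous. I would handle this either by restricting the second regime to $W>B$ or by adopting the convention $\chi_{i,(\rho_0)}=1$ when $W=B$ and (C3) fails, so that the decreasing-then-increasing statement remains literally true. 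Aside from this degenerate case, the argument is essentially a one-variable convex calculus check, so no further technical complication is expected.
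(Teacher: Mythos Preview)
Your proposal is correct and follows essentially the same approach as the paper: compute $J_i'$, use convexity (the paper simply asserts $J_i'$ is increasing, while you verify it via $J_i''\ge 0$), check the sign of $J_i'(0)$ to recover Condition (C3), and solve $J_i'=0$ for the turning point $\chi_{i,(\rho_0)}$. Your treatment is in fact slightly more careful than the paper's, which does not explicitly address the degenerate case $W=B$.
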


\begin{proof}
Please refer to Appendix V for the details.
\end{proof}

Based on Lemma \ref{lemma_diffcases}, we obtain the following results regarding $\underline{M}_{i,(\rho_0)}$ and $\overline{M}_{i,(\rho_0)}$.

\begin{proposition}
\label{proposition_diffcases}
Under Assumption \ref{assumption_bandwidth} and a given $\rho_0$, constraints (\ref{P5_Con_throughput}),(\ref{P5_Con_PiAmax}),(\ref{P5_E_Con_PiBmax}), and (\ref{P5_E_Con_Pimax}) together yield a decoupled feasible interval for each MU $i$ as $\rho_i\in [\underline{M}_{i,(\rho_0)},\overline{M}_{i,(\rho_0)}]$, where
\begin{eqnarray}
\underline{M}_{i,(\rho_0)}=\max\left\{1-\left(\frac{P_{iB}^{\max}+\frac{n_B}{g_{iB}}}{\frac{n_B}{g_{iB}}
2^{\frac{R_i^{\text{req}}}{B}}}\right)^{\frac{B}{W}},0,\underline{\mu}_{i,(\rho_0)}\right\}, \text{ and }
\overline{M}_{i,(\rho_0)}=\min\left\{1-\frac{1}{2^{\frac{R_i^{\text{req}}}{W}}},
P_{iA}^{\max}\frac{g_{iA}}{n_A}\rho_0,1-\rho_0,\overline{\mu}_{i,(\rho_0)}\right\}. \nonumber
\end{eqnarray}
The tuple of $\left(\underline{\mu}_{i,(\rho_0)},\overline{\mu}_{i,(\rho_0)}\right)$ depends on whether Condition (C3) (in Lemma \ref{lemma_diffcases}) holds or not as follows:

\noindent i) When \underline{Condition (C3) holds}, the tuple of  $\left(\underline{\mu}_{i,(\rho_0)},\overline{\mu}_{i,(\rho_0)}\right)$ is given by
\begin{eqnarray}
\left(\underline{\mu}_{i,(\rho_0)},\overline{\mu}_{i,(\rho_0)}\right)=\left\{
      \begin{array}{ll}
         (1,0), & \hbox{if } J_i(0)>P_i^{\max}+\frac{n_B}{g_{iB}};\\
        \big(0,\arg\max\left\{v|J_i(v)=P_i^{\max}+\frac{n_B}{g_{iB}},0\leq v\leq 1\right\}\big), & \hbox{if } J_i(0)\leq P_i^{\max}+\frac{n_B}{g_{iB}}\leq J_i(1); \\
        (0,1), & \hbox{if } J_i(1)< P_i^{\max}+\frac{n_B}{g_{iB}}.
      \end{array}
    \right. \nonumber
\end{eqnarray}

\noindent ii) When \underline{Condition (C3) does not hold}, the tuple of  $\left(\underline{\mu}_{i,(\rho_0)},\overline{\mu}_{i,(\rho_0)}\right)$ is given by
\begin{eqnarray}
\left(\underline{\mu}_{i,(\rho_0)},\overline{\mu}_{i,(\rho_0)}\right)=
\text{~~~~~~~~~~~~~~~~~~~~~~~~~~~~~~~~~~~~}
\text{~~~~~~~~~~~~~~~~~~~~~~~~~~~~~~~~~~~~}
\text{~~~~~~~~~~~~~~~~~~~~~~~~~~~~~~~~~~~~}\nonumber \\
\left\{
  \begin{array}{ll}
   (1,0), \hbox{~~~~~~~~~if } J_i(\chi_{i,(\rho_0)})>P_i^{\max}+\frac{n_B}{g_{iB}};
\\
(0,1),  \hbox{~~~~~~~~~if } J_i(\chi_{i,(\rho_0)})\leq P_i^{\max}+\frac{n_B}{g_{iB}}, \text{ and } J_i(0)\leq P_i^{\max}+\frac{n_B}{g_{iB}}, \text{ and } J_i(1)\leq P_i^{\max}+\frac{n_B}{g_{iB}}; \\
    \big(0,\arg\min\left\{v|\chi_{i,(\rho_0)}\leq v\leq 1, J_i(v)=P_i^{\max}+\frac{n_B}{g_{iB}}\right\}\big),
\\ \hbox{~~~~~~~~~~~~~~~~if }
  J_i(\chi_{i,(\rho_0)})\leq P_i^{\max}+\frac{n_B}{g_{iB}},\text{ and } J_i(0)\leq P_i^{\max}+\frac{n_B}{g_{iB}},\text{ and } J_i(1)\geq P_i^{\max}+\frac{n_B}{g_{iB}}; \\
    \big(\arg\max\left\{v|0\leq v\leq \chi_{i,(\rho_0)},J_i(v)=P_i^{\max}+\frac{n_B}{g_{iB}}\right\},1\big),
\\ \hbox{~~~~~~~~~~~~~~~~if } J_i(\chi_{i,(\rho_0)})\leq P_i^{\max}+\frac{n_B}{g_{iB}},\text{ and }J_i(0)\geq P_i^{\max}+\frac{n_B}{g_{iB}},\text{ and }J_i(1)\leq P_i^{\max}+\frac{n_B}{g_{iB}} \\
    \big(\arg\max\left\{v|0\leq v\leq \chi_{i,(\rho_0)},J_i(v)=P_i^{\max}+\frac{n_B}{g_{iB}}\right\},\arg\min\left\{v|\chi_{i,(\rho_0)}\leq v\leq 1, J_i(v)=P_i^{\max}+\frac{n_B}{g_{iB}}\right\}\big),
\\ \hbox{~~~~~~~~~~~~~~~~if } J_i(\chi_{i,(\rho_0)})\leq P_i^{\max}+\frac{n_B}{g_{iB}},\text{ and }J_i(0)\geq P_i^{\max}+\frac{n_B}{g_{iB}},\text{ and }J_i(1)\geq P_i^{\max}+\frac{n_B}{g_{iB}}.
  \end{array}
\right. \nonumber
\end{eqnarray}
\end{proposition}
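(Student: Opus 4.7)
The plan is to derive the one-dimensional feasible interval $[\underline{M}_{i,(\rho_0)}, \overline{M}_{i,(\rho_0)}]$ by processing the four constraints (\ref{P5_Con_throughput}), (\ref{P5_Con_PiAmax}), (\ref{P5_E_Con_PiBmax}), and (\ref{P5_E_Con_Pimax}) separately and then intersecting the resulting scalar bounds on $\rho_i$. First I would dispose of the three constraints that are linear (or immediately invertible) in $\rho_i$: (\ref{P5_Con_throughput}) contributes the bounds $0$ and $1-2^{-R_i^{\text{req}}/W}$; (\ref{P5_Con_PiAmax}) contributes the upper bound $P_{iA}^{\max} g_{iA}\rho_0/n_A$; (\ref{P5_E_Con_PiBmax}) contributes the lower bound $1-\bigl((P_{iB}^{\max}+n_B/g_{iB})/((n_B/g_{iB})2^{R_i^{\text{req}}/B})\bigr)^{B/W}$; and the coupling equality $\rho_0+\sum_{j\in\mathcal{I}}\rho_j=1$ combined with $\rho_j\geq 0$ for all $j\neq i$ forces $\rho_i\leq 1-\rho_0$. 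Folding these four quantities together with the contributions $\underline{\mu}_{i,(\rho_0)}$ and $\overline{\mu}_{i,(\rho_0)}$ under one outer $\max$ and one outer $\min$ reproduces the displayed expressions for $\underline{M}_{i,(\rho_0)}$ and $\overline{M}_{i,(\rho_0)}$.

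The non-trivial step is to describe the sublevel set $\{\rho_i\in[0,1]:J_i(\rho_i)\leq P_i^{\max}+n_B/g_{iB}\}$ arising from (\ref{P5_E_Con_Pimax}). Here I would invoke Lemma \ref{lemma_diffcases}. When Condition (C3) holds, $J_i$ is strictly increasing on $[0,1]$, so the sublevel set is either empty (if $J_i(0)>P_i^{\max}+n_B/g_{iB}$), a left interval $[0,v^*]$ with $v^*$ the unique root of $J_i(v)=P_i^{\max}+n_B/g_{iB}$ (if the threshold lies between $J_i(0)$ and $J_i(1)$), or the whole $[0,1]$ (if $J_i(1)<P_i^{\max}+n_B/g_{iB}$). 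Encoding the empty case by the convention $(\underline{\mu}_{i,(\rho_0)},\overline{\mu}_{i,(\rho_0)})=(1,0)$, which makes the outer $\max$/$\min$ return an empty interval without requiring a separate infeasibility flag, yields exactly the three branches listed in part (i).

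When Condition (C3) fails, Lemma \ref{lemma_diffcases} tells us that $J_i$ is unimodal on $[0,1]$ with a unique minimum at $\chi_{i,(\rho_0)}$ given by (\ref{value_Z}), strictly decreasing on $[0,\chi_{i,(\rho_0)}]$ and strictly increasing on $[\chi_{i,(\rho_0)},1]$. I would split into the five exhaustive cases determined by comparing the threshold $T:=P_i^{\max}+n_B/g_{iB}$ with $J_i(\chi_{i,(\rho_0)})$, $J_i(0)$, and $J_i(1)$: (a) $J_i(\chi_{i,(\rho_0)})>T$ makes the sublevel set empty, encoded by $(1,0)$; (b) all three values at or below $T$ gives the whole $[0,1]$, encoded by $(0,1)$; (c) $J_i(0)\leq T<J_i(1)$ yields a left interval ending at the unique crossing on the increasing branch in $[\chi_{i,(\rho_0)},1]$; (d) $J_i(1)\leq T<J_i(0)$ yields a right interval starting at the unique crossing on the decreasing branch in $[0,\chi_{i,(\rho_0)}]$; (e) both $J_i(0)>T$ and $J_i(1)>T$ while $J_i(\chi_{i,(\rho_0)})\leq T$ yields a two-sided interval bounded by the two crossings, one on each monotone branch. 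This matches row-by-row the five sub-cases of part (ii).

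The main obstacle will be the careful bookkeeping in part (ii): I must argue existence and uniqueness of each crossing by applying the intermediate value theorem separately to the appropriate strictly monotone branch of $J_i$, and verify the bracketing inequalities so that the $\arg\max$ over $[0,\chi_{i,(\rho_0)}]$ and the $\arg\min$ over $[\chi_{i,(\rho_0)},1]$ are well defined and single-valued. Given the unimodality supplied by Lemma \ref{lemma_diffcases}, the check that the five sign patterns of $(J_i(0)-T, J_i(1)-T)$ conditioned on $J_i(\chi_{i,(\rho_0)})\leq T$ (plus the infeasibility case $J_i(\chi_{i,(\rho_0)})>T$) exhaust all possibilities is routine, so the remaining work is essentially bookkeeping rather than additional analysis.
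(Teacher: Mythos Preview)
Your proposal is correct and follows essentially the same approach as the paper: first read off the scalar bounds from the linear/invertible constraints (\ref{P5_Con_throughput}), (\ref{P5_Con_PiAmax}), (\ref{P5_E_Con_PiBmax}) (and the $1-\rho_0$ cap), then use the monotonicity/unimodality of $J_i$ supplied by Lemma~\ref{lemma_diffcases} to case-split the sublevel set of (\ref{P5_E_Con_Pimax}) into three subcases when (C3) holds and five when it does not. The paper's Appendix~VI argues only the $(\underline{\mu}_{i,(\rho_0)},\overline{\mu}_{i,(\rho_0)})$ part via exactly this case analysis (illustrated by figures), so your plan to invoke the intermediate value theorem on each monotone branch and to verify exhaustiveness is, if anything, slightly more explicit than what the paper records.
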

Notice that the value of $\chi_{i,(\rho_0)}$ is given in (\ref{value_Z}).

\begin{proof}
Please refer to Appendix VI for the details. Note that in the third case when Condition (C3) does not hold, we need to calculate $\arg\min\left\{v|\chi_{i,(\rho_0)}\leq v\leq 1, J_i(v)=P_i^{\max}+\frac{n_B}{g_{iB}}\right\}$. This can be efficiently computed by a bisection search, as $J_i(\rho_i)$ is increasing for $\rho_i\in[\chi_{i,(\rho_0)},1]$ (Lemma \ref{lemma_diffcases}) in this case. For the other similar cases in Proposition \ref{proposition_diffcases}, the bisection search is also applicable.
\end{proof}

Using Proposition \ref{proposition_diffcases}, we can re-write Problem (SINR-ME-SubP) in the following equivalent form.
\begin{eqnarray}
\text{(SINR-MES-SubP): } F_{\text{sub}}(\rho_0)= &&  \text{Minimize} \sum_{i\in\mathcal{I}}(\pi_B-\pi_A)W \log_2\left(1-\rho_i\right) \nonumber\\
&& \text{Subject to: }  \underline{M}_{i,(\rho_0)} \leq \rho_i\leq \overline{M}_{i,(\rho_0)},\forall i\in\mathcal{I}, \label{DISSub_Interval}\\
&& \text{~~~~~~~~~~~~} \text{constraint }  (\ref{P5_Con_M}), \nonumber \\
&& \text{Variables: }  \rho_i,\forall i\in\mathcal{I}.  \nonumber
\end{eqnarray}
In Problem (SINR-MES-SubP), each MU $i$ can calculate its $\underline{M}_{i,(\rho_0)}$ and $\overline{M}_{i,(\rho_0)}$ using only its private information without communicating with the BS.

\begin{remark}
\label{remark_interpretation}
We can interpret Problem (SINR-MES-SubP) as a resource allocation problem. Specifically, given $\rho_0$, the BS allocates a total budget $1-\rho_0$ of virtual currency to all MUs (i.e., (\ref{P5_Con_M})). Variable $\rho_i$ corresponds to the amount of virtual currency allocated to MU $i$, and each MU $i$ has a \textit{dis-utility function} $(\pi_B-\pi_A)W \log_2\left(1-\rho_i\right)$ (according to the objective function) depending on its $\rho_i$. Each MU $i$'s $\rho_i$ should fall within $[\underline{M}_{i,(\rho_0)},\overline{M}_{i,(\rho_0)}]$ (i.e., (\ref{DISSub_Interval})). The BS's objective is to allocate its budget to the MUs to minimize the total dis-utility. $~~~~~~~~~~~~~~~~~~~~~~~~~~~~~~~~~~~~~~~~~~~~~~~~~~~~~~~~~~~~~~~~~~~~~~~~~~~~~~~~~~~~~~\blacksquare$
\end{remark}

\subsection{Distributed Algorithm (Dis-Sub) for Solving Problem (SINR-MES-SubP)}

The concave minimization property of Problem (SINR-MES-SubP) together with its simple form (followed by Proposition \ref{proposition_diffcases}) lead to \textit{two important guidelines} for the BS to allocate virtual currency to the MUs: \underline{\textit{Guideline-I}}, each MU $i$ prefers to obtaining a larger $\rho_i$ (falling within $[\underline{M}_{i,(\rho_0)},\overline{M}_{i,(\rho_0)}]$), since its dis-utility function is decreasing $\rho_i$; and \underline{\textit{Guideline-II}}, receiving a larger $\rho_i$ can yield a larger \textit{marginal decrease} in MU $i$'s dis-utility, since its dis-utility function is concave. With the two guidelines, we design a distributed algorithm (Dis-Sub) to solve Problem (SINR-MES-SubP), which works as follows.
\begin{itemize}
\item MU $i$ initializes its current $\rho_i^{\text{current}}=\underline{M}_{i,(\rho_0)}$, and reports $\rho_i^{\text{current}}$ to the BS. The BS initializes its effective budget as $\Theta^{\text{effective}}=1-\rho_0-\sum_{i\in\mathcal{I}}\rho_i^{\text{current}}$ and sets $\Theta^{\text{current}}=\Theta^{\text{effective}}$.

\item The BS quantizes $\Theta^{\text{effective}}$ with a small step-size $\Delta_{\text{sub}}$, which yields $\frac{\Theta^{\text{effective}}}{\Delta_{\text{sub}}}$ units of currency. The BS performs an iterative process to allocate these units to the MUs (the While-Loop on Lines 2-8).

\item In each iteration, the BS allocates one unit of currency (in the amount $\Delta_{\text{sub}}$) to an appropriate MU based on the MUs' bids. To compete for this unit, MU $i$ calculates its bid $b_i$ (Line 3) representing \textit{how eager MU $i$ is to get the unit}\footnote{Each MU's bid corresponds to its marginal decrease in the dis-utility when obtaining one more unit of the currency. Here, we assume that MUs will truthfully report their bids, and we will leave the study about the incentive issues in a future study.} and submits the bid to the BS. MU $i$ also sends the BS a signaling $f_i$ indicating whether it can further increase its $\rho_i^{\text{current}}$ or not. After receiving all MUs' $\{b_i,f_i\}_{i\in\mathcal{I}}$, the BS selects the MU (let us say MU $k$) which submits the largest bid with $f_i=1$, and grants MU $k$ the unit (Line 4). Correspondingly, MU $k$ increases its $\rho_k^{\text{current}}$ by $\Delta_{\text{sub}}$ (Line 5).
\end{itemize}

The iteration process will terminate if either the BS uses up its effective budget or the BS finds that every MU's $f_i=0$, i.e., every MU has reached the maximum value of $\rho_i^{\text{current}}$.

In a nutshell, Algorithm (Dis-Sub) provides an efficient approach to determine the best vertex of the feasible region that can minimize all MUs' total dis-utility (under a given $\rho_0$).

\begin{algorithm}
\textbf{Algorithm (Dis-Sub): distributed algorithm to solve Problem (SINR-MES-SubP)}

\vspace{-0.1in}

\hrulefill

\begin{algorithmic}[1]
\STATE \textbf{Initialization:} Each MU $i$ sets its $\rho_i^{\text{current}}=\underline{M}_{i,(\rho_0)}$ and $f_i=1$, and reports $(\rho_i^{\text{current}},f_i)$ to the BS. The BS sets the effective budget of the currency as $\Theta^{\text{effective}}=1-\rho_0-\sum_{i\in\mathcal{I}}\rho_i^{\text{current}}$ and $\Theta^{\text{current}}=\Theta^{\text{effective}}$.
\WHILE{$\Theta^{\text{current}}\geq \Delta_{\text{sub}}$ and $\prod_{i\in\mathcal{I}}(1-f_i)=0$}
\STATE MU $i$ calculates its bid as $b_i=\frac{1}{(1-\rho_i^{\text{current}})\ln2}$ and the signalling $f_i=\mathbf{I}(\rho_i^{\text{current}}+\Delta_{\text{sub}} \leq \overline{M}_{i,(\rho_0)})$, and submits the tuple of $(b_i,f_i)$ to the BS.
\STATE The BS collects each MU's tuple of $(b_i,f_i)$.
\STATE The BS selects MU $k=\arg\max_{i\in\mathcal{I}} b_i \times f_i$, and notices MU $k$ for allocating $\Delta_{\text{sub}}$.
\STATE MU $k$, which is noticed by the BS, updates its $\rho_i^{\text{current}}=\rho_i^{\text{current}}+\Delta_{\text{sub}}$.
\STATE The BS updates its budget as $\Theta^{\text{current}}=\Theta^{\text{current}}-\Delta_{\text{sub}}$.
\ENDWHILE
\STATE \textbf{Output:} The BS sets $F_{\text{sub}}(\rho_{0})=(\pi_B-\pi_A)W\sum_{i\in\mathcal{I}}\log_{2}(\frac{1}{b_i\ln2})$. Each MU $i$ sets $\rho_{i,(\rho_0)}^{\ast,\text{sub}}=\rho_i^{\text{current}}$.

\end{algorithmic}
\end{algorithm}

\subsection{Distributed Algorithm (Dis) for Solving Problem (SINR-ME-TopP)}
We next propose a distributed Algorithm (Dis) to solve Problem (SINR-ME-TopP). In Algorithm (Dis), the BS performs a one-dimensional linear search over $\rho_0\in(0,1]$. For each enumerated $\rho_0$, the BS and the MUs invoke Algorithm (Dis-Sub) as a subroutine. Algorithm (Dis) works as follows.
\begin{itemize}
\item In Line 3, each MU $i$ determines its feasible interval $\rho_i\in[\underline{M}_{i,(\rho_0)},\overline{M}_{i,(\rho_0)}]$ based on Proposition \ref{proposition_diffcases}.

\item In Line 5, the BS determines the feasibility of Subproblem (SINR-MES-SubP).

\item If Subproblem (SINR-MES-SubP) is feasible, the BS and the MUs perform Algorithm (Dis-Sub) to evaluate $F_{\text{sub}}(\rho_0)$, based on which the BS and the MUs update their respective currently best solutions (Lines 7-9). Otherwise, the BS continues to evaluate the next value of $\rho_0$ (Line 11).
\end{itemize}

\begin{algorithm}
\textbf{Algorithm (Dis): to solve Problem (SINR-ME-TopP)}

\vspace{-0.1in}

\hrulefill

\begin{algorithmic}[1]
\STATE \textbf{Initialization:} The BS sets a small step-size $\Delta_{\text{top}}$ for updating $\rho_0$, and initializes $\rho_0=\Delta_{\text{top}}$. The BS sets the current best value (CBV) as a very large number.
\WHILE{$\rho_0\leq 1$}
\STATE Given $\rho_0$, each MU $i$ uses Proposition \ref{proposition_diffcases} to determine its $\underline{M}_{i,(\rho_0)}$ and $\overline{M}_{i,(\rho_0)}$. Each MU $i$ reports its individual feasible interval $[\underline{M}_{i,(\rho_0)},\overline{M}_{i,(\rho_0)}]$ to the BS.
\STATE The BS collects each MU $i$'s tuple of $(\underline{M}_{i,(\rho_0)},\overline{M}_{i,(\rho_0)})$.
\IF {$\underline{M}_{i,(\rho_0)}\leq \overline{M}_{i,(\rho_0)},\forall i\in\mathcal{I}$ and $\sum_{i\in\mathcal{I}}\underline{M}_{i,(\rho_0)}\leq 1-\rho_0 \leq \sum_{i\in\mathcal{I}} \overline{M}_{i,(\rho_0)}$ are met}
\STATE The BS and the MUs perform Algorithm (Dis-Sub). As a result, the BS obtains $F_{\text{sub}}(\rho_{0})$, and each MU $i$ obtains $\rho_{i,(\rho_0)}^{\ast,\text{sub}}$.
\IF{$F_{\text{sub}}(\rho_0)<CBV$}
\STATE The BS sets $CBV=F_{\text{sub}}(\rho_0)$, and each MU $i$ sets its $\rho_{i}^\ast=\rho_{i,(\rho_0)}^{\ast,\text{sub}}$.
\ENDIF
\ENDIF
\STATE The BS updates $\rho_0=\rho_0+\Delta_{\text{top}}$.
\ENDWHILE
\STATE \textbf{Output:} Each MU $i$ outputs $\rho_{i}^\ast$ as the solution of Problem (SINR-ME-TopP).

\end{algorithmic}
\end{algorithm}

Line 5 checks the feasibility of Problem (SINR-MES-SubP) based on (27) and (37), and avoids invoking Algorithm (Dis-Sub) when Problem (SINR-MES-SubP) is infeasible, thus saves the computational time. Figure \ref{Figure_ComparisonA1A2} in Sec. \ref{section_numerical} will verify the gain of this operation.

Considering that Algorithm (Dis-Sub) requires no more than $\frac{1-\rho_0}{\Delta_{\text{sub}}}$ iterations, the overall number of iterations required by Algorithm (Dis) will be no more than $\frac{1}{2}\frac{1}{\Delta_{\text{sub}}}\frac{1}{\Delta_{\text{top}}}$.

As a comparison, we emphasize that the centralized Algorithm (Cen) proposed in Sec. \ref{section_centralized} does not require Assumption \ref{assumption_bandwidth} and can be applied to more general settings than the distributed Algorithm (Dis). However, the downside of Algorithm (Cen) is that it only exploits the monotonic structure of Problem (SINR-M-P), and thus requires a longer computational time than Algorithm (Dis). The numerical results in Table \ref{table_algorithmcompare_1} and Fig. \ref{Figure_ComparisonA1A2} will verify this point.

\subsection{Distributed Power Control Algorithm for Achieving $\{\theta_i^\ast\}_{i\in\mathcal{I}}$}
After obtaining $\rho_{i}^\ast$, each MU $i$ can derive the solution for Problem (CMP). MU $i$'s SINR at the AP is $\theta_i^\ast=\frac{\rho_i^\ast}{1-\rho_i^\ast}$ (according to (\ref{eq_M_theta})), and the transmission rate to the AP is $x_{iA}^\ast=W\log_{2}(\frac{1}{1-\rho_i^\ast})$ (according to (\ref{eq_rate_AP})). Meanwhile, MU $i$'s transmission rate to the BS is $x_{iB}^\ast=R_{i}^{\text{req}}+W\log_2(1-\rho_i^\ast)$ (according to (\ref{result_xiB})), and the transmit-power to the BS is $p_{iB}^\ast=\frac{n_B}{g_{iB}}\left(2^{\frac{R_{i}^{\text{req}}}{B}}(1-\rho_i^\ast)^{\frac{W}{B}}-1\right)$ (according to (\ref{result_piB})).

However, as shown in (\ref{eq_piA_thetaiA}), all MUs' $\{p_{iA}^\ast\}_{i\in\mathcal{I}}$ are coupled together. Fortunately, by setting $\theta_i^\ast$ as MU $i$'s \textit{targeted SINR}, the MUs can adopt the \textit{minimum power control scheme} \cite{Paper:PowerConvergence} to reach $\{p_{iA}^\ast\}_{i\in\mathcal{I}}$ in a distributed manner. In the minimum power control scheme, each MU initializes its $p_{iA}=0$ and then keeps on updating $p_{iA}$ based on its measured SINR for achieving $\theta_i^\ast$ targeted. Due to the space limitation, we skip the detailed description of this power control scheme, and interested readers please refer to \cite{Paper:PowerConvergence} for the details. We emphasize that, as the solution for Problem (SINR-P), $\{\theta_i^\ast\}_{i\in\mathcal{I}}$ is guaranteed to be feasible to all MUs' transmit-power constraints, which guarantees that the resulting $\{p_{iA}^\ast\}_{i\in\mathcal{I}}$ (obtained by the minimum power control scheme) is also feasible to the original Problem (CMP).

\section{Numerical Results}
\label{section_numerical}
We consider a network scenario that the BS is located at the origin, and the small-cell AP is located at $(350\text{m},0\text{m})$. The MUs are randomly located within a circle, whose center is $(320\text{m},0\text{m})$ and the radius is $20$m. This means that the MUs are closer to the AP than to the BS (otherwise, there is little benefit of considering traffic offloading). We use the similar method as \cite{Paper:Channel} to model the channel power gain, i.e., $g_{iA}=\frac{\varrho_{iA}}{l_{iA}^\kappa}$, where $l_{iA}$ denotes the distance between MU $i$ and the AP, and $\kappa$ denotes the power-scaling factor for the path-loss (we set $\kappa=4$). We further assume that $\varrho_{iA}$ follows an exponential distribution with unit mean due to channel fading.  Figure \ref{Figure_topologys} plots two examples of the network scenarios (namely, an 8-MU scenario and a 12-MU scenario), which will be used in the following simulations.

\begin{figure}[tbph]
\centering
\subfigure[8-MU scenario: 8 MUs are randomly located]{
\label{Fig.8MUs}
\includegraphics[width=7cm,height=6cm]{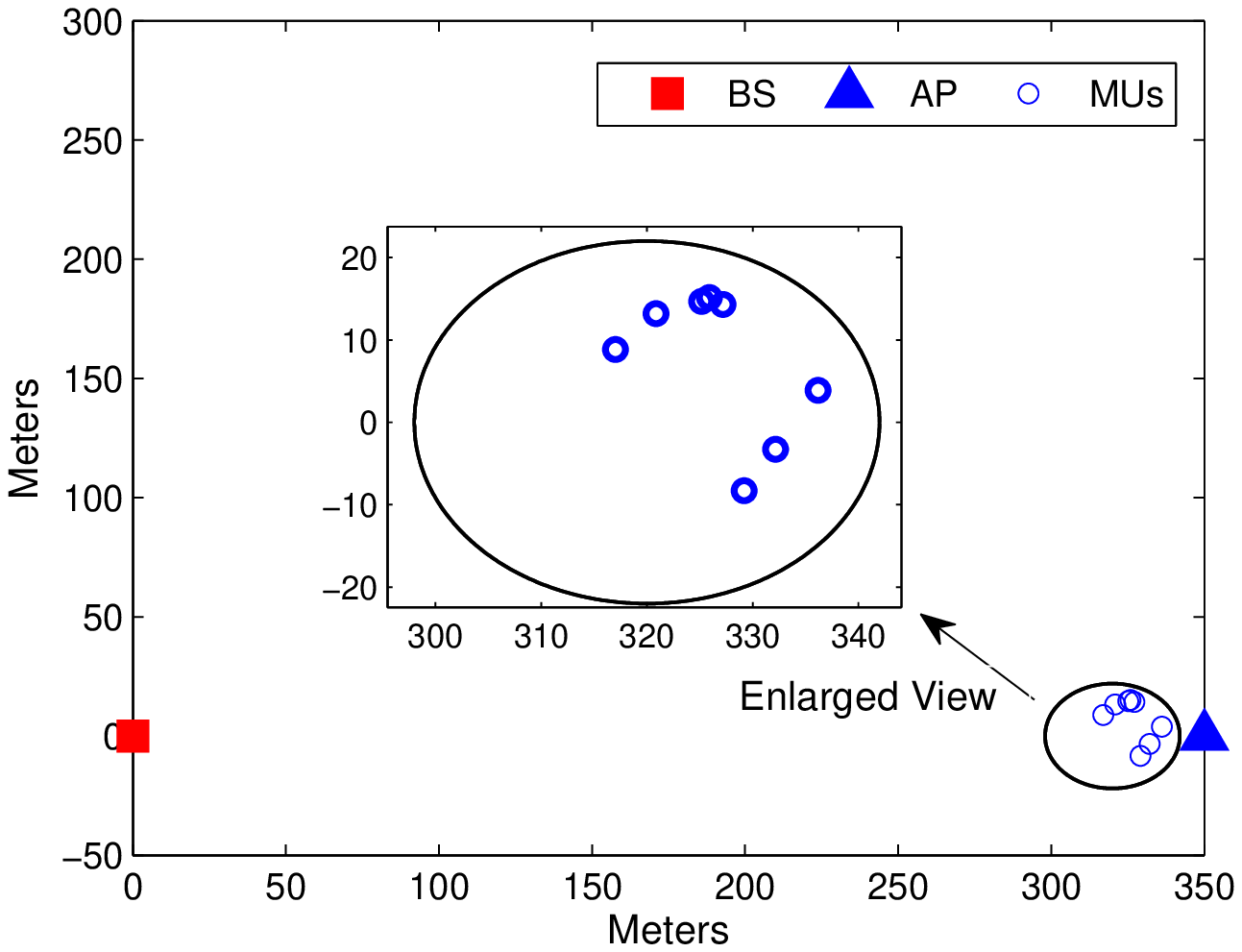}}
\subfigure[12-MU scenario: 12 MUs are randomly located]{
\label{Fig.12MUs}
\includegraphics[width=7cm,height=6cm]{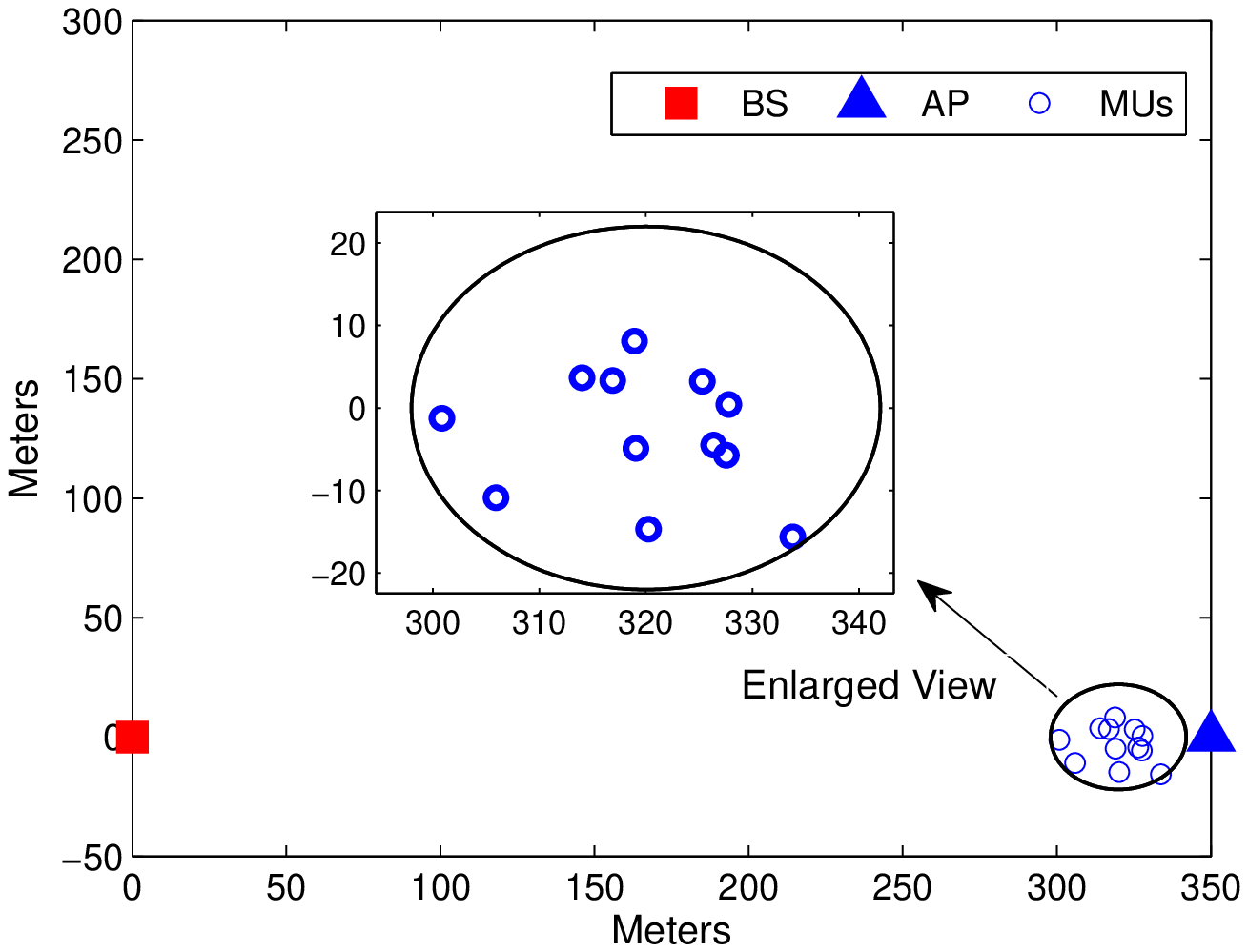}}
\caption{Illustration of network topologies. For the sake of clear presentation, we plot the MUs' positions in an enlarged view.}
\label{Figure_topologys}
\end{figure}

We set the total channel bandwidth of the AP shared by all MUs as $W=20$MHz (802.11a/b/g/n standard \cite{Paper:BandwidthAP}) and the channel bandwidth of the BS to a single MU as $B=5$MHz (close to a WCDMA channel \cite{Paper:PowerNI}). For each MU, we set $P_{iB}^{\max}=0.25$W (i.e., Power Class-3 of mobile devices) , $P_{iA}^{\max}=0.2$W \cite{Paper:PowerNI}, $P_i^{\max}=0.35$W, and $n_0=1\times 10^{-15}$W/Hz. To account for the economic cost, we set $\pi_{B}=\$10/$GB \cite{Paper:Price} and $\pi_{A}=\$2/$GB. As stated earlier, we use $\Delta_{\text{top}}=0.005$ in Algorithms (Cen) and (Dis), and we validate the choice of $\Delta_{\text{top}}=0.005$ at the end of this section by comparing different $\Delta_{\text{top}}$.

\subsection{Accuracy and Efficiency of Algorithm (Cen) and Algorithm (Dis)}

\begin{table}[htbp]
        \caption{Comparison between Algorithm (Cen) and Algorithm (Dis): the 8-MU scenario
        \vspace{-0.4in}}
        \begin{center}
        \begin{tabular}{|c||c|c|c|c|c|c|c|}
  \hline
&$R_i^{\text{req}}=2\text{Mbps}$&$R_i^{\text{req}}=3\text{Mbps}$&$R_i^{\text{req}}=4\text{Mbps}$&$R_i^{\text{req}}=5\text{Mbps}$&$R_i^{\text{req}}=6\text{Mbps}$&$R_i^{\text{req}}=7\text{Mbps}$&$R_i^{\text{req}}=8\text{Mbps}$\\
 \hline
LINGO&0.032,\text{~}3649s&0.048,\text{~}3626s&0.073,\text{~}3668s&0.149,\text{~}3579s&0.225,\text{~}3614s&0.303,\text{~}3643s&0.381,\text{~}3629s\\
\hline
Cen&0.033,\text{~}98.1s&0.049,\text{~}120.2s&0.074,\text{~}123.1s&0.152,\text{~}75.7s&0.232,\text{~}71.6s&0.312,\text{~}68.8s&0.393,\text{~}54.7s\\
 \hline
Dis&0.032,\text{~}2.91s&0.049,\text{~}3.26s&0.074,\text{~}3.57s&0.150,\text{~}3.35s&0.226,\text{~}3.03s&0.306,\text{~}2.69s&0.392,\text{~}2.51s\\
\hline
\end{tabular}
\end{center}
\vspace{-0.2in}
\label{table_algorithmcompare_1}
\end{table}

Table \ref{table_algorithmcompare_1} presents the accuracy and computational efficiency of Algorithm (Cen) and Algorithm (Dis). We consider an 8-MU scenario with the randomly generated channel gains $\{g_{iA}\}_{i\in\mathcal{I}}=[0.1256,2.8108,0.2201,\\0.0381, 0.5091,0.2528,1.4989,0.6081]\times 10^{-4}$ and $\{g_{iB}\}_{i\in\mathcal{I}}=[2.5279.0.6211,1.2604,0.5815,2.5812,\\1.1777,2.6028,2.3551]\times 10^{-8}$. We vary each MU's $R_i^{\text{req}}$ from 2Mbps to 8Mbps (as Problem (CMP) is infeasible when $R_i^{\text{req}}=9$Mbps). For each tested case, the first number represents the minimum total cost and the second number represents the computational time (measured in seconds and obtained by a PC with Intel Core i7-4610M CPU$@$3.00GHz and 8.00GB RAM). To validate Algorithms (Cen) and (Dis), we adopt the global-solver of LINGO to solve Problem (CMP) and obtain the minimum cost. Due to the non-convexity of Problem (CMP), it often takes LINGO a very long time to get an result.

The results in Table \ref{table_algorithmcompare_1} show that Algorithms (Cen) and (Dis) can achieve the results very close to LINGO (with the average relative error equal to $2.89\%$), while both algorithms consume a significantly less computational time than LINGO, because they exploit the special structural properties of the corresponding problems.

\begin{table}[htbp]
        \caption{Comparison between Algorithm (Cen) and LINGO when $W<B$ ($W$=4MHz,$B=5$MHz): the 8-MU scenario}
        \vspace{-0.4in}
        \begin{center}
        \begin{tabular}{|c||c|c|c|c|c|c|c|}
  \hline
&$R_i^{\text{req}}=1\text{Mbps}$&$R_i^{\text{req}}=1.5\text{Mbps}$&$R_i^{\text{req}}=2\text{Mbps}$&$R_i^{\text{req}}=2.5\text{Mbps}$&$R_i^{\text{req}}=3\text{Mbps}$&$R_i^{\text{req}}=3.5\text{Mbps}$&$R_i^{\text{req}}=4\text{Mbps}$\\
 \hline
LINGO&0.03,\text{~}3615s&0.069,\text{~}3609s&0.107,\text{~}3652s&0.146,\text{~}3587s&0.184,\text{~}725s&0.225,\text{~}233s&0.276,\text{~}48s\\
\hline
Cen&0.031,\text{~}54.1s&0.07,\text{~}39.7s&0.11,\text{~}34.5s&0.15,\text{~}28.2s&0.189,\text{~}22.3s&0.228,\text{~}15.7s&0.267,\text{~}7.7s\\
\hline
\end{tabular}
\end{center}
\vspace{-0.2in}
\label{table_algorithmcompare_2}
\end{table}

In Table \ref{table_algorithmcompare_2}, we compare the performance of Algorithm (Cen) and LINGO when $W<B$. The simulation setup is the same as Table I, except that we change $W=4$MHz. In this case, the average relative error of Algorithm (Cen) (compared to LINGO) is $0.67\%$. Notice that Algorithm (Dis) is not applicable in this case because Assumption \ref{assumption_bandwidth} does not hold.

Table \ref{table_algorithmcompare_1} shows that Algorithm (Dis) requires an even shorter time than Algorithm (Cen), because Algorithm (Dis) avoids the complicated procedures for poly-block approximation but exploiting the concave minimization property. Figure \ref{Figure_ComparisonA1A2} presents a detailed comparison of the computational time used by Algorithms (Cen) and (Dis). We use the above 8-MU scenario and another 12-MU scenario in Figure \ref{Figure_ComparisonA1A2}. For the 12-MU scenario, the random channel gains are $\{g_{iA}\}_{i\in\mathcal{I}}=[0.5126,0.8072,2.3568,5.8244,0.6845, 7.1370,\\1.7175,0.4732, 6.2617, 6.7279, 6.0522,0.4312] \times 10^{-5}$, and
$\{g_{iB}\}_{i\in\mathcal{I}}=[2.4780,2.7843,0.5868, 2.7794,\\2.5727,2.4645, 0.6994,0.8504, 2.0981,0.6946,2.6106, 2.9921]\times 10^{-8}$ (for the 12-MU case, Problem (CMP) is infeasible when $R_i^{\text{req}}=8$Mbps).

The top and middle subplots in Fig. \ref{Figure_ComparisonA1A2} show that the computational times used by Algorithms (Cen) and (Dis) first increase as $R_i^{\text{req}}$ increases, and then decrease as $R_i^{\text{req}}$ further increases. To gain a deep understanding of this trend, the bottom subplot plots the \textit{feasible-ratio} when enumerating $\rho_0\in(0,1]$. The feasible-ratio is given by the number of feasible enumerated $\rho_0$ (\textit{which yields a feasible subproblem}) divided by 200 (because of $\Delta_{\text{top}}=0.005$). The bottom subplot shows that the feasible-ratio also first increases as $R_i^{\text{req}}$ increases and then decreases as $R_i^{\text{req}}$ increases in further, which is consistent with the trends in the top and middle subplots. This means that the computational complexity of both Algorithms (Cen) and (Dis) are closely related to the number of enumerated $\rho_0$ which yields a feasible problem. It also demonstrates the effectiveness of removing those infeasible $\rho_0$ from the computation.

\vspace{-0.2in}
\begin{figure}[htbp]
\begin{minipage}[thbp]{0.5\textwidth}
\includegraphics[scale=0.5]{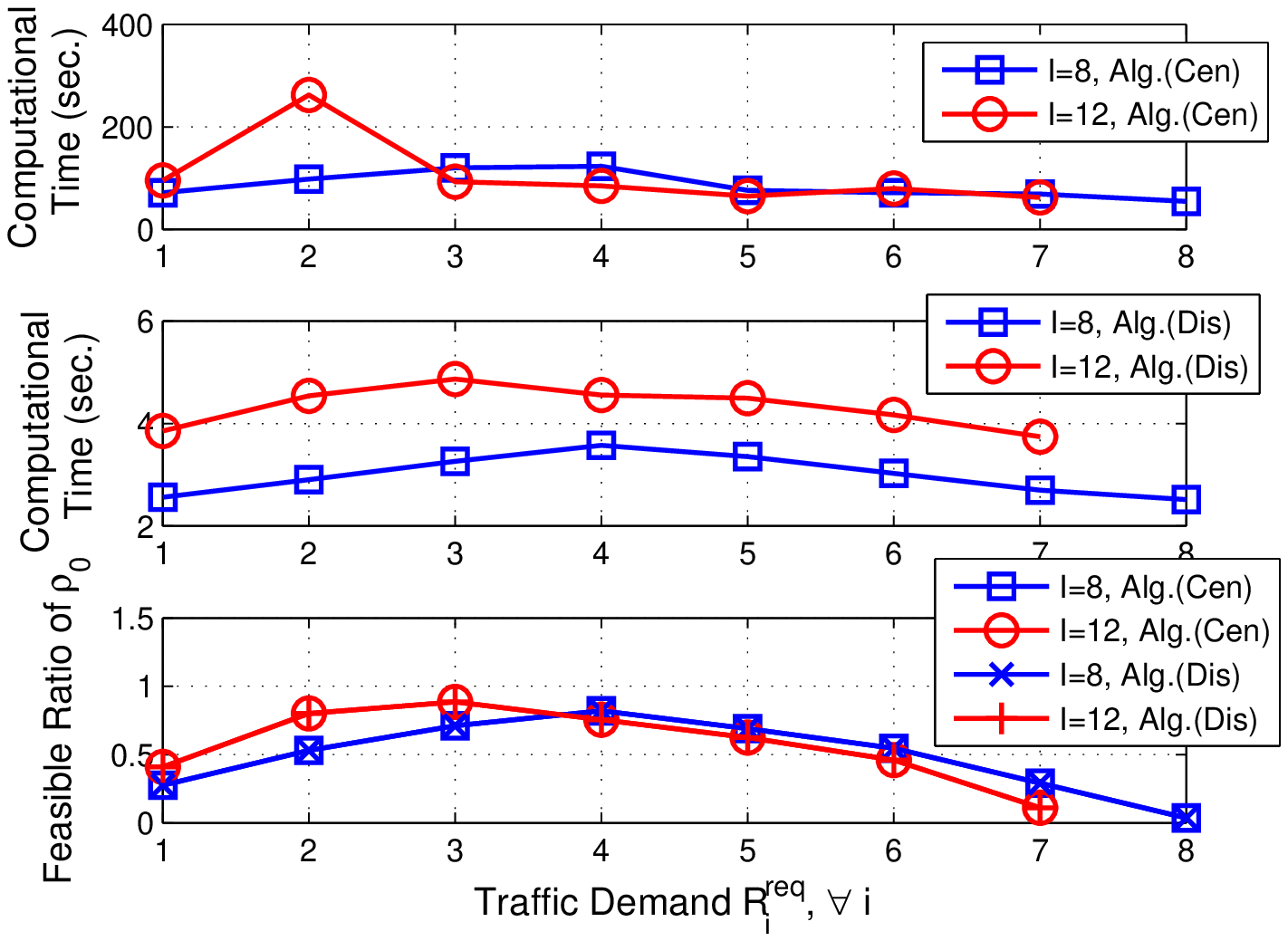}
\vspace{-0.2in}
\caption{Comparison of Computational Time. Top: Time \text{~~~~} used by Alg. (Cen); Middle: Time used by Alg. (Dis); \text{~~~~~~~} Bottom: Feasible-ratio when enumerating $\rho_0\in(0,1]$.}\label{Figure_ComparisonA1A2}
\end{minipage}
\begin{minipage}[thbp]{0.5\textwidth}
\includegraphics[scale=0.5]{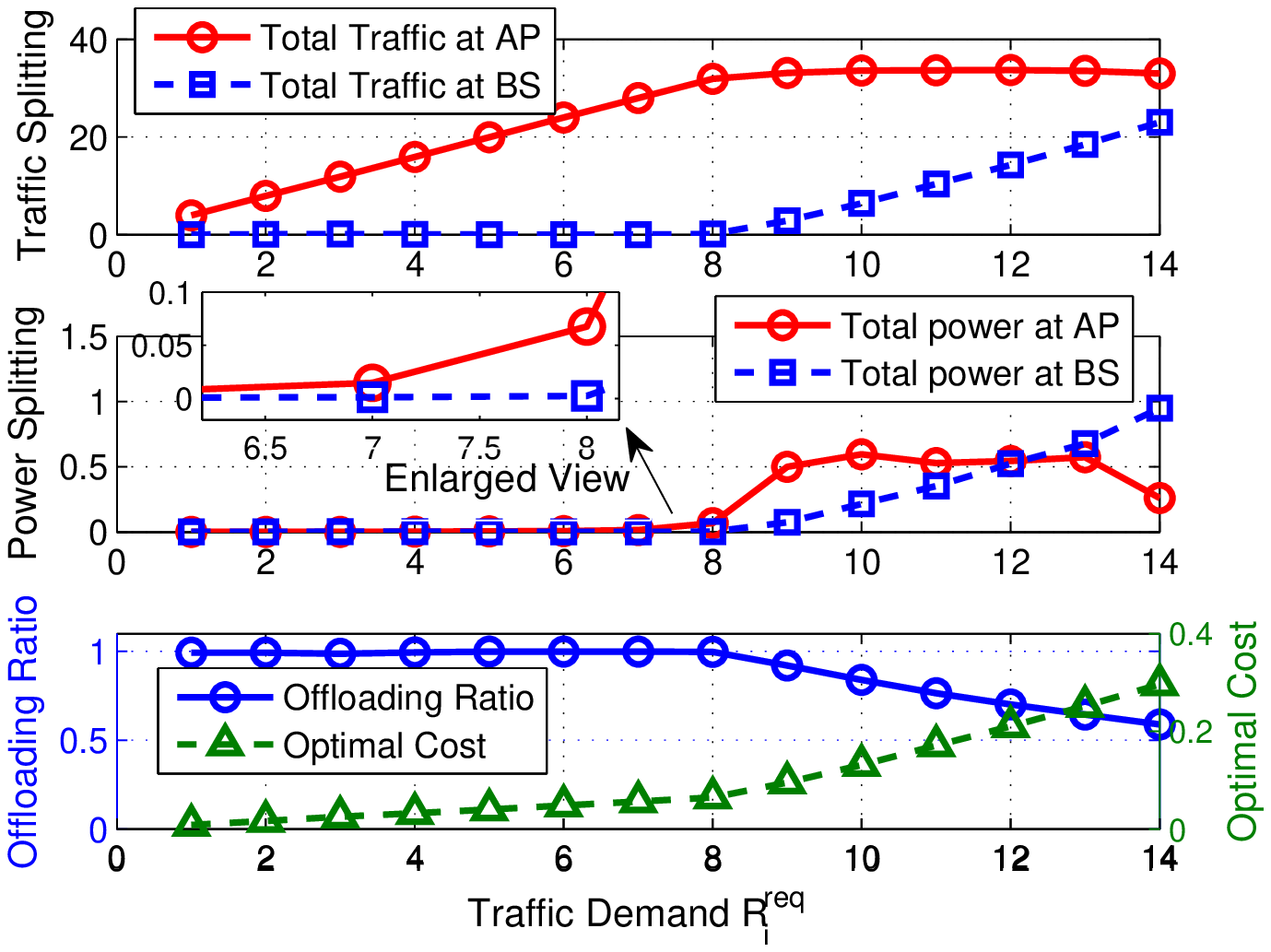}
\vspace{-0.2in}
\caption{Example of Offloading Process for a 4-MU scenario. Top: Traffic to the AP and the BS. Middle: Transmit-power to the AP and the BS. Bottom: Offloading ratio and the total cost.}\label{Figure_Offloading_Ratio}
\end{minipage}
\end{figure}

\subsection{Illustration of Offloading Process}
Figure \ref{Figure_Offloading_Ratio} shows a detailed offloading process as $R_{i}^{\text{req}}$ increases. We consider a 4-MU scenario, in which the random channel power gains are $\{g_{iA}\}_{i\in\mathcal{I}}=[1.2709,0.6407,0.7771,0.8638]\times 10^{-5}$ and $\{g_{iB}\}_{i\in\mathcal{I}}=[3.3164,2.8765,1.4029,2.7934]\times 10^{-8}$. We vary each MU's $R_{i}^{\text{req}}$ from $1$Mbps to $14$Mbps (as Problem (CMP) is infeasible when $R_i^{\text{req}}=15$Mbps). In Fig. \ref{Figure_Offloading_Ratio}, the top subplot shows the traffic to the AP and the BS. The middle subplot shows the corresponding total transmit-powers to the AP and the BS, and the bottom subplot shows the offloading ratio (i.e., the traffic delivered to the AP over the total demand) and the consequent total cost.

\begin{figure}[tbph]
\centering
\subfigure[Each MU's Traffic Scheduling to the AP and BS. \text{~~} Top: Traffic to the AP. Bottom: Traffic to the BS.]{
\label{Figure_Offloading_MUThroughput}
\includegraphics[scale=0.5]{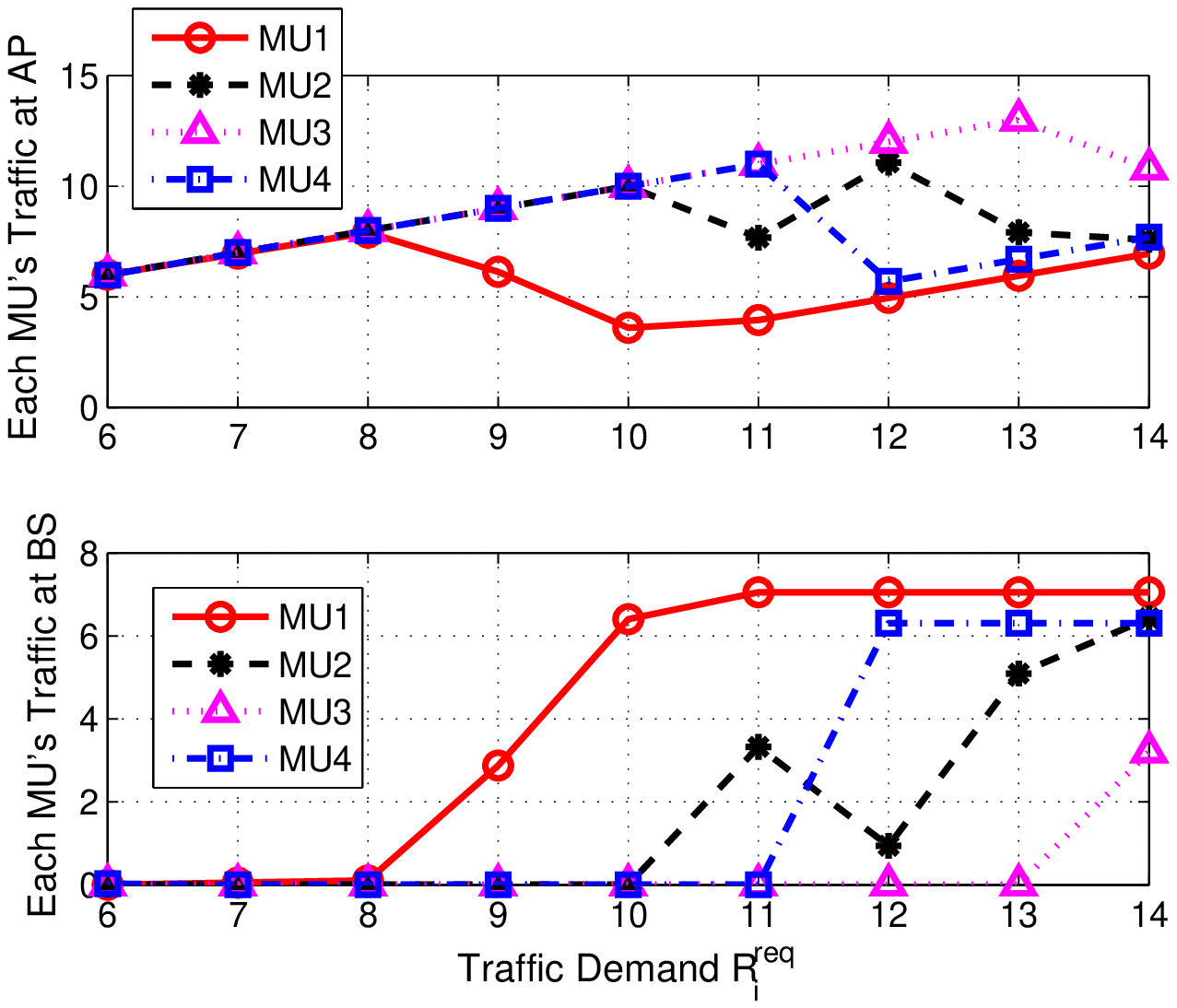}}
\subfigure[Power Allocation. Top: Transmit-Power to the AP. Middle: Transmit-Power to the BS. Bottom: Total Power.]{
\label{Figure_Offloading_MUPower}
\includegraphics[scale=0.5]{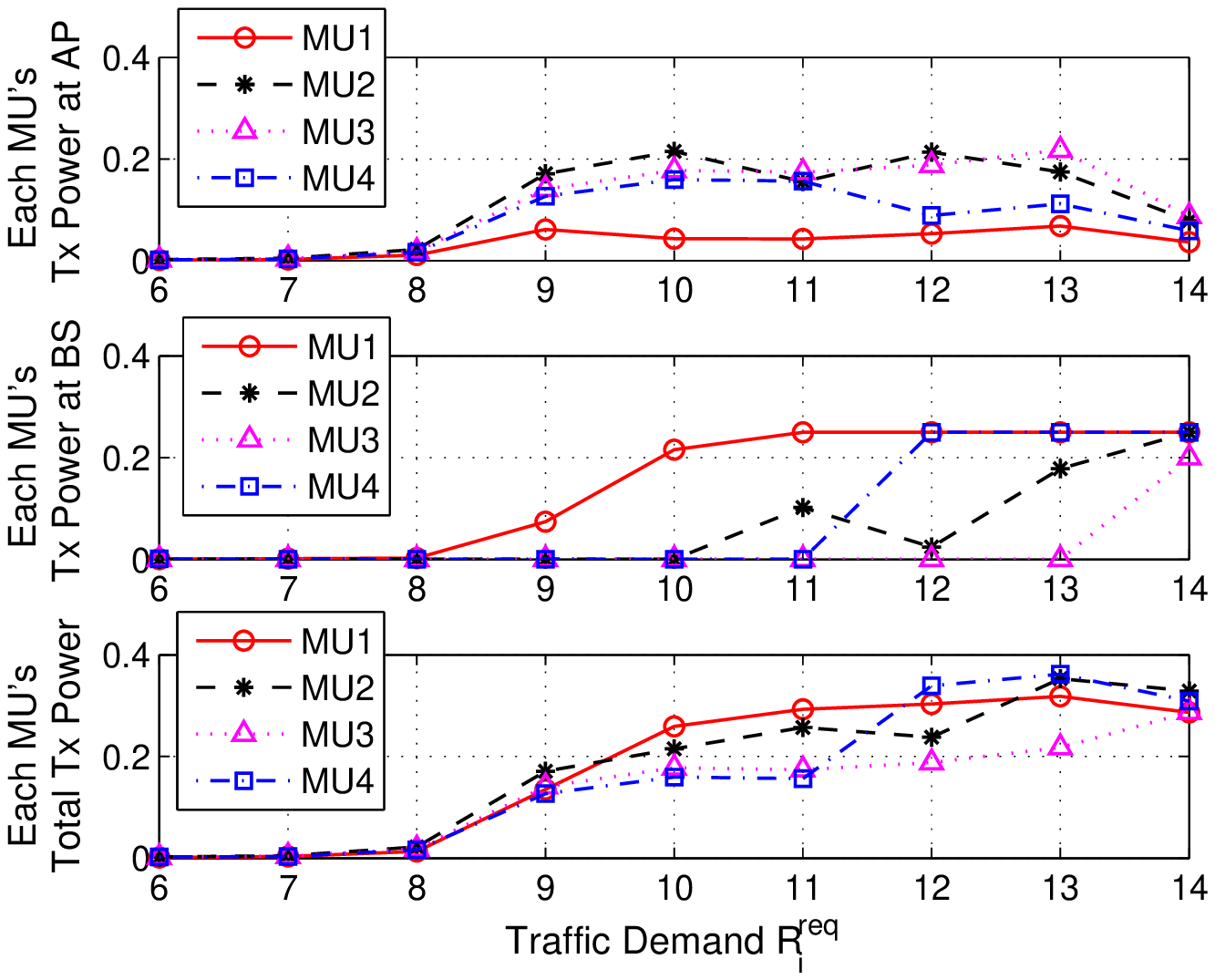}}
\vspace{-0.1in}
\caption{Detailed changes of each MU's traffic scheduling and the corresponding transmit-power allocations.}
\label{Figure_Offloading_Example}
\end{figure}

As shown in the top subplot, when the MUs' traffic demands are low (i.e., $R_{i}^{\text{req}}\leq 8\text{Mbps}$), all demands are offloaded to the AP, yielding the complete-offloading situation (i.e., Proposition \ref{proposition_special}). Accordingly, the bottom subplot shows the offloading-ratio equal to 1. In fact, we can verify that when $R_{i}^{\text{req}}\leq 8\text{Mbps}$, the conditions in Proposition \ref{proposition_special} are satisfied. In the complete-offloading situation, the MUs' transmit-powers to the AP are usually very small, which correspond to the enlarged view in the middle subplot.

However, when the MUs' traffic demands further increase (i.e., $R_{i}^{\text{req}} \geq 9\text{Mbps}$), the mutual interference among the MUs becomes significant. As a result, the MUs' transmit-power constraints can no longer afford offloading all demands to the AP. In fact, we can verify that the two conditions in Proposition \ref{proposition_special} fail to hold when $R_{i}^{\text{req}} \geq 9\text{Mbps}$. As a result, some MU starts to deliver its traffic to the BS (as shown in the top subplot), and the offloading ratio starts to decrease (as shown in the bottom subplot). Correspondingly, the MU needs to allocate their transmit-powers to the AP and BS for accommodating its traffic scheduling, and the MUs' transmit-powers to the BS increases (as shown in the middle subplot).

To gain deeper insights of the results in Fig. \ref{Figure_Offloading_Ratio}, we further plot each individual MU's traffic scheduling in Fig. \ref{Figure_Offloading_MUThroughput} and the transmit-powers in Fig. \ref{Figure_Offloading_MUPower}. The two figures show that the MUs' traffic offloading decisions and the transmit-power allocations are strongly correlated due to the mutual interference at the AP. In particular, we observe that as the MUs' traffic demands increase (i.e., $R_{i}^{\text{req}}\geq 8$Mbps), MU $1$ (whose channel gain $g_{1B}$ is the largest) first starts to re-direct its traffic to the BS, and then MU $2$ (whose $g_{2B}$ is the second largest) follows. Such a result is consistent with the intuition, because a larger channel gain to the BS requires a smaller transmit-power to achieve the same data rate to the BS. The middle and bottom subplots of Fig. \ref{Figure_Offloading_MUPower} also show that each MU's transmit-power limit to the BS (i.e., $P_{iB}^{\max}=0.25$W) and total power budget (i.e., $P_{i}^{\max}=0.25$W) eventually become tight as the MUs' demands increase. That is why Problem (CMP) is infeasible when $R_i^{\text{req}}=15$Mbps.

\subsection{Advantages of Proposed Traffic Offloading}
Figure \ref{Figure_Offloading_EconomicCost} compares the performances of the proposed traffic offloading with two other schemes, i.e., the \textit{zero-offloading scheme} and the \textit{fixed-offloading scheme}. In the zero-offloading scheme, no MU's traffic is offloaded to the AP; while in the fixed-offloading scheme, each MU offloads 50\% of its demand to the AP. We use the 4-MU and 8-MU scenarios used before to evaluate the performances of these two schemes. Figure \ref{Figure_Offloading_EconomicCost} validates that the total cost can be greatly reduced by using our proposed traffic offloading scheme\footnote{We use the result of Algorithm (Dis) to denote the proposed traffic offloading scheme in this figure, since the performances of Algorithm (Cen) and Algorithm (Dis) are very close.}. In both subplots, the proposed offloading can save more than 75\% of the total cost obtained by the zero-offloading scheme, and more than 65\% of the total cost of the fixed-offloading scheme.

\begin{figure}[tbph]
  \centering
  \includegraphics[scale=0.5]{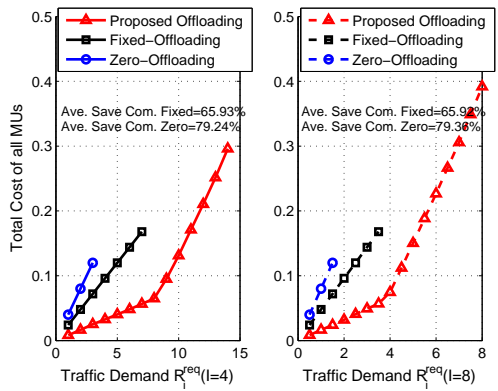}\\
  \vspace{-0.2in}
  \caption{Performance comparisons among the proposed traffic offloading (obtained by Algorithm (Dis)), the zero-offloading scheme, and the fixed-offloading scheme. Left: the 4-MU scenario. Right: the 8-MU scenario} \label{Figure_Offloading_EconomicCost}
\end{figure}

Figure \ref{Figure_Offloading_EconomicCost} also validates that the proposed offloading can increase the network capacity in terms of accommodating more MUs' traffic demands. Specifically, the left subplot shows that the zero-offloading cannot support a demand more than 3Mbps for each MU, and the fixed-offloading cannot support a demand more than 7Mbps for each MU. In comparison, the proposed traffic offloading can accommodate each MU's demand up to $R_{i}^{\text{req}}=14$Mbps. Similar results also appear in the right subplot.

\subsection{Performance under Different Locations of the MUs}
We further evaluate the performance of the proposed offloading under different locations of the MUs. To this end, we vary the center of the circle within which the MUs are randomly located (as described at the beginning of Section \ref{section_numerical}) according to $(170\text{m},0\text{m}), (220\text{m},0\text{m}),(270\text{m},0\text{m})$ and $(320\text{m},0\text{m})$. This corresponds to that the MUs are gradually moving away from the BS and closer to the AP. For each location, we independently and randomly generate 100 different sets of the MUs' locations. Figure \ref{Figure_Offloading_Locations} shows the corresponding average results, in which we plot the results for three cases, i.e., each MU's demand $R_i^{\text{req}}=4,8,$ and $12$Mbps. Subplot \ref{Fig.AP} shows the average total traffic offloaded to the AP versus different locations. It shows that when the MUs are closer to the AP, more traffic demands are offloaded to the AP, which is attributed to the better channel gains between the MUs and the AP. Subplot \ref{Fig.BS} shows that less traffic are delivered through the BS as MUs move closer to the AP. Finally, Subplot \ref{Fig.TotalCost} shows the total cost decreases when the MUs are closer to the AP due to effective offloading.

\vspace{-0.2in}
\begin{figure}[tbph]
\centering
\subfigure[Ave. Throughput at the AP]{
\label{Fig.AP}
\includegraphics[width=4.6cm,height=3.6cm]{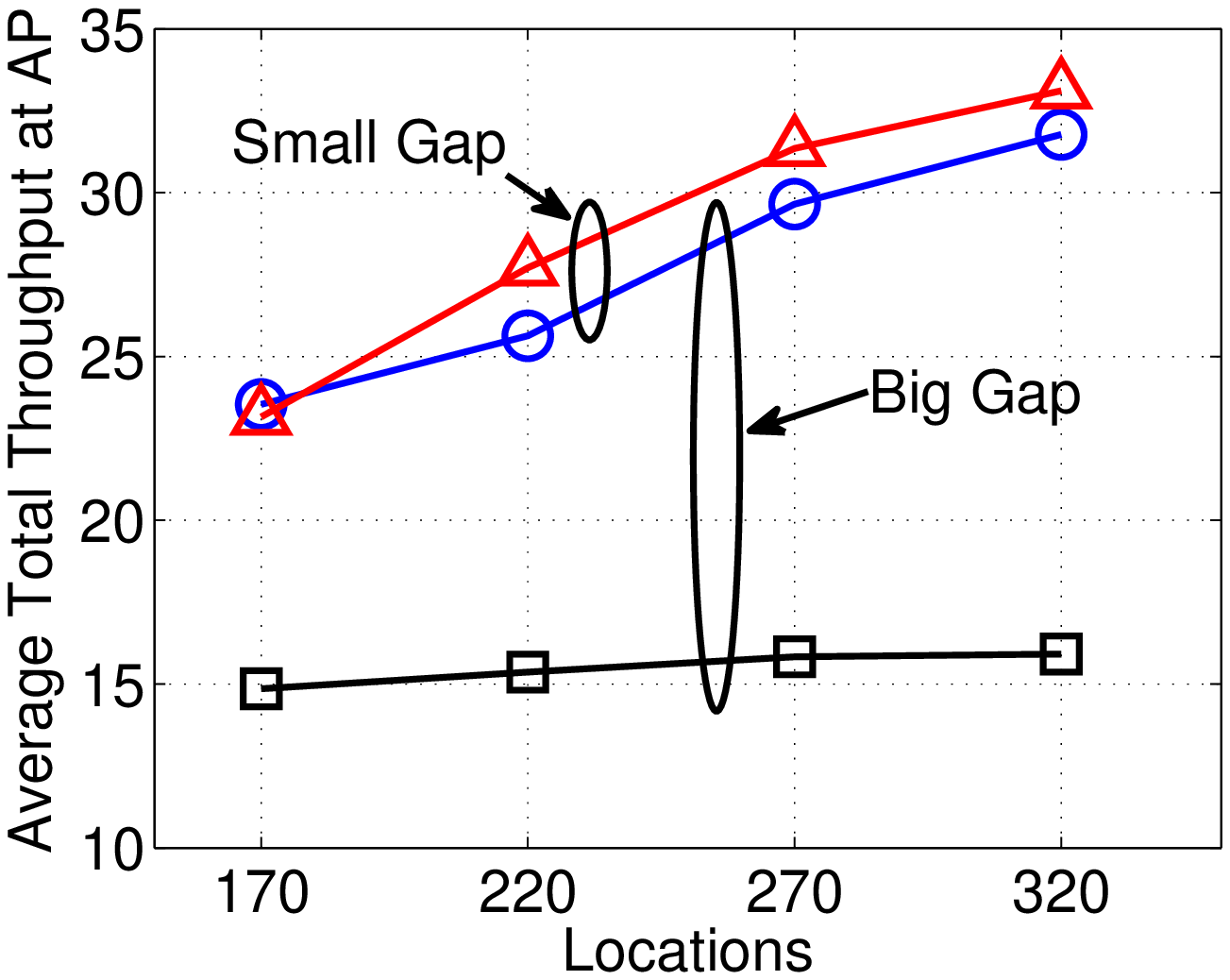}}
\subfigure[Ave. Throughput at the BS]{
\label{Fig.BS}
\includegraphics[width=4.6cm,height=3.6cm]{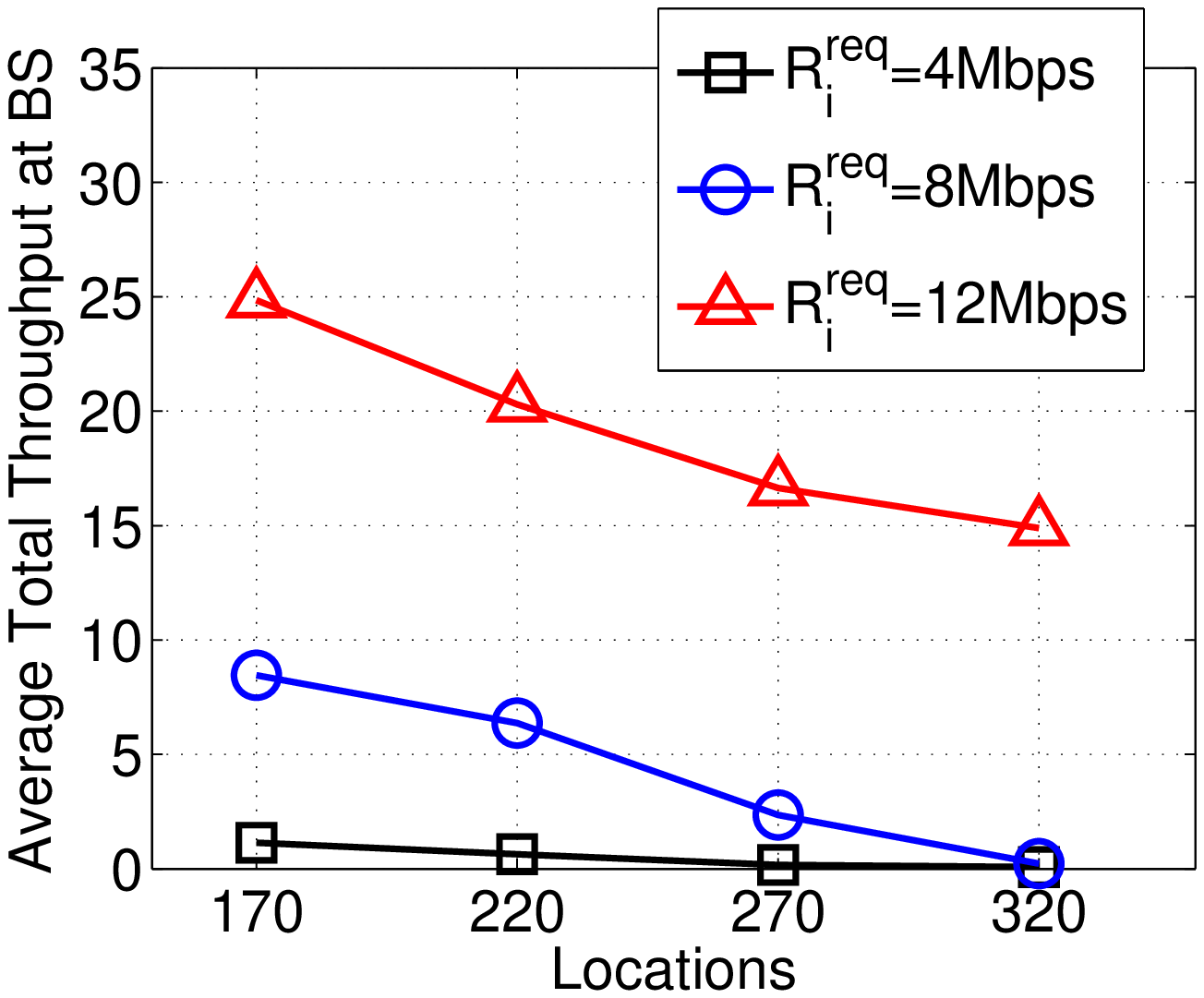}}
\subfigure[Ave. Total Cost]{
\label{Fig.TotalCost}
\includegraphics[width=4.6cm,height=3.6cm]{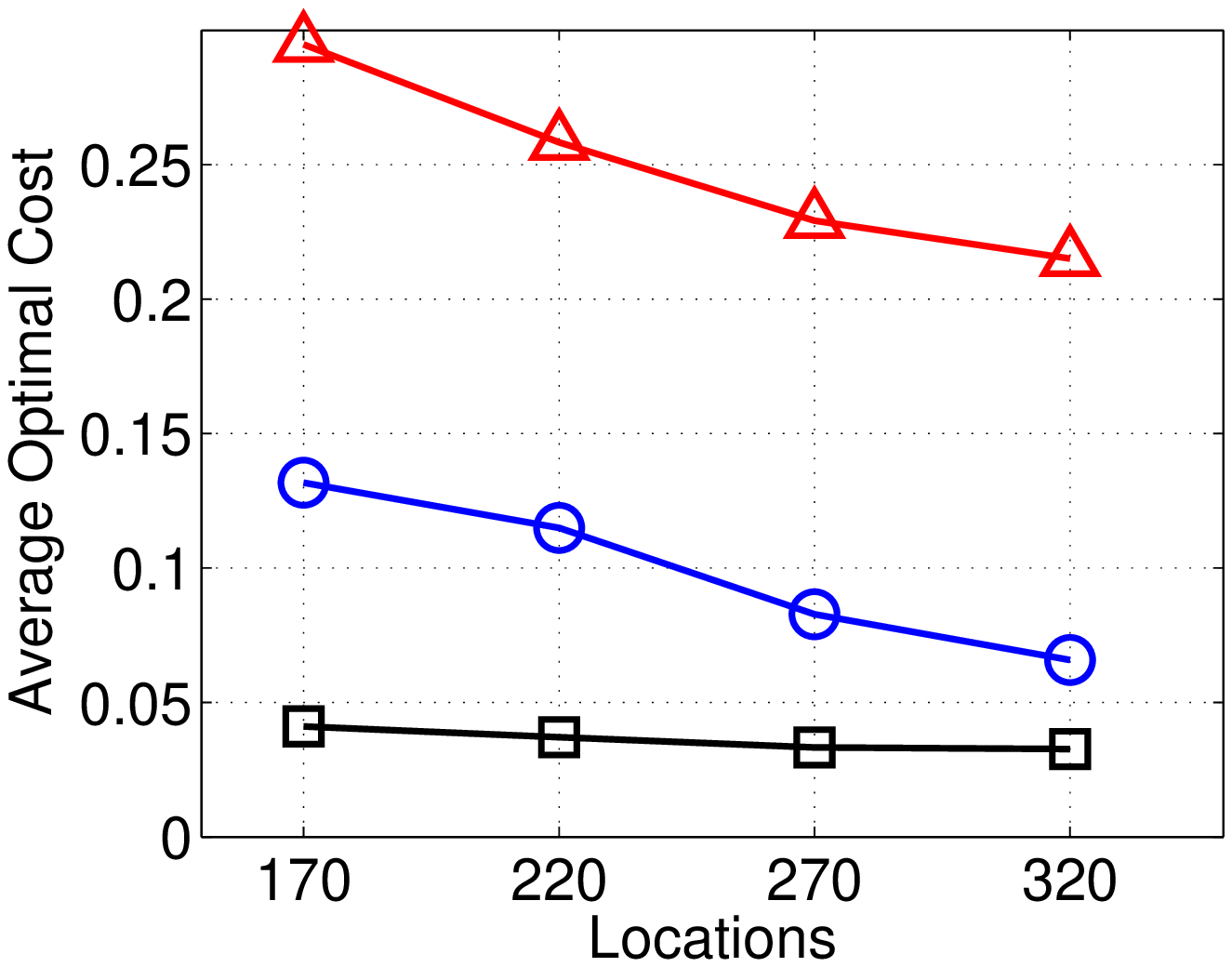}}
\vspace{-0.1in}
\caption{Performance of the traffic offloading under different locations of the MUs.}
\label{Figure_Offloading_Locations}
\end{figure}

An interesting observation in Subplot \ref{Fig.AP} is that the difference of offloaded traffic is marginal when the traffic demand increases from $R_{i}^{\text{req}}=8\text{Mbps}$ (the line marked with circles) to $R_{i}^{\text{req}}=12\text{Mbps}$ (the line marked with triangles). However, the difference is very significant when the demand increases from $R_{i}^{\text{req}}=4\text{Mbps}$ (the line marked with squares) to $R_{i}^{\text{req}}=8\text{Mbps}$. The result is consistent with the intuition that offloading traffic to AP becomes less attractive due to the heavy interference at the AP when the MUs' traffic demands are high. In fact, as shown in Subplot \ref{Fig.BS}, most traffic demands are delivered to the BS when each MU's demand increases from $8\text{Mbps}$ to $12\text{Mbps}$.

\subsection{Evaluating the Impact of $\Delta_{\text{top}}$ used in Algorithms (Cen) and (Dis)}

\vspace{-0.1in}
\begin{table}[htbp]
        \caption{Optimal Cost and Computational Time with Different $\Delta_{\text{top}}$ (Algorithm (Cen))}
        \vspace{-0.5in}
        \begin{center}
        \begin{tabular}{|c||c|c|c|c|c|c|}
  \hline
 & $R_i^{\text{req}}=3\text{Mbps}$ &$R_i^{\text{req}}=4\text{Mbps}$&$R_i^{\text{req}}=5\text{Mbps}$&$R_i^{\text{req}}=6\text{Mbps}$&$R_i^{\text{req}}=7\text{Mbps}$&$R_i^{\text{req}}=8\text{Mbps}$\\
 \hline
$\Delta_{\text{top}}=0.01$&0.0253,\text{~~}34.2s&0.0341,\text{~~}36.1s&0.0418,\text{~~}39.1s&0.0511,\text{~~}40s&0.0565,\text{~~}37.4s&0.0666,\text{~~}37.3s\\
\hline
$\Delta_{\text{top}}=0.005$&0.0253,\text{~~}64.3s&0.0327,\text{~~}69.1s&0.0404,\text{~~}74.9s&0.0497,\text{~~}78.7s&0.0565,\text{~~}73.1s&0.0651,\text{~~}72.3s\\
\hline
$\Delta_{\text{top}}=0.0025$&0.0246,\text{~~}106.1s&0.0327,\text{~~}112.1&0.0404,\text{~~}122.4s&0.0490,\text{~~}173.9s&0.0565,\text{~~}157.9s&0.0643,\text{~~}156.4s\\
\hline
$\Delta_{\text{top}}=0.001$&0.0245,\text{~~}324.9s&0.0325,\text{~~}341.1s&0.0404,\text{~~}370.5s&0.0486,\text{~~}393.5s&0.0565,\text{~~}365.4s&0.0645,\text{~~}358.9s\\
\hline
& $R_i^{\text{req}}=9\text{Mbps}$ &$R_i^{\text{req}}=10\text{Mbps}$&$R_i^{\text{req}}=11\text{Mbps}$&$R_i^{\text{req}}=12\text{Mbps}$&$R_i^{\text{req}}=13\text{Mbps}$&$R_i^{\text{req}}=14\text{Mbps}$\\
\hline
$\Delta_{\text{top}}=0.01$&0.0965,\text{~~}36.6s&0.1328,\text{~~}32.7s&0.1744,\text{~~}31.4s&0.2158,\text{~~}29.9s&0.2565,\text{~~}30.1s&0.2967,\text{~~}30.1s\\
\hline
$\Delta_{\text{top}}=0.005$&0.0950,\text{~~}69.9s&0.1315,\text{~~}62.9s&0.1728,\text{~~}59.8s&0.2142,\text{~~}57.6s&0.2547,\text{~~}57.4s&0.2967,\text{~~}56.7s\\
\hline
$\Delta_{\text{top}}=0.0025$&0.0950,\text{~~}148.9s&0.1315,\text{~~}135.5s&0.1728,\text{~~}125.5s&0.2142,\text{~~}121.9s&0.2547,\text{~~}123.9s&0.2959,\text{~~}122.7s\\
\hline
$\Delta_{\text{top}}=0.001$&0.0950,\text{~~}348.1s&0.1315,\text{~~}317.2s&0.1722,\text{~~}300.3s&0.2133,\text{~~}292.1s&0.2547,\text{~~}288.6s&0.2957,\text{~~}286.9s\\
\hline
\end{tabular}
\end{center}
\label{table_monotonic}
\end{table}
Tables \ref{table_monotonic} and \ref{table_distributed} validate the use of $\Delta_{\text{top}}=0.005$ in Algorithms (Cen) and (Dis) for the linear search. We show that using an $\Delta_{\text{top}}$ smaller than 0.005 yields a very limited improvement on the performance but incurs a significant increase in the computational time. To this end, we use the 4-MU scenario, and try $\Delta_{\text{top}}=0.001,0.0025,0.005$ and $0.01$ in Algorithms (Cen) and (Dis). We show the obtained total cost (the first number in each cell) and the computational time (the second number in each cell) versus different traffic demands. As shown in Table \ref{table_monotonic}, using $\Delta_{\text{top}}=0.005$ in Algorithm (Cen) yields a small average loss of 0.385\% compared to using $\Delta_{\text{top}}=0.001$, while it can reduce $78.76\%$ of the computational time. Similarly, compared to $\Delta_{\text{top}}=0.0025$, using $\Delta_{\text{top}}=0.005$ yields an average loss equal to $0.288\%$, while reducing $46.26\%$ of the computational time. Table \ref{table_distributed} shows similar conclusions for Algorithm (Dis).

\begin{table}[htbp]
        \caption{Optimal Cost and Computational Time with Different $\Delta_{\text{top}}$ (Algorithm (Dis))}
        \vspace{-0.5in}
        \begin{center}
        \begin{tabular}{|c||c|c|c|c|c|c|}
  \hline
 & $R_i^{\text{req}}=3\text{Mbps}$ &$R_i^{\text{req}}=4\text{Mbps}$&$R_i^{\text{req}}=5\text{Mbps}$&$R_i^{\text{req}}=6\text{Mbps}$&$R_i^{\text{req}}=7\text{Mbps}$&$R_i^{\text{req}}=8\text{Mbps}$\\
\hline
$\Delta_{\text{top}}=0.01$&0.0252,\text{~~}0.68s&0.0340,\text{~~}0.71s&0.0417,\text{~~}0.84s&0.0482,\text{~~}1.01s&0.0564,\text{~~}1.01s&0.0664,\text{~~}1.04s\\
\hline
$\Delta_{\text{top}}=0.005$&0.0252,\text{~~}1.45s&0.0327,\text{~~}1.53s&0.0403,\text{~~}1.72s&0.0482,\text{~~}2.04s&0.0564,\text{~~}1.89s&0.0649,\text{~~}1.87s\\
\hline
$\Delta_{\text{top}}=0.0025$&0.0245,\text{~~}2.93s&0.0320,\text{~~}3.29s&0.0403,\text{~~}3.81s&0.0482,\text{~~}3.93s&0.0564,\text{~~}4.17s&0.0643,\text{~~}4.23s\\
\hline
$\Delta_{\text{top}}=0.001$&0.0242,\text{~~}7.22s&0.0321,\text{~~}7.81s&0.0400,\text{~~}8.52s&0.0482,\text{~~}9.21s&0.0561,\text{~~}9.87s&0.0640,\text{~~}9.36s\\
\hline
& $R_i^{\text{req}}=9\text{Mbps}$ &$R_i^{\text{req}}=10\text{Mbps}$&$R_i^{\text{req}}=11\text{Mbps}$&$R_i^{\text{req}}=12\text{Mbps}$&$R_i^{\text{req}}=13\text{Mbps}$&$R_i^{\text{req}}=14\text{Mbps}$\\
\hline
$\Delta_{\text{top}}=0.01$&0.0964,\text{~~}0.91s&0.1326,\text{~~}0.87s&0.1725,\text{~~}0.73s&0.2118,\text{~~}0.64s&0.2531,\text{~~}0.62s&0.2960,\text{~~}0.65s\\
\hline
$\Delta_{\text{top}}=0.005$&0.0950,\text{~~}1.87s&0.1313,\text{~~}1.54s&0.1710,\text{~~}1.41s&0.2104,\text{~~}1.39s&0.2516,\text{~~}1.28s&0.2960,\text{~~}1.29s\\
\hline
$\Delta_{\text{top}}=0.0025$&0.0950,\text{~~}3.73s&0.1313,\text{~~}3.24s&0.1710,\text{~~}2.98s&0.2104,\text{~~}2.91s&0.2516,\text{~~}2.54s&0.2951,\text{~~}2.51s\\
\hline
$\Delta_{\text{top}}=0.001$&0.0949,\text{~~}8.52s&0.1313,\text{~~}7.94s&0.1701,\text{~~}7.17s&0.2104,\text{~~}6.71s&0.2516,\text{~~}6.18s&0.2951,\text{~~}6.35s\\
\hline
\end{tabular}
\end{center}
\label{table_distributed}
\end{table}

\section{Conclusion}
\label{section_conclusion}
In this paper, we investigated the joint optimization of traffic scheduling and transmit-power allocations for MUs' traffic offloading with dual-connectivity. In spite of the non-convexity of joint optimization problem, we proposed two efficient algorithms, namely, the centralized algorithm and the distributed algorithm, to minimize the MUs' total cost. The numerical results validated the effectiveness and the computational efficiency of the proposed algorithms, and showed that the proposed traffic offloading can significantly reduce the MUs' total cost.

Regarding the future work, we will consider the more general case that the MUs' traffic demands requirements might not be feasible, and propose an admission control scheme to ensure the feasibility of Problem (CMP). Moreover, we will consider multiple coexisting APs and explore the possibility of allowing different MUs to offload traffic through different APs.


\section*{Appendix I: Proof of Proposition \ref{proposition_power_theta}}
\label{proof_propos1}
Based on (\ref{eq_sinr}), we first show that given $\{\theta_{i}\}_{i\in\mathcal{I}}$, $\{p_{iA}\}_{i\in\mathcal{I}}$ can be uniquely given by (\ref{eq_piA_thetaiA}). The details are as follows. We introduce a variable $T=\sum_{i\in\mathcal{I}} {p_{iA}g_{iA}}+n_A$.
Using $T$ and (\ref{eq_sinr}), we can derive $\theta_i=\frac{p_{iA}g_{iA}}{T-p_{iA}g_{iA}}$
which can be translated into $p_{iA}g_{iA}=T\frac{\theta_i}{1+\theta_i},\forall i\in\mathcal{I}$. By summarizing this equation on two sides over all MUs and performing some manipulations, we obtain $T=\frac{n_A}{1-\sum_{i\in\mathcal{I}} {\frac{\theta_i}{1+\theta_i}}}$, which leads to (\ref{eq_piA_thetaiA}). Notice that the condition $\sum_{i\in\mathcal{I}} \frac{\theta_i}{1+\theta_i}<1$ is required, such that the obtained $T$ is consistent with its physical meaning.

Then, we continue to show that, given $\{p_{iA}\}_{i\in\mathcal{I}}$ feasible for Problem (TPA-P), $\{\theta_{iA}\}_{i\in\mathcal{I}}$ given by (\ref{eq_sinr}) always yields $\sum_{i\in\mathcal{I}} \frac{\theta_i}{1+\theta_i}<1$. The details are as follows.
\underline{\textit{First}}, let us consider a profile $\{p_{iA}\}_{i\in\mathcal{I}}$ feasible for Problem (TPA-P). In particular, in this $\{p_{iA}\}_{i\in\mathcal{I}}$, there exists at least two different MUs allocating positive transmit-powers to the AP. We thus can obtain $\{\theta_{i}\}_{i\in\mathcal{I}}$ based on (\ref{eq_sinr}). Such $\{\theta_{i}\}_{i\in\mathcal{I}}$ always yields $\sum_{i\in\mathcal{I}} \frac{\theta_i}{1+\theta_i}<1$. Otherwise, some MU's transmit-power must be negative according to the reversed mapping (\ref{eq_piA_thetaiA}). \underline{\textit{Second}}, let us consider another $\{p_{iA}\}_{i\in\mathcal{I}}$ feasible for Problem (TPA-P), in which only one MU (e.g., MU $i$) allocating positive transmit-power to the AP. Then, the resulting $\theta_i$ (based on (\ref{eq_sinr})) can always ensure $\sum_{i\in\mathcal{I}} \frac{\theta_i}{1+\theta_i}<1$. In summary, any $\{p_{iA}\}_{i\in\mathcal{I}}$ feasible for Problem (TPA-P) will yield $\{\theta_{i}\}_{i\in\mathcal{I}}$ (based on (\ref{eq_sinr})) which can ensure
$\sum_{i\in\mathcal{I}} \frac{\theta_i}{1+\theta_i}<1$.

\section*{Appendix II: Proof of Proposition \ref{proposition_equivalent_M0}}
\label{proof_propos3}
By using $\rho_i$ to replace $\theta_i$, Problem (SINR-P) can be transformed into the following equivalent form
\begin{eqnarray}
\text{(P1): }  && \text{Maximize} \sum_{i\in\mathcal{I}}(\pi_B-\pi_A)W \log_2\left(\frac{1}{1-\rho_i}\right) \nonumber\\
&& \text{Subject to: } 0\leq \frac{\rho_i}{1-\rho_i} \leq 2^{\frac{R_i^{\text{req}}}{W}}-1,\forall i\in\mathcal{I},\label{P4_Con_throughput_Mform}\\
&& \text{~~~~~~~~~~~~}  \frac{n_A}{g_{iA}}\rho_i\frac{1}{1-\sum_{i\in\mathcal{I}}\rho_i}\leq P_{iA}^{\max},\forall i\in\mathcal{I}, \label{P4_Con_PiAmax_Mform}\\
&& \text{~~~~~~~~~~~~}  \frac{n_B}{g_{iB}}2^{\frac{R_i^{\text{req}}}{B}} (1-\rho_i)^{\frac{W}{B}} \leq P_{iB}^{\max} + \frac{n_B}{g_{iB}},\forall i\in\mathcal{I}, \label{P4_Con_PiBmax_Mform} \\
&& \text{~~~~~~~~~~~~}  \frac{n_A}{g_{iA}}\rho_i\frac{1}{1-\sum_{i\in\mathcal{I}}\rho_i}
+\frac{n_B}{g_{iB}}2^{\frac{R_i^{\text{req}}}{B}} (1-\rho_i)^{\frac{W}{B}} \leq P_i^{\max} +\frac{n_B}{g_{iB}},\forall i\in\mathcal{I}, \label{P4_Con_Pimax_Mform} \\
&& \text{~~~~~~~~~~~~} \sum_{i\in\mathcal{I}}\rho_i < 1, \label{P4_Con_SumM_Mform}  \\
&& \text{Variables: } \rho_i, \forall i\in\mathcal{I}.  \nonumber
\end{eqnarray}
By using $\rho_0$ that ensures $\rho_0+\sum_{i\in\mathcal{I}}\rho_i=1$, we can obtain the form of Problem (SINR-M-P) exactly.

We next prove Proposition \ref{proposition_equivalent_M0}. In particular, let $\rho_0^\ast$ and $\{\rho_i^\ast\}_{i\in\mathcal{I}}$ together denote an optimal solution of Problem (SINR-M-P). To prove Proposition \ref{proposition_equivalent_M0}, it suffices to show that $\{\rho_i^\ast\}_{i\in\mathcal{I}}$ corresponds to an optimal solution of Problem (P1) (notice that since $\rho_i$ and $\theta_i$ form a one-to-one mapping given in (\ref{eq_M_theta}), the equivalence between Problem (SINR-P) and Problem (P1) directly follows). We thus prove that $\{\rho_i^\ast\}_{i\in\mathcal{I}}$ corresponds to an optimal solution of Problem (P1). Our key idea is to illustrate that the feasible region of Problem (P1) and that of Problem (SINR-M-P) form a one-to-one mapping exactly, if Problem (SINR-P) (i.e., Problem (P1)) is feasible. The details are illustrated by the following three points.
\begin{itemize}
\item \textit{First}, Problem (SINR-P) and Problem (P1) are feasible, since they are based on the series of equivalent transformations from Problem (CMP). The feasibility of Problem (P1) enables us to introduce a $\rho_0>0$ such that $\rho_0+\sum_{i\in\mathcal{I}} {\rho_i}= 1$.
\item \textit{Second}, let us denote the feasible region of Problem (P1) by $\mathcal{R}$, i.e., the intersection of (\ref{P4_Con_throughput_Mform}), (\ref{P4_Con_PiAmax_Mform}), (\ref{P4_Con_PiBmax_Mform}), (\ref{P4_Con_Pimax_Mform}), and (\ref{P4_Con_SumM_Mform}). $\mathcal{R}$ is nonempty. For each point $\{\rho_i\}_{i\in\mathcal{I}}\in\mathcal{R}$, let us set $\rho_0=1-\sum_{i\in\mathcal{I}}\rho_i$. Then, point $(\rho_0,\{\rho_i\}_{i\in\mathcal{I}})$ is compatible with (\ref{P5_Con_throughput}), (\ref{P5_Con_PiAmax}), (\ref{P5_Con_PiBmax}), (\ref{P5_Con_Pimax}), and (\ref{P5_Con_M}). In other words, Problem (SINR-M-P) is feasible under such a given point $(\rho_0,\{\rho_i\}_{i\in\mathcal{I}})$ with $\{\rho_i\}_{i\in\mathcal{I}}\in\mathcal{R}$. Thus, Problem (SINR-M-P) is feasible.
\item \textit{Third}, let us define $\mathcal{R}'$ as the feasible set of Problem (SINR-M-P) and consider any point $(\rho_0,\{\rho_i\}_{i\in\mathcal{I}})\in\mathcal{R}'$. Then, based on (\ref{P5_Con_M}), such a $\{\rho_i\}_{i\in\mathcal{I}}$ is compatible with (\ref{P4_Con_throughput_Mform}), (\ref{P4_Con_PiAmax_Mform}), (\ref{P4_Con_PiBmax_Mform}), (\ref{P4_Con_Pimax_Mform}), and (\ref{P4_Con_SumM_Mform}), i.e., $\{\rho_i\}_{i\in\mathcal{I}}$ belongs to $\mathcal{R}$.
\end{itemize}

Summarizing the above three points, there exists a one-to-one mapping between each point in the feasible region $\mathcal{R}'$ of Problem (SINR-M-P) and each point in the feasible region $\mathcal{R}$ of Problem (P1). Moreover, the objective function of Problem (SINR-M-P) is equal to that of Problem (P1) under (\ref{P5_Con_M}). Hence, if $(\rho_0^\ast,\{\rho_i^\ast\}_{i\in\mathcal{I}})$ is an optimal solution of Problem (SINR-M-P), $\{\rho_i^\ast\}_{i\in\mathcal{I}}$ ensures to maximize the objective function of Problem (P1) while being feasible.

\section*{Appendix III: Proof of Proposition \ref{proposition_P5R}}
\label{proof_propos4}
The proof is based on showing contradiction. Suppose that $\sum_{i\in\mathcal{I}} \rho_{i,(\rho_0)}^{\ast,\text{sub}}> 1-\rho_0$. Then, we can select an MU $i$ and slightly reduce $\rho_{i,(\rho_0)}^{\ast,\text{sub}}$. Consequently, the objective function can be reduced without violating any constraint (since (\ref{P5_Con_throughput}), (\ref{P5_Con_PiAmax}), (\ref{P5_Con_PiBmax}), and (\ref{P5_Con_Pimax}) in $\mathcal{G}_{(\rho_0)}$ are all increasing in $\rho_i$). As a result, the constraint in $\mathcal{H}_{(\rho_0)}$ should be always binding when achieving the optimum, i.e., being consistent with (\ref{P5_Con_M}).

\section*{Appendix IV: Feasibility Check of Problem (SINR-M-SubP)}
\label{feasibility_check}
We present a scheme for checking the feasibility of Problem (SINR-M-SubP). Specifically, under a given $\rho_0$, we first consider the following optimization problem.
\begin{eqnarray}
\text{(P2): } L_{(\rho_0)}=\text{Maximize } \rho_0+\sum_{i\in\mathcal{I}}\rho_i, \text{ Subject to: constraints } (\ref{P5_Con_throughput})-(\ref{P5_Con_Pimax}), \text{ Variables: } \rho_i,\forall i\in\mathcal{I}. \nonumber
\end{eqnarray}
In Problem (P2), $\rho_0$ is given in (\ref{P5_Con_throughput}), (\ref{P5_Con_PiAmax}), and (\ref{P5_Con_PiBmax}). If $L_{(\rho_0)}\geq 1$, then the intersection of (\ref{P5_Con_throughput}), (\ref{P5_Con_PiAmax}), (\ref{P5_Con_PiBmax}), (\ref{P5_Con_Pimax}), and (\ref{P5_Con_M}) will be non-empty, meaning that Problem (SINR-M-SubP) is feasible. Otherwise (i.e., $L_{(\rho_0)}<1$), Problem (SINR-M-SubP) is infeasible. To obtain $L_{(\rho_0)}$, we need to solve Problem (P2), which corresponds to a monotonic optimization problem as follows:
\begin{eqnarray}
\text{(P2-E): }  L_{(\rho_0)}=\text{Maximize } \rho_0+\sum_{i\in\mathcal{I}}\rho_i,
\text{ Subject to: } \{\rho_i\}_{i\in\mathcal{I}}\in\mathcal{G}_{(\rho_0)} \cap \mathcal{H}'_{(\rho_0)}, \text{ Variables: } \rho_i,\forall i\in\mathcal{I}. \nonumber
\end{eqnarray}
The normal set $\mathcal{G}_{(\rho_0)}$ is given in (\ref{eq_normalset}), and the reversed normal set $\mathcal{H}'_{(\rho_0)}$ is given by $\mathcal{H}'_{(\rho_0)}=\big\{\{\rho_i\}_{i\in\mathcal{I}}| 0\leq  \rho_i\leq  1 - \rho_0,\forall i\in\mathcal{I}\big\}$. Based on the monotonicity of Problem (P2-E), we can also design a poly-block approximation based algorithm (which is similar to Algorithm (Cen-Sub) in Sec. \ref{subsection_subproblem}) to solve Problem (P2-E) and obtain $L_{(\rho_0)}$. Due to space limitation, we skip the details of this algorithm. In summary, a sketch (i.e., Algorithm (Cen-Sub-FC)) to check the feasibility of Problem (SINR-M-SubP) is provided.
\begin{algorithm}
\textbf{Algorithm (Cen-Sub-FC): A Sketch to Check the Feasibility of Problem (SINR-M-SubP)}

\vspace{-0.1in}

\hrulefill

\begin{algorithmic}[1]
\STATE \textbf{Input:} the currently given $\rho_0$.
\STATE Solve Problem (P2-E) with the polyblock-approximation algorithm and obtain the value of $L_{(\rho_0)}$.
\IF{$L_{(\rho_0)}\geq 1$}
\STATE Output: Problem (SINR-M-SubP) is feasible.
\ELSE
\STATE Output: Problem (SINR-M-SubP) is infeasible.
\ENDIF
\end{algorithmic}
\end{algorithm}

\section*{Appendix V: Proof of Lemma \ref{lemma_diffcases}}
\label{proof_lemma2}
We use MU $i$ as an example in this prove. $J_i(\rho_i)$ is convex under Assumption \ref{assumption_bandwidth}, i.e.,
$J_i'(\rho_i)=\frac{n_A}{g_{iA}}\frac{1}{\rho_0}-\frac{n_B}{g_{iB}}2^{\frac{R_i^{\text{req}}}{B}}\frac{W}{B}(1-\rho_i)^{\frac{W}{B}-1}$
is increasing. Thus, if $J_i'(\rho_i)_{|\rho_i=0}> 0$ (which yields Condition (C3)), $J_i(\rho_i)$ is increasing. Otherwise (i.e., Condition (C3) does not hold), there exists a unique $\chi_{i,(\rho_0)}\in(0,1]$ that yields $J_i'(\rho_i)_{|\rho_i=\chi_{i,(\rho_0)}}=0$. By solving $J_i'(\rho_i)=0$, we obtain $\chi_{i,(\rho_0)}$ given in (\ref{value_Z}). Thus, $J_i'(\rho_i)<0$ (i.e., $J_i(\rho_i)$ is decreasing) for $0<\rho_i<\chi_{i,(\rho_0)}$ and $J_i'(\rho_i)>0$ (i.e., $J_i(\rho_i)$ is increasing) for $\chi_{i,(\rho_0)}<\rho_i\leq 1$.

\section*{Appendix VI: Proof of Proposition \ref{proposition_diffcases}}
\label{proof_lemma3}
We first prove the results \underline{when Condition (C3) holds}. Since $J_i(\rho_i)$ is increasing when Condition (C3) holds (Lemma \ref{lemma_diffcases}), there exits \textit{three different subcases} as follows:

\noindent \underline{Subcase i) (as shown in Fig. \ref{Fig.sub1.1})} If $J_i(0)>P_i^{\max}+\frac{n_B}{g_{iB}}$, there exists no feasible $\rho_i\in(0,1]$ such that (\ref{P5_E_Con_Pimax}) can hold. As a result, we set $\left(\underline{\mu}_{i,(\rho_0)},\overline{\mu}_{i,(\rho_0)}\right)=(1,0)$ (i.e., the first case) such that the feasible interval of $\rho_i$ is empty.

\noindent \underline{Subcase ii) (as shown in Fig. \ref{Fig.sub1.2})} If $J_i(0)\leq P_i^{\max}+\frac{n_B}{g_{iB}}\leq J_i(1)$, then there exists a unique point $v\in(0,1]$ such that $J_i(v)=P_i^{\max}+\frac{n_B}{g_{iB}}$, and $J_i(\rho_i)\leq P_i^{\max}+\frac{n_B}{g_{iB}}$ when $\rho_i\in(0,v]$, which yields the second result.

\noindent \underline{Subcase iii) (as shown in Fig. \ref{Fig.sub1.3})} If $J_i(1)< P_i^{\max}+\frac{n_B}{g_{iB}}$, then (\ref{P5_E_Con_Pimax}) holds for $\rho_i\in(0,1]$, which yields the third case.

\begin{figure}[tbph]
\centering
\subfigure[Subcase i]{
\label{Fig.sub1.1}
\includegraphics[width=4.5cm,height=4.5cm]{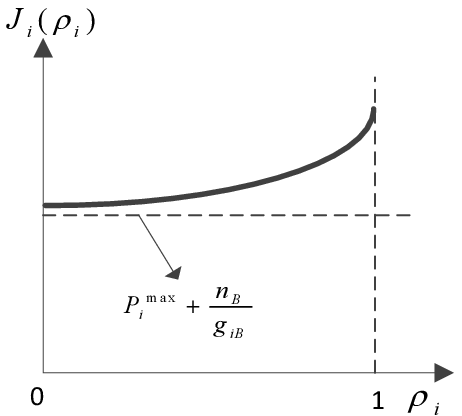}}
\subfigure[Subcase ii]{
\label{Fig.sub1.2}
\includegraphics[width=4.5cm,height=4.5cm]{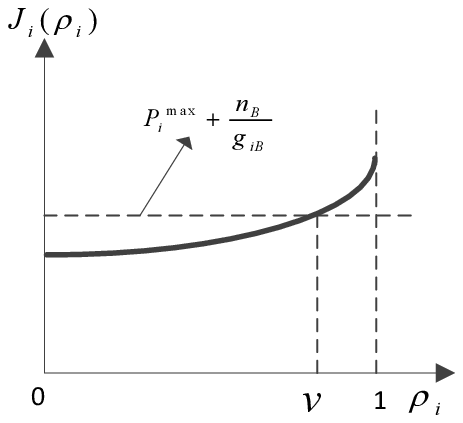}}
\subfigure[Subcase iii]{
\label{Fig.sub1.3}
\includegraphics[width=4.5cm,height=4.5cm]{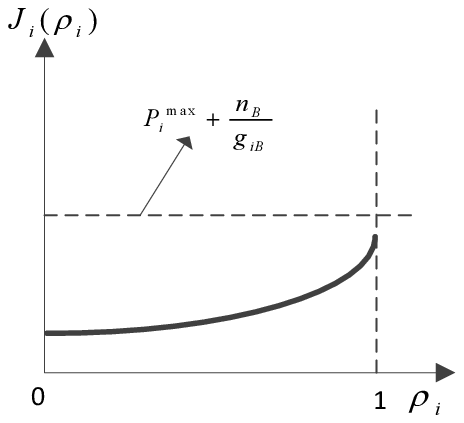}}
\caption{Illustration of three subcases when Condition (C3) holds.}
\label{Figure_Subcases1}
\end{figure}

We next prove the results \underline{when Condition (C3) does not hold}. According to Lemma \ref{lemma_diffcases}, $J_i(\rho_i)$ decreases when $\rho_i\in(0,\chi_{i,(\rho_0)}]$ and increases when $\rho_i\in[\chi_{i,(\rho_0)},1]$, with the minimum value given by $J_i(\chi_{i,(\rho_0)})$. Based on this property, there are \textit{five different subcases} as follows:

\noindent \underline{Subcase i) (as shown in Fig. \ref{Fig.sub2.1})} If $J_i(\chi_{i,(\rho_0)})>P_i^{\max}+\frac{n_B}{g_{iB}}$, there exists no feasible $\rho_i\in(0,1]$ such that (\ref{P5_E_Con_Pimax}) can hold. As a result, we set $\left(\underline{\mu}_{i,(\rho_0)},\overline{\mu}_{i,(\rho_0)}\right)=(1,0)$ (i.e., the first case) such that the feasible interval of $\rho_i$ is empty.

\noindent \underline{Subcase ii) (as shown in Fig. \ref{Fig.sub2.2})} If $J_i(\chi_{i,(\rho_0)})\leq P_i^{\max}+\frac{n_B}{g_{iB}}$, $J_i(0)\leq P_i^{\max}+\frac{n_B}{g_{iB}}$, and $J_i(1)\leq P_i^{\max}+\frac{n_B}{g_{iB}}$, then constraint (\ref{P5_E_Con_Pimax}) holds for $\rho_i\in(0,1]$, which consequently yields the second case.

\noindent \underline{Subcase iii) (as shown in Fig. \ref{Fig.sub2.3})} If $J_i(\chi_{i,(\rho_0)})\leq P_i^{\max}+\frac{n_B}{g_{iB}}$, $J_i(0)\leq P_i^{\max}+\frac{n_B}{g_{iB}}$, and $J_i(1)>P_i^{\max}+\frac{n_B}{g_{iB}}$, then there exists a special point $v\in[\chi_{i,(\rho_0)},1]$ such that $J_i(v)=P_i^{\max}+\frac{n_B}{g_{iB}}$, and (\ref{P5_E_Con_Pimax}) holds for $\rho_i\in(0,v]$. As a result, we obtain the third case.

\noindent \underline{Subcase iv) (as shown in Fig. \ref{Fig.sub2.4})} If $J_i(\chi_{i,(\rho_0)})\leq P_i^{\max}+\frac{n_B}{g_{iB}}$, $J_i(0)>P_i^{\max}+\frac{n_B}{g_{iB}}$, and $J_i(1)\leq P_i^{\max}+\frac{n_B}{g_{iB}}$, then there exists a special point $v\in(0,\chi_{i,(\rho_0)}]$ such that $J_i(v)=P_i^{\max}+\frac{n_B}{g_{iB}}$ and constraint (\ref{P5_E_Con_Pimax}) holds for $\rho_i\in[v,1]$. As a result, we obtain the fourth case.

\noindent \underline{Subcase v) (as shown in Fig. \ref{Fig.sub2.5})} If $J_i(\chi_{i,(\rho_0)})\leq P_i^{\max}+\frac{n_B}{g_{iB}}$, $J_i(0)>P_i^{\max}+\frac{n_B}{g_{iB}}$, and $J_i(1)>P_i^{\max}+\frac{n_B}{g_{iB}}$, then there exists two special points: $v_1\in(0,\chi_{i,(\rho_0)}]$ such that $J_i(v_1)=P_i^{\max}+\frac{n_B}{g_{iB}}$ and $v_2\in(\chi_{i,(\rho_0)},1]$ such that $J_i(v_2)=P_i^{\max}+\frac{n_B}{g_{iB}}$. Consequently, constraint (\ref{P5_E_Con_Pimax}) holds for $\rho_i\in[v_1,v_2]$, which yields the last case.

\begin{figure}[tbph]
\centering
\subfigure[Subcase i]{
\label{Fig.sub2.1}
\includegraphics[width=4.5cm,height=4.5cm]{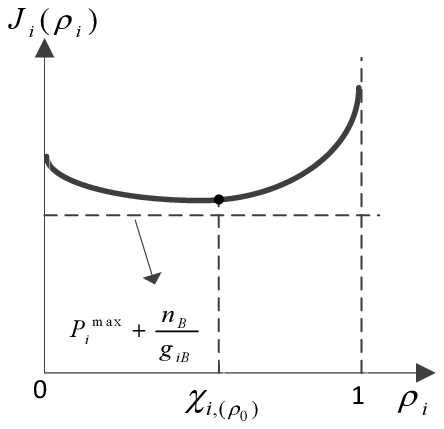}}
\subfigure[Subcase ii]{
\label{Fig.sub2.2}
\includegraphics[width=4.5cm,height=4.5cm]{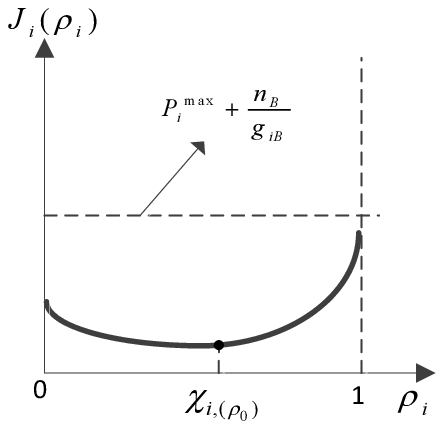}}
\subfigure[Subcase iii]{
\label{Fig.sub2.3}
\includegraphics[width=4.5cm,height=4.5cm]{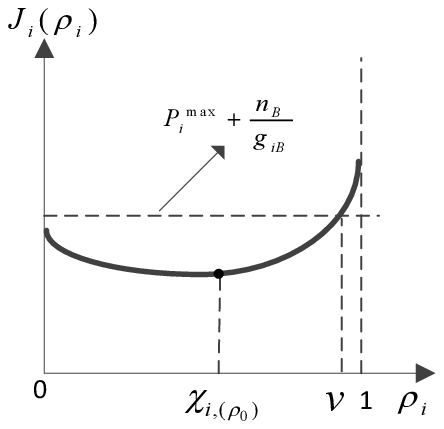}}
\subfigure[Subcase iv]{
\label{Fig.sub2.4}
\includegraphics[width=4.5cm,height=4.5cm]{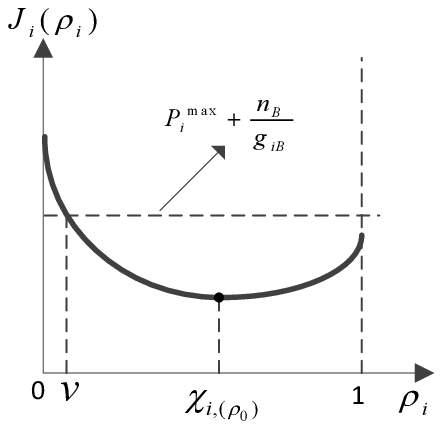}}
\subfigure[Subcase v]{
\label{Fig.sub2.5}
\includegraphics[width=4.5cm,height=4.5cm]{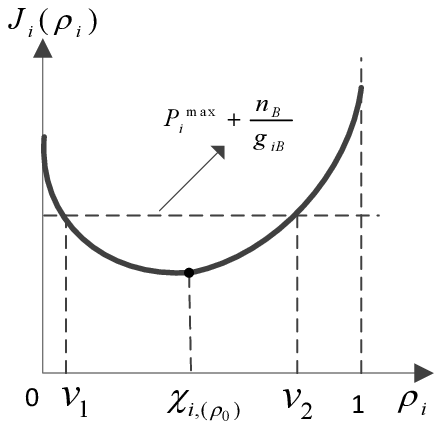}}
\caption{Illustration of five subcases when Condition (C3) does not hold.}
\label{Figure_Subcases2}
\end{figure}

\end{document}